\documentclass[11pt]{amsart}

\usepackage{amsmath,amssymb,amsfonts,amsthm}
\usepackage{mathtools}
\usepackage{geometry}
\theoremstyle{plain}
\newtheorem{theorem}{Theorem}[section]
\newtheorem{proposition}[theorem]{Proposition}

\newtheorem{corollary}[theorem]{Corollary}

\theoremstyle{definition}
\newtheorem{definition}[theorem]{Definition}

\theoremstyle{remark}
\newtheorem{remark}[theorem]{Remark}
\theoremstyle{remark}
\newtheorem{example}[theorem]{Example}

\begin{document}

\title{Effective dynamics and defect expansions for polynomial PDEs on thin annuli}
\author{Jean-Pierre Magnot}
\address{$^*$  Univ. Angers, CNRS, LAREMA, SFR MATHSTIC, F-49000 Angers, France ;\\ 
Lepage Research Institute, 17 novembra 1, 081 16 Presov, Slovakia;
	\\ and \\  Lyc\'ee Jeanne d'Arc, \\ Avenue de Grande Bretagne, \\ 63000 Clermont-Ferrand, France}
    \email{jean-pierr.magnot@ac-clermont.fr}
\date{}

\maketitle

\begin{abstract}
We develop a geometric and analytic framework for polynomial partial differential equations posed on thin annuli in the plane.
Using renormalized Sobolev inner products, we construct Sobolev orthogonal polynomial bases adapted to the thin geometry and use them to define stable Galerkin approximations.

We prove a general dimension-reduction theorem for polynomial Hamiltonian and dissipative PDEs, showing that solutions converge to effective one-dimensional dynamics on the limiting circle.
Beyond the leading-order limit, we identify transverse defect correctors and derive cell problems describing anisotropic dispersive and homogenized effects.

Our framework applies uniformly to integrable models (KdV, modified KdV, nonlinear Schr\"odinger, sine--Gordon), anisotropic dispersive systems such as Zakharov--Kuznetsov, and non-integrable perturbations including dissipation, forcing, and rapidly oscillating coefficients.
We establish stability of the effective dynamics under changes of Sobolev order and of polynomial Hilbert geometry, and show robustness of the associated Galerkin schemes.

The results provide a unified geometric perspective on dimension reduction, homogenization, and integrability in thin geometries, and introduce Sobolev orthogonal polynomial methods as a constructive tool for multiscale PDE analysis.
\end{abstract}

\noindent\textbf{MSC (2020):} 35B27, 35Q53, 35Q55, 35B45, 41A10, 42C05.

\noindent\textbf{Keywords:} thin domains; dimension reduction; homogenization; Sobolev orthogonal polynomials; integrable systems; Zakharov--Kuznetsov equation; Galerkin methods.

\tableofcontents

\section{Introduction}

Thin geometric structures play a central role in analysis, mathematical physics, and applied mathematics.
They arise in waveguides, fluid films, elastic shells, quantum graphs, and many other contexts where one spatial dimension becomes small compared to the others.
A fundamental question is to understand how partial differential equations posed on thin domains reduce to effective lower-dimensional dynamics.

Classical approaches to dimension reduction rely on variational methods, $\Gamma$-convergence, or spectral convergence techniques.
While these methods provide powerful compactness and convergence results, they often offer limited constructive insight into the structure of the limiting dynamics and into the behavior of approximation schemes.

In this work, we propose a constructive and geometric approach to dimension reduction for polynomial partial differential equations on thin annuli in the plane.
Our starting point is the observation that thin annuli naturally induce anisotropic Sobolev geometries, in which transverse derivatives scale differently from tangential ones.
This leads to renormalized Sobolev inner products and to families of Sobolev orthogonal polynomials adapted to the thin geometry.

Sobolev orthogonal polynomials have been extensively studied in approximation theory and spectral methods, but their interaction with singular geometric limits has received little attention.
Here we use them as a structural tool to analyze polynomial Hamiltonian and dissipative PDEs, and to build Galerkin approximations compatible with the thin-limit scaling.

The first main result of the paper is a general dimension-reduction theorem for polynomial PDEs on thin annuli.
Under natural coercivity and structural assumptions, we prove that solutions converge to effective one-dimensional dynamics on the limiting circle.
We further identify transverse defect correctors and derive cell problems governing higher-order anisotropic and homogenized effects.

A second contribution is the systematic study of stability with respect to the underlying polynomial Hilbert geometry.
We show that, at fixed polynomial degree, Sobolev orthogonal polynomial bases depend continuously on the Sobolev order, and that Galerkin schemes are robust under such geometric deformations.
This provides a geometric interpretation of approximation stability and suggests homotopy principles between different polynomial Hilbert structures.

We illustrate the abstract theory on a broad class of models.
Purely tangential integrable equations such as Korteweg--de Vries, modified KdV, nonlinear Schr\"odinger, and sine--Gordon reduce exactly to their one-dimensional counterparts.
Anisotropic dispersive systems, exemplified by the Zakharov--Kuznetsov equation, become asymptotically integrable in the thin limit, with computable transverse correctors.
We also treat dissipative, forced, and weakly non-integrable perturbations, and combine dimension reduction with periodic homogenization in the angular variable.

The results provide a unified framework linking thin geometry, polynomial Hamiltonian structures, homogenization, and integrability.
Beyond their analytical interest, the Sobolev orthogonal polynomial approach offers a constructive basis for numerical schemes and for the systematic study of multiscale PDEs on singular geometries.

\medskip
\noindent\textbf{Structure of the paper.}
Section~\ref{sec:geometry} introduces the thin annulus geometry and renormalized Sobolev norms.
Section~\ref{sec:operators} defines polynomial differential operators and Galerkin schemes based on Sobolev orthogonal polynomials.
Section~\ref{sec:main-theorem} states the main dimension-reduction and defect theorems.
Sections~\ref{sec:integrable}--\ref{sec:homogenization} develop applications to integrable, non-integrable, and homogenized models.
Detailed proofs are gathered in Section~\ref{sec:proofs}, and explicit computations are presented in Appendix~\ref{app:explicit}.

\section{Geometry and scaling of thin annuli}
\label{sec:geometry}

\subsection{Thin annuli and rescaled coordinates}

Let $\varepsilon>0$ and consider the thin annulus
\[
A_\varepsilon
=
\bigl\{(r,\theta)\in\mathbb R^2 :
1-\varepsilon < r < 1+\varepsilon,\;
\theta\in\mathbb S^1
\bigr\}.
\]
We introduce the rescaled transverse variable
\[
\rho = \frac{r-1}{\varepsilon}\in(-1,1),
\qquad
r = 1+\varepsilon\rho.
\]
In these coordinates, the Lebesgue measure reads
\[
r\,dr\,d\theta
=
\varepsilon(1+\varepsilon\rho)\,d\rho\,d\theta,
\]
and differentiation with respect to $r$ becomes
\[
\partial_r = \varepsilon^{-1}\partial_\rho.
\]

This change of variables makes explicit the singular character of the limit
$\varepsilon\to0$, in which the annulus collapses onto the unit circle
$\mathbb S^1$ while transverse derivatives are magnified by powers of
$\varepsilon^{-1}$.

\subsection{Renormalized Sobolev norms}

Let $m\in\mathbb N$.
For a smooth function $u$ on $A_\varepsilon$, we define the renormalized Sobolev
seminorm
\[
\|u\|_{\varepsilon,m}^2
=
\sum_{a+b\le m}
\varepsilon^{2a-(2m-1)}
\int_{A_\varepsilon}
\bigl|\partial_r^a\partial_\theta^b u\bigr|^2
\,r\,dr\,d\theta.
\]
In the rescaled variables $(\rho,\theta)$, this norm is equivalent to
\[
\|u\|_{\varepsilon,m}^2
\sim
\sum_{a+b\le m}
\int_{-1}^1\!\!\int_{\mathbb S^1}
\bigl|\partial_\rho^a\partial_\theta^b u\bigr|^2
(1+\varepsilon\rho)\,d\rho\,d\theta,
\]
uniformly for $\varepsilon$ small.

\begin{remark}
The specific power of $\varepsilon$ in the definition is chosen so that
all derivatives up to total order $m$ contribute at the same scale.
This renormalization is consistent with the Sobolev inner products used
in the construction of orthogonal polynomials on thin annuli and reflects
the energetic balance between tangential and transverse variations.
\end{remark}

\subsection{Fourier decomposition and radial rigidity}

For $u\in L^2(A_\varepsilon)$, we write the Fourier series in the angular
variable:
\[
u(\rho,\theta)
=
\sum_{k\in\mathbb Z} u_k(\rho)\,e^{ik\theta},
\qquad
u_k(\rho)
=
\frac1{2\pi}\int_{\mathbb S^1} u(\rho,\theta)e^{-ik\theta}\,d\theta.
\]
In these variables,
\[
\partial_\theta \leftrightarrow ik,
\qquad
\partial_r \leftrightarrow \varepsilon^{-1}\partial_\rho,
\]
and polynomial differential operators in $(\partial_r,\partial_\theta)$
act diagonally on the Fourier modes, up to coupling introduced by nonlinear
terms.

A fundamental consequence of the renormalized estimates is the following
radial rigidity principle.

\begin{proposition}[Radial rigidity]
\label{prop:radial-rigidity}
Let $(u_\varepsilon)$ be a sequence bounded in $\|\cdot\|_{\varepsilon,1}$.
Then, up to extraction of a subsequence,
\[
u_\varepsilon(\rho,\theta)
\longrightarrow
u_0(\theta)
\quad\text{strongly in }L^2_{\mathrm{loc}}((-1,1)\times\mathbb S^1),
\]
and
\[
\|\partial_\rho u_\varepsilon\|_{L^2(A_\varepsilon)} \longrightarrow 0.
\]
\end{proposition}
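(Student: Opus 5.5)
The plan is to unwind the renormalized norm in the rescaled variables $(\rho,\theta)$ and then argue by compactness. For $m=1$ the weights are $\varepsilon^{2a-1}$. With $r\,dr\,d\theta=\varepsilon(1+\varepsilon\rho)\,d\rho\,d\theta$ and $\partial_r=\varepsilon^{-1}\partial_\rho$, each term becomes:
\begin{itemize}
\item $(a,b)=(0,0)$: $\varepsilon^{-1}\int_{A_\varepsilon}|u|^2\,r\,dr\,d\theta=\int\!\!\int|u|^2(1+\varepsilon\rho)\,d\rho\,d\theta$;
\item $(a,b)=(0,1)$: the same expression with $\partial_\theta u$ in place of $u$;
\item $(a,b)=(1,0)$: $\varepsilon\int_{A_\varepsilon}|\partial_r u|^2\,r\,dr\,d\theta=\int\!\!\int|\partial_\rho u|^2(1+\varepsilon\rho)\,d\rho\,d\theta$.
\end{itemize}
Since $1+\varepsilon\rho\in[1/2,3/2]$ for $\varepsilon\le 1/2$, boundedness in $\|\cdot\|_{\varepsilon,1}$ is exactly a uniform bound on $u_\varepsilon$ in $H^1((-1,1)\times\mathbb S^1)$ for the flat measure $d\rho\,d\theta$. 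This is the displayed equivalence of Section~\ref{sec:geometry} with $m=1$.

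\textbf{Step 1 (compactness).} By Rellich--Kondrachov on the fixed Lipschitz cylinder $(-1,1)\times\mathbb S^1$, I will extract a subsequence with $u_\varepsilon\to u_0$ strongly in $L^2$, hence in $L^2_{\mathrm{loc}}$, and $u_\varepsilon\rightharpoonup u_0$ weakly in $H^1$. I will do this on the whole cylinder, not only locally. Along the same subsequence, $\partial_\rho u_\varepsilon\rightharpoonup\partial_\rho u_0$ weakly in $L^2(d\rho\,d\theta)$.

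\textbf{Step 2 (vanishing transverse term).} I will compute the physical-measure norm directly:
\[
\|\partial_\rho u_\varepsilon\|_{L^2(A_\varepsilon)}^2
=\varepsilon\int_{-1}^1\!\!\int_{\mathbb S^1}|\partial_\rho u_\varepsilon|^2(1+\varepsilon\rho)\,d\rho\,d\theta
\le C\varepsilon\longrightarrow0 .
\]
This gives the second assertion of the proposition with the explicit rate $O(\varepsilon^{1/2})$.

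\textbf{Step 3 ($\rho$-independence of the limit; main obstacle).} To show that $u_0$ depends only on $\theta$, I need $\partial_\rho u_0=0$. By weak lower semicontinuity it suffices to show $\partial_\rho u_\varepsilon\to0$ in $L^2(d\rho\,d\theta)$. Step 2 alone does not give this, because it carries the factor $\varepsilon$ from the physical measure.

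What I will try first is the scaling read off from the renormalization remark. The transverse contribution $\varepsilon\int_{A_\varepsilon}|\partial_r u|^2$ sits one power of $\varepsilon$ above the tangential ones, so I would convert the bound into the vanishing of $\varepsilon^{-1}\int_{A_\varepsilon}|\partial_\rho u_\varepsilon|^2\,r\,dr\,d\theta$. The missing power of $\varepsilon$ is exactly the delicate point. If the bookkeeping only returns the $O(1)$ flat bound of Step 1, the conclusion is weaker: the limit is an $H^1$ function of $(\rho,\theta)$, and the $\theta$-only limit is obtained after averaging in $\rho$. The statement $u_0=u_0(\theta)$ then has to be read as concerning the $\rho$-average $\frac12\int_{-1}^1u_0\,d\rho$, which is what the effective one-dimensional dynamics uses.

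In the direct route, the final step is this. From $\|\partial_\rho u_\varepsilon\|_{L^2(d\rho\,d\theta)}\to0$ and the weak convergence in Step 1, I get $\partial_\rho u_0=0$ in distributions on the connected interval $(-1,1)$. Hence $u_0(\rho,\theta)=u_0(\theta)$, and the strong $L^2$ convergence from Step 1 gives the first assertion.
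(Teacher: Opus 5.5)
There is a genuine gap, and it is the one you flagged yourself. Your ``direct route'' in Step 3 needs $\|\partial_\rho u_\varepsilon\|_{L^2(d\rho\,d\theta)}\to0$, and the hypothesis does not give it. Your own preliminary computation settles the point: the $(a,b)=(1,0)$ term of $\|u\|_{\varepsilon,1}^2$ is exactly $\int\!\!\int|\partial_\rho u|^2(1+\varepsilon\rho)\,d\rho\,d\theta$. There is therefore no missing power of $\varepsilon$ to recover, and the first assertion is in fact false. Take $u_\varepsilon(r,\theta)=(r-1)/\varepsilon$, i.e.\ $u_\varepsilon(\rho,\theta)=\rho$. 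Then
\[
\|u_\varepsilon\|_{\varepsilon,1}^2=2\pi\int_{-1}^1(\rho^2+1)(1+\varepsilon\rho)\,d\rho=\frac{16\pi}{3}
\]
for every $\varepsilon$, and $\|\partial_\rho u_\varepsilon\|_{L^2(A_\varepsilon)}^2=|A_\varepsilon|=4\pi\varepsilon\to0$. Yet $u_\varepsilon\equiv\rho$ has no subsequence converging in $L^2_{\mathrm{loc}}$ to a function of $\theta$ alone. So your Step 2 is correct but carries no rigidity information, and no argument can close Step 3. Your fallback reads $u_0(\theta)$ as the limit of the transverse averages $\bar u_\varepsilon$. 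These averages are indeed bounded in $H^1(\mathbb S^1)$ and hence precompact in $L^2(\mathbb S^1)$, but that is a different, weaker statement: here $u_\varepsilon-\bar u_\varepsilon=\rho$ does not tend to zero. Offering two alternative conclusions is not a proof of the proposition. The right conclusion from your computation was a counterexample.

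The paper's proof fails at the same place. In Step 1 of its compactness proof it derives $\int\!\!\int|\partial_\rho u_\varepsilon|^2\,d\rho\,d\theta\le C^2\varepsilon^{2m-2}$ and asserts that the right-hand side tends to $0$ ``since $m\ge1$''. That is false at $m=1$, which is the case of this proposition, so its Poincar\'e-in-$\rho$ step (the analogue of your lower-semicontinuity step) has nothing to work with. For $m\ge2$ the bound does vanish, but the $(0,0)$ term, weighted by $\varepsilon^{1-2m}$, likewise forces $\int\!\!\int|u_\varepsilon|^2\lesssim\varepsilon^{2m-2}$, so the limit is $u_0\equiv0$. In flat variables the renormalization multiplies every term by the same factor $\varepsilon^{2-2m}$ and never penalizes transverse derivatives relative to the others. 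A genuine rigidity statement requires such a penalty, for instance a uniform bound on
\[
\varepsilon^{-1}\int_{A_\varepsilon}\bigl(|u|^2+|\partial_\theta u|^2+|\partial_r u|^2\bigr)\,r\,dr\,d\theta
=\int\!\!\int\bigl(|u|^2+|\partial_\theta u|^2+\varepsilon^{-2}|\partial_\rho u|^2\bigr)(1+\varepsilon\rho)\,d\rho\,d\theta .
\]
Under that hypothesis $\|\partial_\rho u_\varepsilon\|_{L^2(d\rho\,d\theta)}^2=O(\varepsilon^2)$, and your Steps 1--3 go through as written.
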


\begin{remark}
Proposition~\ref{prop:radial-rigidity} expresses the fact that transverse
oscillations are energetically penalized in the thin limit.
As $\varepsilon\to0$, admissible configurations become asymptotically
independent of the radial variable, and the effective dynamics takes place
on the circle $\mathbb S^1$.
\end{remark}

\subsection{Effective one-dimensional limits}

Given a function $u_\varepsilon$ satisfying the above compactness property,
we define its transverse average
\[
\bar u_\varepsilon(\theta)
=
\frac{1}{2\varepsilon}
\int_{1-\varepsilon}^{1+\varepsilon} u_\varepsilon(r,\theta)\,dr
=
\frac12\int_{-1}^1 u_\varepsilon(\rho,\theta)\,d\rho.
\]
Then $\bar u_\varepsilon\to u_0$ strongly in $L^2(\mathbb S^1)$, and all
polynomial differential operators admit effective one-dimensional limits
acting on $u_0$.

This mechanism provides the geometric foundation for the effective equations
derived in the following sections.

\subsection{Sobolev orthogonal polynomials and thin limits}
\label{subsec:sobolev-op}

We briefly recall the construction of Sobolev orthogonal polynomials on thin
annuli introduced in \cite{MagnotThinAnnuli}, and summarize the properties that
will be used throughout the paper.

\subsubsection{Sobolev inner products on thin annuli}

Let $\mathcal P(A_\varepsilon)$ denote the space of polynomial functions on
$A_\varepsilon$ in the variables $(r,\theta)$.
For a fixed integer $m\ge1$, we consider the Sobolev inner product
\[
\langle p,q\rangle_{\varepsilon,m}
=
\sum_{a+b\le m}
\varepsilon^{2a-(2m-1)}
\int_{A_\varepsilon}
\partial_r^a\partial_\theta^b p\,
\overline{\partial_r^a\partial_\theta^b q}\,
r\,dr\,d\theta.
\]
This inner product is strictly positive on $\mathcal P(A_\varepsilon)$
and induces a Hilbertian norm equivalent to the renormalized Sobolev norm
$\|\cdot\|_{\varepsilon,m}$ defined above.

Applying the Gram--Schmidt procedure with respect to
$\langle\cdot,\cdot\rangle_{\varepsilon,m}$ yields an orthogonal basis
\[
\bigl\{P_{n,\varepsilon}\bigr\}_{n\ge0}
\subset \mathcal P(A_\varepsilon),
\]
ordered by total polynomial degree.

\subsubsection{Fourier block structure}

Due to the rotational invariance of the annulus and of the Sobolev inner product,
the orthogonalization decouples along Fourier modes.
More precisely, writing
\[
P_{n,\varepsilon}(\rho,\theta)
=
\sum_{k\in\mathbb Z} p_{n,k}^{(\varepsilon)}(\rho)\,e^{ik\theta},
\]
distinct angular frequencies are orthogonal, and the Gram--Schmidt process
reduces to independent orthogonalizations of radial polynomial families
for each fixed $k$.

As a consequence, the associated Sobolev Laplacian
\[
\Delta_{\varepsilon} = D_\varepsilon^{*}D_\varepsilon,
\qquad
D_\varepsilon = (\partial_r,\partial_\theta),
\]
is block diagonal with respect to the Fourier decomposition.

\subsubsection{Thin limit of orthogonal systems}

A central result of \cite{MagnotThinAnnuli} is that, for each fixed polynomial
degree $N$, the Sobolev inner products restricted to
$\mathcal P_{\le N}(A_\varepsilon)$ converge, as $\varepsilon\to0$, to an
effective Sobolev inner product on $\mathcal P_{\le N}(\mathbb S^1)$.
Equivalently, the corresponding Gram matrices converge.

As a consequence, for each fixed $N$,
\[
P_{n,\varepsilon}
\longrightarrow
P_{n,0}
\quad\text{in }L^2_{\mathrm{loc}}((-1,1)\times\mathbb S^1),
\qquad n\le N,
\]
where $\{P_{n,0}\}$ is an orthogonal family on $\mathbb S^1$ with respect to
a Sobolev inner product involving only tangential derivatives.

\begin{remark}
The convergence holds at the level of matrix coefficients and is therefore
strong in any finite-dimensional topology.
In particular, the orthogonalization procedure is stable under the thin limit
when restricted to fixed polynomial degree.
\end{remark}

\subsubsection{Role in effective dynamics}

The Sobolev orthogonal polynomials provide a constructive basis adapted to the
geometry of thin annuli.
They naturally separate rigid transverse modes from effective tangential modes
and yield a Galerkin framework compatible with the renormalized energy estimates.

In the following sections, these bases will be used to:
\begin{itemize}
\item define finite-dimensional approximations of polynomial PDEs,
\item control stability at fixed degree,
\item identify effective operators and defect terms in the limit
$\varepsilon\to0$.
\end{itemize}

%=================================================
\section{Polynomial differential operators and energy structures}
\label{sec:operators}

This section introduces the class of differential operators and energy
functionals considered throughout the paper.
They are designed to be general enough to encompass dispersive,
Hamiltonian, dissipative and homogenized models on thin annuli, while
remaining compatible with the renormalized Sobolev framework of
Section~\ref{sec:geometry}.

%-------------------------------------------------
\subsection{Polynomial differential operators}

Let $m\geq 1$ be an integer.
We consider linear differential operators of the form
\[
\mathcal L_\varepsilon
=
\sum_{a+b\le m}
c_{a,b}(\theta,\theta/\varepsilon^\alpha)\,
\partial_r^a\partial_\theta^b,
\]
where:
\begin{itemize}
\item the coefficients $c_{a,b}$ are bounded and measurable,
\item $\alpha\ge0$ allows for rapidly oscillating coefficients,
\item the dependence on $\theta/\varepsilon^\alpha$ is $2\pi$-periodic.
\end{itemize}

In the rescaled variables $(\rho,\theta)$,
\[
\mathcal L_\varepsilon
=
\sum_{a+b\le m}
\varepsilon^{-a}
c_{a,b}(\theta,\theta/\varepsilon^\alpha)\,
\partial_\rho^a\partial_\theta^b,
\]
making explicit the anisotropy between transverse and tangential
directions.

\begin{remark}
This class includes, in particular:
\begin{itemize}
\item purely tangential operators (effective 1D limits),
\item anisotropic dispersive operators such as those arising in
Zakharov--Kuznetsov or KP-type equations,
\item elliptic or parabolic operators with oscillatory coefficients.
\end{itemize}
\end{remark}

%-------------------------------------------------
\subsection{Hamiltonian and variational structures}

We now introduce the energy functionals generating the equations of
interest.
Let $F:\mathbb R^N\to\mathbb R$ be a smooth function with polynomial
growth, and define the Hamiltonian
\[
\mathcal H_\varepsilon(u)
=
\int_{A_\varepsilon}
F\!\left(
\bigl(\partial_r^a\partial_\theta^b u\bigr)_{a+b\le m},
\theta,\frac{\theta}{\varepsilon^\alpha}
\right)
\,r\,dr\,d\theta.
\]

We assume the following structural conditions.

\begin{definition}[Admissible Hamiltonians]
The functional $\mathcal H_\varepsilon$ is said to be admissible if:
\begin{itemize}
\item \textbf{Coercivity:}
there exists $C>0$ such that
\[
\mathcal H_\varepsilon(u)
\ge
C\|u\|_{\varepsilon,m}^2 - C,
\]
uniformly in $\varepsilon$;
\item \textbf{Polynomial growth:}
all derivatives of $F$ grow at most polynomially;
\item \textbf{Locality:}
$F$ depends only on derivatives of $u$ up to order $m$.
\end{itemize}
\end{definition}

The variational derivative $\delta\mathcal H_\varepsilon/\delta u$ is a
polynomial differential operator in $(\partial_r,\partial_\theta)$,
possibly with oscillatory coefficients.

%-------------------------------------------------
\subsection{Evolution equations and perturbations}

We consider evolution equations of the form
\begin{equation}
\label{eq:evolution}
\partial_t u_\varepsilon
=
J_\varepsilon\,\frac{\delta\mathcal H_\varepsilon}{\delta u}(u_\varepsilon)
+
\mathcal R_\varepsilon(u_\varepsilon),
\end{equation}
where:
\begin{itemize}
\item $J_\varepsilon$ is a skew-adjoint operator acting in the
$\theta$-variable (e.g.\ $\partial_\theta$ or multiplication by $i$);
\item $\mathcal R_\varepsilon$ is a lower-order perturbation, typically
of dissipative or forcing type.
\end{itemize}

\begin{example}
The following classical models fit into~\eqref{eq:evolution}:
\begin{itemize}
\item KdV and modified KdV ($J_\varepsilon=\partial_\theta$);
\item nonlinear Schr\"odinger equation ($J_\varepsilon=i$);
\item sine--Gordon equation (after rewriting as a first-order system);
\item Zakharov--Kuznetsov equation (anisotropic dispersive operator);
\item non-integrable perturbations with viscosity or external forcing.
\end{itemize}
\end{example}

\begin{remark}
The perturbation $\mathcal R_\varepsilon$ may break integrability or
Hamiltonian structure but is assumed to satisfy uniform Lipschitz or
monotonicity bounds compatible with the renormalized Sobolev estimates.
\end{remark}

%-------------------------------------------------
\subsection{Galerkin approximation using Sobolev orthogonal polynomials}

Let $\{\Phi_{n,k}^\varepsilon\}$ denote the Sobolev orthogonal polynomial
basis introduced in
Section~\ref{subsec:sobolev-op}, indexed by polynomial degree $n$ and
Fourier mode $k$.
For fixed truncation parameters $(N,K)$, define the finite-dimensional
space
\[
V_{\varepsilon}^{N,K}
=
\mathrm{span}\bigl\{
\Phi_{n,k}^\varepsilon : n\le N,\ |k|\le K
\bigr\}.
\]

Projecting~\eqref{eq:evolution} onto $V_{\varepsilon}^{N,K}$ yields a
finite system of ordinary differential equations for the coefficients of
$u_{\varepsilon}^{N,K}$.

\begin{proposition}[Finite-dimensional well-posedness]
For fixed $(N,K)$ and $\varepsilon>0$, the Galerkin system admits a unique
local solution.
Moreover, if $\mathcal R_\varepsilon$ is dissipative, the solution exists
globally in time.
\end{proposition}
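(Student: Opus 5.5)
The plan is to write the Galerkin system as an autonomous ODE on the finite-dimensional coefficient space and to apply the Cauchy--Lipschitz theorem. Local existence and uniqueness then follow. Global existence in the dissipative case will come from an a priori bound built from the Hamiltonian and the coercivity assumption.

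\textbf{Reduction to an ODE.} Write $u_\varepsilon^{N,K}(t)=\sum_{n\le N,|k|\le K} c_{n,k}(t)\Phi^\varepsilon_{n,k}$, and let $\Pi$ denote the $\langle\cdot,\cdot\rangle_{\varepsilon,m}$-orthogonal projection onto $V_\varepsilon^{N,K}$. The projected equation is $\dot c = G(c)$, with
\[
G(c)=M^{-1}\Bigl(\bigl\langle J_\varepsilon\tfrac{\delta\mathcal H_\varepsilon}{\delta u}(u)+\mathcal R_\varepsilon(u),\Phi^\varepsilon_{n,k}\bigr\rangle_{\varepsilon,m}\Bigr)_{n,k},
\]
where $M$ is the Gram matrix. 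Since the $\Phi^\varepsilon_{n,k}$ are orthogonal and the inner product is strictly positive on polynomials, $M$ is diagonal and invertible for each fixed $\varepsilon>0$. The basis elements are polynomials in $\rho$ times trigonometric polynomials in $\theta$, so all their derivatives are bounded on the compact closure of $A_\varepsilon$. $F$ is smooth with polynomially growing derivatives, and the coefficients $c_{a,b}$ are bounded and measurable. Hence each entry of the vector is an integral of a smooth function of finitely many coefficients against bounded weights. Differentiating under the integral, justified by the polynomial growth bounds on compact sets of $c$, shows that the Hamiltonian part of $G$ is $C^1$. For $\mathcal R_\varepsilon$ we use the assumed uniform Lipschitz bound. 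Restricted to the finite-dimensional space, where all norms are equivalent, it gives local Lipschitz continuity of the perturbation term. Cauchy--Lipschitz then yields a unique maximal solution on $[0,T^*)$. If $T^*<\infty$, then $|c(t)|\to\infty$.

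\textbf{Global existence.} Test the projected equation with $\Pi\frac{\delta\mathcal H_\varepsilon}{\delta u}(u)$, which is allowed because $\dot u\in V_\varepsilon^{N,K}$. This gives
\[
\frac{d}{dt}\mathcal H_\varepsilon(u)=\bigl\langle \Pi\tfrac{\delta\mathcal H}{\delta u},J_\varepsilon\Pi\tfrac{\delta\mathcal H}{\delta u}\bigr\rangle+\bigl\langle\tfrac{\delta\mathcal H}{\delta u},\mathcal R_\varepsilon(u)\bigr\rangle .
\]
The first term vanishes provided the compression $\Pi J_\varepsilon\Pi$ stays skew-adjoint. This holds because $\Pi$ is an orthogonal projection, $J_\varepsilon$ is skew, and $J_\varepsilon$ acts only in $\theta$, so it respects the Fourier block structure of Section~\ref{subsec:sobolev-op}. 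We interpret ``$\mathcal R_\varepsilon$ dissipative'' as $\langle\frac{\delta\mathcal H}{\delta u}(u),\mathcal R_\varepsilon(u)\rangle\le C(1+\mathcal H_\varepsilon(u))$. Gronwall then bounds $\mathcal H_\varepsilon(u(t))$ on bounded time intervals. Coercivity gives $\|u(t)\|_{\varepsilon,m}^2\le C^{-1}(\mathcal H_\varepsilon(u(t))+C)$, and norm equivalence on $V_\varepsilon^{N,K}$ bounds $|c(t)|$. So blow-up cannot occur and $T^*=\infty$.

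\textbf{Main obstacle.} The delicate step is the energy identity. One must ensure that the $L^2$ pairing in which $\delta\mathcal H/\delta u$ is the gradient matches the $\langle\cdot,\cdot\rangle_{\varepsilon,m}$ used in the projection, so that $\frac{d}{dt}\mathcal H_\varepsilon(u)$ really equals the tested expression. I would first try to use the Galerkin projection in the $L^2(r\,dr\,d\theta)$ pairing, for which this identity is immediate. That system has the same solutions up to an invertible change of Gram matrix, so the local theory is unchanged. I would then check that $J_\varepsilon$, acting on single Fourier modes, preserves $V_\varepsilon^{N,K}$ up to the projection and remains skew there. If the dissipativity hypothesis is only a monotonicity condition, I would instead derive an $L^2$ or $\|\cdot\|_{\varepsilon,m}$ energy inequality directly, which suffices in finite dimension.
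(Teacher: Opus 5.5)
The paper states this proposition without proof, so there is no argument to compare routes with. Judged on its own, your local step (Cauchy--Lipschitz plus the blow-up alternative) is fine, but the global step has a genuine gap.

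\textbf{The gap.} You set up the Galerkin system with the $\langle\cdot,\cdot\rangle_{\varepsilon,m}$-orthogonal projection $\Pi_{\varepsilon,m}$. The chain rule $\tfrac{d}{dt}\mathcal H_\varepsilon(u)=\langle g,\dot u\rangle$, with $g=\tfrac{\delta\mathcal H_\varepsilon}{\delta u}(u)$, is however an identity in $L^2(r\,dr\,d\theta)$. Since $\Pi_{\varepsilon,m}$ is not $L^2$-self-adjoint, the Hamiltonian contribution is $\langle g,\Pi_{\varepsilon,m}J_\varepsilon g\rangle_{L^2}$, not $\langle \Pi g,J_\varepsilon\Pi g\rangle$, and nothing makes it vanish.

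Your remedy in the last paragraph does not close this. The $L^2$-Galerkin and $\langle\cdot,\cdot\rangle_{\varepsilon,m}$-Galerkin systems do \emph{not} have the same solutions up to a change of Gram matrix. Their right-hand sides are different linear functionals of $J_\varepsilon g+\mathcal R_\varepsilon(u)$, which for $u\in V_\varepsilon^{N,K}$ generally lies outside $V_\varepsilon^{N,K}$ (e.g.\ $u\,\partial_\theta u$ doubles the radial degree). For $N\ge1$ the two projections really differ in the radial direction. For $m=1$ on the zero Fourier mode, at leading order in $\varepsilon$, $\rho^3$ projects onto $\mathrm{span}\{1,\rho\}$ as $\tfrac35\rho$ in $L^2(d\rho)$ but as $\tfrac{9}{10}\rho$ in the $H^1(d\rho)$ geometry that $\langle\cdot,\cdot\rangle_{\varepsilon,1}$ reduces to. So the bound you derive concerns a different scheme from the one you defined.

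Even the $L^2$ identity is ``immediate'' only in weak form. Polynomial trial functions satisfy no boundary condition at $r=1\pm\varepsilon$. Hence the strong Euler--Lagrange expression differs from the Fr\'echet derivative $D\mathcal H_\varepsilon(u)$ by boundary integrals as soon as $F$ contains $\partial_r$-derivatives.

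\textbf{The repair.} Commit from the start to the structure-preserving weak $L^2$ scheme.
Define $\nabla_V\mathcal H_\varepsilon(u)\in V_\varepsilon^{N,K}$ by $\mathrm{Re}\langle\nabla_V\mathcal H_\varepsilon(u),v\rangle_{L^2(r\,dr\,d\theta)}=D\mathcal H_\varepsilon(u)[v]$ for all $v\in V_\varepsilon^{N,K}$. Then set $\dot u=\Pi J_\varepsilon\nabla_V\mathcal H_\varepsilon(u)+\Pi\mathcal R_\varepsilon(u)$, with $\Pi$ the $L^2(r\,dr\,d\theta)$-projection.

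Your local argument applies verbatim. This weak form also avoids differentiating merely measurable coefficients, which your $\langle\cdot,\cdot\rangle_{\varepsilon,m}$ pairing of $J_\varepsilon g$ would require.

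Now the identity $\tfrac{d}{dt}\mathcal H_\varepsilon(u)=\mathrm{Re}\langle\nabla_V\mathcal H_\varepsilon,J_\varepsilon\nabla_V\mathcal H_\varepsilon\rangle+\mathrm{Re}\langle\nabla_V\mathcal H_\varepsilon,\mathcal R_\varepsilon(u)\rangle$ holds exactly. The first term vanishes because $J_\varepsilon$, acting only in $\theta$, is skew in $L^2(r\,dr\,d\theta)$. No Fourier invariance of $V_\varepsilon^{N,K}$ is needed, since the compression of a skew operator by an orthogonal projection is skew.

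The dissipativity hypothesis must then be imposed on this projected pairing: $\mathrm{Re}\langle\nabla_V\mathcal H_\varepsilon(u),\mathcal R_\varepsilon(u)\rangle\le C_1(\mathcal H_\varepsilon(u)+C)$ for $u\in V_\varepsilon^{N,K}$. Your condition on the full $g$ does not transfer unless $\mathcal R_\varepsilon$ maps $V_\varepsilon^{N,K}$ into itself. With this hypothesis, Gronwall, the coercivity in (H1) and norm equivalence finish exactly as you wrote.

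Finally, the ``monotonicity'' fallback is not an argument. A monotone $\mathcal R_\varepsilon$ controls the $L^2$ norm only if one also knows the projected Hamiltonian flow does not increase it. That is model-specific: it comes from rotation or gauge invariance for KdV or NLS, and fails in general.
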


\begin{remark}
The choice of Sobolev orthogonal polynomials is essential:
it aligns the Galerkin approximation with the energetic structure of the
problem and ensures stability under the thin limit at fixed degree.
\end{remark}

%-------------------------------------------------
\subsection{Uniform bounds and preparatory estimates}

Under the coercivity assumptions on $\mathcal H_\varepsilon$, solutions
of~\eqref{eq:evolution} satisfy uniform energy estimates of the form
\[
\sup_{t\in[0,T]}
\|u_\varepsilon(t)\|_{\varepsilon,m}
\le C(T),
\]
with constants independent of $\varepsilon$.

These estimates, combined with the radial rigidity principle of
Proposition~\ref{prop:radial-rigidity}, provide the compactness framework
required to pass to the limit $\varepsilon\to0$.

The identification of the effective equation and of the defect terms is
the subject of the next section.

\subsection{Change of Sobolev order and stability of orthogonal families}
\label{subsec:change-sobolev}

In this subsection we discuss the dependence of Sobolev orthogonal polynomial
families on the order of the Sobolev inner product, and the consequences for
Galerkin approximations.

\subsubsection{Sobolev inner products of variable order}

Let $s\ge0$ be a real parameter.
On $\mathcal P(A_\varepsilon)$ we consider the family of Sobolev inner products
\[
\langle p,q\rangle_{\varepsilon,s}
=
\sum_{a+b\le \lfloor s\rfloor}
\varepsilon^{2a-(2\lfloor s\rfloor-1)}
\int_{A_\varepsilon}
\partial_r^a\partial_\theta^b p\,
\overline{\partial_r^a\partial_\theta^b q}\,
r\,dr\,d\theta
\;+\;
\mathbf 1_{s\notin\mathbb N}\,\mathcal R_s(p,q),
\]
where $\mathcal R_s$ denotes a fractional Sobolev correction term when $s$ is
non-integer.

For each fixed $\varepsilon>0$ and $s\ge0$, the inner product is strictly
positive on $\mathcal P(A_\varepsilon)$ and defines a Hilbertian structure.

Applying Gram--Schmidt yields a family of orthogonal polynomial bases
\[
\bigl\{P_{n,\varepsilon}^{(s)}\bigr\}_{n\ge0},
\]
indexed by the Sobolev order $s$.

\subsubsection{Continuity with respect to the Sobolev order}

For each fixed polynomial degree $N$, the map
\[
s \longmapsto \langle\cdot,\cdot\rangle_{\varepsilon,s}
\big|_{\mathcal P_{\le N}(A_\varepsilon)}
\]
is continuous in operator norm.
Consequently, the associated Gram matrices depend continuously on $s$.

\begin{proposition}[Finite-degree stability under change of $s$]
\label{prop:sobolev-order-stability}
Fix $N\in\mathbb N$ and $\varepsilon>0$.
Then the family of orthogonal polynomials
$\{P_{n,\varepsilon}^{(s)}\}_{n\le N}$ depends continuously on $s$ in any
finite-dimensional norm.
In particular, for $s_1,s_2$ close,
\[
\|P_{n,\varepsilon}^{(s_1)}-P_{n,\varepsilon}^{(s_2)}\|
\le C_{N,\varepsilon}|s_1-s_2|,
\qquad n\le N.
\]
\end{proposition}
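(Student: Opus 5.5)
The plan is to reduce everything to finite-dimensional linear algebra on $\mathcal P_{\le N}(A_\varepsilon)$. Fix the monomial basis $\{m_j\}_{j\le N}$ of $\mathcal P_{\le N}(A_\varepsilon)$, ordered by total degree exactly as in the Gram--Schmidt procedure of Section~\ref{subsec:sobolev-op}. Let $G(s)=(\langle m_i,m_j\rangle_{\varepsilon,s})_{i,j\le N}$ be the Gram matrix. By the continuity statement preceding the proposition, $s\mapsto G(s)$ is continuous. For the Lipschitz bound we need slightly more, namely that $G$ is locally Lipschitz in $s$. This is where care is needed, because the integer part $\lfloor s\rfloor$ jumps. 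I would therefore interpret ``$s_1,s_2$ close'' as lying in a common compact subinterval on which the family is Lipschitz. Concretely, I would assume, as part of the construction of $\mathcal R_s$, that the fractional correction is chosen so that $s\mapsto\mathcal R_s(m_i,m_j)$ is Lipschitz and glues continuously with the integer-order terms at integers. For polynomials it is natural to define $\mathcal R_s$ through Fourier/radial symbols $(1+k^2+\lambda)^{s}$ applied to finitely many coefficients, and these are smooth in $s$. I expect this regularity of $G$ across integer values of $s$ to be the main obstacle. If it fails, the proof will give the estimate only for $s_1,s_2$ in the same interval $(n,n+1)$, with continuity at integers inherited from the stated operator-norm continuity.

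Next, positivity. For each $s$, $G(s)$ is Hermitian positive definite, since the inner product is strictly positive on polynomials. On a compact $s$-interval $I$, continuity and compactness give a uniform lower bound $\lambda_{\min}(G(s))\ge c_{N,\varepsilon,I}>0$.

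Now I would express Gram--Schmidt algebraically. The orthogonal polynomials are $P^{(s)}_{n,\varepsilon}=\sum_j T_{nj}(s)m_j$, where $T(s)$ is lower triangular. Concretely, $T(s)=L(s)^{-1}$, with $G(s)=L(s)L(s)^*$ the Cholesky factorization. If monic normalization is used instead, one rescales by the diagonal of $L$, which is a smooth operation since the diagonal entries are bounded below by $\sqrt{c}$. The Cholesky map is real-analytic on the open cone of positive definite matrices. Each entry of $L$ is a rational expression in the entries of $G$ and square roots of positive leading principal minor ratios; those minors are bounded below on $I$ by powers of $c$. Hence $G\mapsto L^{-1}$ is $C^1$ with derivative bounded on the compact set $\{G(s):s\in I\}$, and composing with the Lipschitz map $s\mapsto G(s)$ yields $|T(s_1)-T(s_2)|\le C|s_1-s_2|$.

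Finally, all norms on the finite-dimensional space $\mathcal P_{\le N}(A_\varepsilon)$ are equivalent. Therefore $\|P^{(s_1)}_{n,\varepsilon}-P^{(s_2)}_{n,\varepsilon}\|\le C'\max_j|T_{nj}(s_1)-T_{nj}(s_2)|$, which gives the claimed bound with $C_{N,\varepsilon}$ depending on $N$, $\varepsilon$, the interval $I$ and the chosen norm. Plain continuity in $s$ follows by the same argument using only continuity of $G$.
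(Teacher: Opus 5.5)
Your argument is correct and follows the same route as the paper, which justifies the proposition only by noting that it is a finite-dimensional consequence of the continuity of the Gram matrices in $s$ and of the smooth dependence of Gram--Schmidt on the inner product; your Cholesky-factorization argument, with the uniform lower bound on $\lambda_{\min}(G(s))$, makes this explicit. Your caveat is also well placed: the paper asserts only continuity of $s\mapsto G(s)$, so the Lipschitz estimate genuinely needs the extra regularity you impose on $\mathcal R_s$ (Lipschitz in $s$, and compensating the jumps of $\lfloor s\rfloor$ and of the $\varepsilon$-weights at integers), an assumption the paper leaves implicit.
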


\begin{remark}
This statement is purely finite-dimensional and follows from the smooth
dependence of the Gram--Schmidt procedure on the inner product.
No assumption on completeness or spectral gaps is required.
\end{remark}

\subsubsection{Thin limit and effective tangential dependence}

As $\varepsilon\to0$, the Sobolev inner products
$\langle\cdot,\cdot\rangle_{\varepsilon,s}$ converge, at fixed degree, to
effective Sobolev inner products on $\mathcal P(\mathbb S^1)$ involving only
tangential derivatives.
In particular, the dependence on $s$ survives in the limit only through
the order of tangential regularity.

Thus, the limiting orthogonal families
$\{P_{n,0}^{(s)}\}$ form a continuous family of bases on $\mathbb S^1$ indexed
by $s$.

\subsubsection{Consequences for Galerkin approximations}

Let $u_\varepsilon^{(s),N,K}$ denote the Galerkin approximation of the evolution
problem~\eqref{eq:evolution} obtained by projecting onto the basis
$\{P_{n,\varepsilon}^{(s)}\}$.

Proposition~\ref{prop:sobolev-order-stability} implies that, at fixed truncation
parameters $(N,K)$, the Galerkin schemes associated with different Sobolev
orders $s$ are equivalent up to uniformly controlled perturbations.

\begin{remark}
This observation has two important consequences:
\begin{itemize}
\item the choice of the Sobolev order $s$ can be adapted to the PDE under
consideration (e.g.\ higher $s$ for dispersive models) without altering the
qualitative behavior of the approximation;
\item stability of the Galerkin scheme under changes of $s$ allows for
homotopy arguments between different Hilbert geometries.
\end{itemize}
\end{remark}

\subsubsection{Discussion}

The dependence of Sobolev orthogonal polynomials on the order $s$ reflects the
underlying geometry of the polynomial Hilbert space.
In the thin annulus setting, this dependence is mild at fixed degree and
compatible with the effective one-dimensional limit.

This flexibility will be exploited in the sequel to adapt the approximation
spaces to the regularity and dispersive properties of the limiting equations.

%=================================================
\section{Main effective dynamics theorem}
\label{sec:main-theorem}

This section states the main convergence result of the paper.
It formalizes the reduction of polynomial partial differential equations on thin
annuli to effective one-dimensional dynamics on the circle, together with a
quantitative description of defect terms.

%-------------------------------------------------
\subsection{Structural assumptions}

We consider evolution equations of the form
\begin{equation}
\label{eq:main-evolution}
\partial_t u_\varepsilon
=
J_\varepsilon\,\frac{\delta\mathcal H_\varepsilon}{\delta u}(u_\varepsilon)
+
\mathcal R_\varepsilon(u_\varepsilon)
\qquad\text{on } A_\varepsilon\times(0,T),
\end{equation}
with initial data $u_\varepsilon(0)=u_\varepsilon^0$.

We assume the following hypotheses.

\begin{enumerate}
\item[(H1)] \textbf{Hamiltonian structure.}
The functional $\mathcal H_\varepsilon$ is admissible in the sense of
Section~\ref{sec:operators}, with polynomial dependence on derivatives
$\partial_r^a\partial_\theta^b u$, $a+b\le m$, and possibly oscillatory
coefficients depending on $\theta/\varepsilon^\alpha$.

\item[(H2)] \textbf{Skew-adjoint generator.}
The operator $J_\varepsilon$ acts only on the angular variable $\theta$ and is
skew-adjoint on $L^2(\mathbb S^1)$.

\item[(H3)] \textbf{Perturbation.}
The remainder $\mathcal R_\varepsilon$ is locally Lipschitz in
$\|\cdot\|_{\varepsilon,m}$ and satisfies a uniform energy inequality compatible
with the renormalized Sobolev estimates.

\item[(H4)] \textbf{Uniform energy bounds.}
The initial data satisfy
\[
\sup_{\varepsilon>0}
\|u_\varepsilon^0\|_{\varepsilon,m}
<+\infty.
\]
\end{enumerate}

%-------------------------------------------------
\subsection{Compactness and radial rigidity}

Under the above assumptions, solutions to~\eqref{eq:main-evolution} satisfy the
uniform bound
\[
\sup_{t\in[0,T]}
\|u_\varepsilon(t)\|_{\varepsilon,m}
\le C(T),
\]
with a constant independent of $\varepsilon$.

Combining this estimate with the radial rigidity principle
(Proposition~\ref{prop:radial-rigidity}), we obtain the following compactness
result.

\begin{proposition}[Compactness]
\label{prop:compactness}
There exists a subsequence (still denoted $u_\varepsilon$) and a limit function
$u_0\in L^2(0,T;H^m(\mathbb S^1))$ such that
\[
u_\varepsilon \longrightarrow u_0
\quad\text{strongly in }L^2_{\mathrm{loc}}((-1,1)\times\mathbb S^1\times(0,T)),
\]
and
\[
\partial_\rho u_\varepsilon \longrightarrow 0
\quad\text{in }L^2(A_\varepsilon\times(0,T)).
\]
\end{proposition}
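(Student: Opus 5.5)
The plan is to combine three ingredients: the uniform energy bound stated just before the proposition, the radial rigidity principle (Proposition~\ref{prop:radial-rigidity}) applied at a.e.\ fixed time, and an Aubin--Lions type argument to obtain compactness in time. Throughout I work in the rescaled variables $(\rho,\theta)\in(-1,1)\times\mathbb S^1$. There, by the equivalence stated after the definition of $\|\cdot\|_{\varepsilon,m}$, the bound
\[
\sup_{t\in[0,T]}\|u_\varepsilon(t)\|_{\varepsilon,m}\le C(T)
\]
yields uniform $H^m$-type control of $\partial_\rho^a\partial_\theta^b u_\varepsilon$ for $a+b\le m$, with weight $(1+\varepsilon\rho)$ bounded above and below.

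\textbf{Step 1: the transverse derivative.} I would read off the vanishing of the radial derivative directly from the scaling. In the original definition, the term with $a=1,b=0$ carries the weight $\varepsilon^{2-(2m-1)}$ in front of $\int_{A_\varepsilon}|\partial_r u|^2$. Rewriting this in terms of $\partial_\rho=\varepsilon\partial_r$ and the measure $\varepsilon(1+\varepsilon\rho)\,d\rho\,d\theta$ leaves an explicit positive power of $\varepsilon$ in front of $\|\partial_\rho u_\varepsilon\|^2_{L^2(A_\varepsilon)}$, uniformly in $t$. This is the mechanism behind the second conclusion of Proposition~\ref{prop:radial-rigidity}. Integrating over $(0,T)$ then gives $\partial_\rho u_\varepsilon\to0$ in $L^2(A_\varepsilon\times(0,T))$. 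Equivalently, one can apply Proposition~\ref{prop:radial-rigidity} pointwise in $t$ (since $\|\cdot\|_{\varepsilon,m}$ controls $\|\cdot\|_{\varepsilon,1}$) and use dominated convergence in $t$, with the uniform bound as dominating function.

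\textbf{Step 2: time compactness.} The spatial bound gives boundedness of $u_\varepsilon$ in $L^\infty(0,T;H^1((-1,1)\times\mathbb S^1))$ in the rescaled variables, and $H^1$ embeds compactly into $L^2_{\mathrm{loc}}$. For the time derivative, I would use equation~\eqref{eq:main-evolution}. By (H1) the variational derivative is a polynomial differential operator of order at most $2m$. By (H2) $J_\varepsilon$ acts only in $\theta$, with bounded order. By (H3) $\mathcal R_\varepsilon$ is locally Lipschitz in $\|\cdot\|_{\varepsilon,m}$. Together these bound $\partial_t u_\varepsilon$ uniformly in $L^2(0,T;H^{-M}_{\mathrm{loc}})$ for some large $M$. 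Polynomial nonlinearities are handled by testing against smooth compactly supported functions and using Sobolev embeddings available from the uniform $H^m$ bound. The Aubin--Lions--Simon lemma then yields a subsequence converging strongly in $L^2_{\mathrm{loc}}((-1,1)\times\mathbb S^1\times(0,T))$ to some limit $u_0$.

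\textbf{Step 3: identification of the limit.} Since $\partial_\rho u_\varepsilon\to0$, the distributional $\rho$-derivative of $u_0$ vanishes, so $u_0=u_0(\theta,t)$. The uniform bounds on the tangential derivatives $\partial_\theta^b u_\varepsilon$, $b\le m$, pass to the weak limit by lower semicontinuity. This gives $u_0\in L^2(0,T;H^m(\mathbb S^1))$, and also the consistency $\bar u_\varepsilon\to u_0$.

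The main obstacle I expect is Step 2. The bound on $\partial_t u_\varepsilon$ in a negative Sobolev space must be uniform in $\varepsilon$, yet $\delta\mathcal H_\varepsilon/\delta u$ contains transverse derivatives carrying factors $\varepsilon^{-a}$, as well as oscillatory coefficients. My first attempt would be to test only against functions $\varphi(\theta,t)$ independent of $\rho$, or to average in $\rho$ first. After integration by parts in $\rho$, the singular transverse terms then either vanish or are absorbed by the renormalized weights. This would give equicontinuity in time of the transverse averages $\bar u_\varepsilon$ in $H^{-M}(\mathbb S^1)$. Strong convergence of $u_\varepsilon$ itself then follows by combining this with the estimate
\[
\|u_\varepsilon-\bar u_\varepsilon\|_{L^2}\lesssim\|\partial_\rho u_\varepsilon\|_{L^2}\to0,
\]
which is the Poincaré inequality in $\rho$ on $(-1,1)$.
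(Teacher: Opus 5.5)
You follow the paper's outline: you use the scaling of the $(a,b)=(1,0)$ term to kill $\partial_\rho u_\varepsilon$, the Poincar\'e inequality in $\rho$ to reduce everything to $\bar u_\varepsilon$, and Aubin--Lions on $\bar u_\varepsilon$. The genuine gap is in Step~1: in which measure is $\partial_\rho u_\varepsilon$ small? If you carry out the computation you only describe, every term of the norm rescales in exactly the same way:
\[
\varepsilon^{2a-(2m-1)}\int_{A_\varepsilon}\bigl|\partial_r^a\partial_\theta^b u\bigr|^2\,r\,dr\,d\theta
=\varepsilon^{2-2m}\int_{-1}^{1}\!\!\int_{\mathbb S^1}\bigl|\partial_\rho^a\partial_\theta^b u\bigr|^2\,(1+\varepsilon\rho)\,d\rho\,d\theta .
\]
Hence $\|\partial_\rho u_\varepsilon\|_{L^2(A_\varepsilon)}^2\lesssim\varepsilon^{2m-1}$, which is what you get. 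That smallness is essentially the factor $\varepsilon$ in the measure of $A_\varepsilon$, and it says nothing in the variables $(\rho,\theta)$.

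Your Step~3 ($\partial_\rho u_0=0$) and your closing Poincar\'e estimate need smallness in $L^2((-1,1)\times\mathbb S^1\times(0,T))$ with respect to $d\rho\,d\theta\,dt$. There the bound is only $\varepsilon^{2m-2}$, which is exactly the paper's \eqref{eq:drho-bound}.

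For $m=1$ this does not tend to zero. The paper's claim that it does ``since $m\ge1$'' also fails there. In fact the conclusion is false at $m=1$: take a smooth nonconstant $g$ and set $u_\varepsilon=g(\rho)$. Then $\|u_\varepsilon\|_{\varepsilon,1}$ is bounded independently of $\varepsilon$, $u_\varepsilon$ is an exact stationary solution of the KdV model \eqref{eq:kdv-eps}, and its limit depends on $\rho$. The same example contradicts the first assertion of Proposition~\ref{prop:radial-rigidity}, so citing that proposition does not help.

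For $m\ge2$, the identity with $(a,b)=(0,0)$ gives $\|u_\varepsilon\|_{L^2((-1,1)\times\mathbb S^1)}\lesssim\varepsilon^{m-1}\to0$. So the statement holds only in the degenerate form $u_0\equiv0$, and no time compactness is needed at all.

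The two inputs you combine are also incompatible. If the renormalized norm were uniformly equivalent to the rescaled $H^m$ norm, as you take from the remark after its definition, then $\partial_\rho$ and $\partial_\theta$ would enter on an equal footing. Nothing would then force $\partial_\rho u_\varepsilon\to0$ in rescaled variables. (That equivalence is in fact off by the global factor $\varepsilon^{2-2m}$.) Genuine radial rigidity needs weights that do not compensate the factor $\varepsilon^{-a}$ in $\partial_r^a=\varepsilon^{-a}\partial_\rho^a$.

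On Step~2, you are right that a uniform bound on $\partial_t u_\varepsilon$ does not follow from (H1)--(H4). The paper does not derive one either; it adds \eqref{eq:time-bound} as an extra hypothesis. Your workaround with $\rho$-independent test functions is not closed, for two reasons that the hypotheses do not cover:
\begin{itemize}
\item It removes the singular transverse terms only if $\delta\mathcal H_\varepsilon/\delta u$ is taken in the free-boundary weak sense. No boundary conditions at $\rho=\pm1$ are specified, and otherwise integrating, e.g., $\varepsilon^{-2}\partial_\theta\partial_\rho^2u$ in $\rho$ leaves boundary terms carrying a factor $\varepsilon^{-2}$.
\item It also requires that the coefficients $\partial F/\partial(\partial_\theta^b u)$ contain no transverse derivatives.
\end{itemize}
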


%-------------------------------------------------
\subsection{Identification of the effective equation}

Let $\bar u_\varepsilon(\theta,t)$ denote the transverse average of $u_\varepsilon$.
Then $\bar u_\varepsilon\to u_0$ strongly in $L^2(\mathbb S^1\times(0,T))$.

\begin{theorem}[Effective dynamics]
\label{thm:effective-dynamics}
Under assumptions {\rm(H1)--(H4)}, the limit $u_0$ satisfies the effective
evolution equation
\begin{equation}
\label{eq:effective}
\partial_t u_0
=
J_0\,\frac{\delta\mathcal H_{\mathrm{eff}}}{\delta u}(u_0)
+
\mathcal R_{\mathrm{eff}}(u_0)
\qquad\text{on }\mathbb S^1\times(0,T),
\end{equation}
where:
\begin{itemize}
\item $\mathcal H_{\mathrm{eff}}$ is obtained from $\mathcal H_\varepsilon$ by
discarding all transverse derivatives and averaging oscillatory coefficients;
\item $J_0$ is the limit of $J_\varepsilon$ acting on functions of $\theta$ only;
\item $\mathcal R_{\mathrm{eff}}$ is the weak limit of $\mathcal R_\varepsilon$
under the thin scaling.
\end{itemize}
\end{theorem}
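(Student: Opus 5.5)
The plan is to pass to the limit in a weak formulation of \eqref{eq:main-evolution}, with test functions that do not depend on the transverse variable. Fix $\varphi\in C_c^\infty(\mathbb S^1\times(0,T))$ and extend it to $A_\varepsilon$ independently of $\rho$. Multiply \eqref{eq:main-evolution} by $\varphi$, integrate against the normalized measure $\frac{1}{2}(1+\varepsilon\rho)\,d\rho\,d\theta\,dt$, and integrate by parts in $t$ and $\theta$. By (H2), $J_\varepsilon$ acts only in $\theta$ and is skew-adjoint, so it can be moved onto $\varphi$. This gives
\[
-\!\int \bar u_\varepsilon\,\partial_t\varphi
=
-\!\int \frac{\delta\mathcal H_\varepsilon}{\delta u}(u_\varepsilon)\,\overline{J_\varepsilon\varphi}
+\int \mathcal R_\varepsilon(u_\varepsilon)\,\varphi + O(\varepsilon).
\]
The $O(\varepsilon)$ term comes from the weight $1+\varepsilon\rho$. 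The left side converges to $-\int u_0\,\partial_t\varphi$ by Proposition~\ref{prop:compactness} and the strong convergence $\bar u_\varepsilon\to u_0$. For the generator, $J_\varepsilon\varphi\to J_0\varphi$ in $L^2$, which is the defining property of $J_0$.

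The second step identifies the limit of the Hamiltonian term. I would write $\delta\mathcal H_\varepsilon/\delta u$ as the Euler--Lagrange expression
\[
\sum_{a+b\le m}(-1)^{a+b}\partial_r^a\partial_\theta^b\bigl[\partial_{(a,b)}F(\cdots)\bigr]
\]
and pair it with $\varphi$. Every term with $a\ge1$ has an $r$-derivative falling on a $\rho$-independent function, so it contributes only a boundary term at $\rho=\pm1$. I would remove these boundary terms using the natural boundary conditions encoded in the variational structure. Any remainder is controlled by $\|\partial_\rho u_\varepsilon\|\to0$ via Proposition~\ref{prop:radial-rigidity}. Concretely, the uniform bound $\sup_t\|u_\varepsilon\|_{\varepsilon,m}\le C$ with weights $\varepsilon^{2a-(2m-1)}$ forces $\varepsilon^{-a}\partial_\rho^a u_\varepsilon$ to stay bounded in rescaled $L^2$. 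Hence all arguments of $F$ involving transverse derivatives tend to zero strongly.

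The purely tangential derivatives $\partial_\theta^b u_\varepsilon$ are bounded in $L^2(0,T;L^2((-1,1);H^m(\mathbb S^1)))$. Combined with strong $L^2$ convergence, interpolation gives strong convergence of $\partial_\theta^b u_\varepsilon$ for $b<m$ to $\partial_\theta^b u_0$, and weak convergence for $b=m$. The polynomial growth in (H1) together with the locality of $F$ then allows passing to the limit in $\int\partial_{(0,b)}F\,\partial_\theta^b\overline{J_\varepsilon\varphi}$.

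The third step handles the oscillatory coefficients $c(\theta,\theta/\varepsilon^\alpha)$. Where they multiply strongly converging products, the weak-$*$ convergence $c(\theta,\theta/\varepsilon^\alpha)\rightharpoonup\langle c\rangle(\theta)$, the cell average, suffices. Where they multiply the weakly converging top-order derivatives, I would use two-scale convergence in $\theta$ and check that the corrector contribution vanishes at leading order. This check is where $\mathcal H_{\mathrm{eff}}$ is defined as the averaged functional. The perturbation term converges to $\mathcal R_{\mathrm{eff}}(u_0)$ by (H3) and the definition of $\mathcal R_{\mathrm{eff}}$ as the weak limit under thin scaling, with local Lipschitz continuity giving the identification once $u_\varepsilon\to u_0$ strongly.

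The main obstacle is the top-order nonlinear terms. When $F$ is nonlinear in $\partial_\theta^m u$, which happens for instance when its leading part is not quadratic, weak convergence alone does not pass through $F$. With rapidly oscillating coefficients, the limit may also not be the naive average but a homogenized quantity. I would first try to exploit the structure typical of the models in the paper, namely $F$ quadratic in the highest-order derivatives with lower-order polynomial nonlinearities. In that case the top-order term is linear and passes weakly, and the nonlinearity involves only strongly convergent quantities. For the oscillatory quadratic part I would invoke compensated compactness or two-scale arguments, and I would accept that in the genuinely homogenized regime the statement ``averaging the coefficients'' must be read as the cell-problem average.
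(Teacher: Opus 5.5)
Your route is the paper's: test with $\rho$-independent functions, move $J_\varepsilon$ onto the test function, and pass to the limit term by term, with the averaging and (H3) steps handled as in Steps 3--4 of the paper. Integrating by parts in $t$ even spares you the time-derivative bound used in Step 2.

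\textbf{The gap: discarding transverse derivatives.} Testing the variational form with $\psi=J_\varepsilon^*\varphi$, which is $\rho$-independent, does kill every slot $\partial_{(a,b)}F$ with $a\ge1$. But the surviving integrands $\partial_{(0,b)}F(\cdots)$ are still evaluated at the transverse arguments $\partial_r^a\partial_\theta^{b'}u_\varepsilon=\varepsilon^{-a}\partial_\rho^a\partial_\theta^{b'}u_\varepsilon$ whenever $F$ couples the two groups of variables, e.g.\ through a term $u\,|\partial_r u|^2$. Your reason for these to vanish fails in two ways.
\begin{itemize}
\item The bound is miscomputed. Since $\partial_r=\varepsilon^{-1}\partial_\rho$ and $r\,dr\,d\theta=\varepsilon(1+\varepsilon\rho)\,d\rho\,d\theta$, every term of the norm carries the same factor:
\[
\varepsilon^{2a-(2m-1)}\int_{A_\varepsilon}|\partial_r^a\partial_\theta^b u|^2\,r\,dr\,d\theta
=\varepsilon^{2-2m}\int_{-1}^1\!\!\int_{\mathbb S^1}|\partial_\rho^a\partial_\theta^b u|^2(1+\varepsilon\rho)\,d\rho\,d\theta .
\]
So you only get $\|\varepsilon^{-a}\partial_\rho^a\partial_\theta^b u_\varepsilon\|^2_{L^2(d\rho\,d\theta)}\lesssim\varepsilon^{2m-2-2a}$, which is unbounded for $a=m$. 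For $m=1$, $u_\varepsilon=\rho$ has $\|u_\varepsilon\|_{\varepsilon,1}=O(1)$ but $\partial_r u_\varepsilon=\varepsilon^{-1}$. It is not even asymptotically $\rho$-independent, so this norm alone does not give the radial rigidity you import.
\item Even an $O(1)$ bound gives only weak compactness. A quantity like $u_\varepsilon|\partial_r u_\varepsilon|^2$ may then converge to a nonzero defect: radial rigidity controls $\partial_\rho u_\varepsilon$, not $\partial_r u_\varepsilon$.
\end{itemize}
To discard transverse derivatives you need an extra hypothesis, which (H1)--(H4) do not supply. One option is that $F$ has no cross terms between the $a=0$ and $a\ge1$ slots, as in the paper's KdV, NLS and ZK examples. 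Another is a transverse penalization strong enough to force $\partial_r^a\partial_\theta^b u_\varepsilon\to0$ strongly. The paper's Step 3 makes the same leap by asserting that only purely tangential terms survive, so you are not missing an argument the paper has, but your justification cannot stand.

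\textbf{The restricted statement.} Your last paragraph restricts to $F$ quadratic in the top-order derivatives and reads ``averaging'' as a cell-problem average, so as written you prove a weaker statement than the theorem. This restriction is forced by the statement, not by your method. Take the Schr\"odinger flow ($J_\varepsilon=i$) generated by $\int c(\theta/\varepsilon^\alpha)|\partial_\theta u|^2$ plus a mass term, with $\rho$-independent data. Energy and mass conservation give uniform $H^1$ bounds. Yet the limit generator is $c^*\partial_\theta^2$ with $c^*=\langle c^{-1}\rangle^{-1}$, by one-dimensional homogenization (e.g.\ strong resolvent convergence), not $\langle c\rangle\partial_\theta^2$. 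Likewise, when $F$ is nonlinear in $\partial_\theta^m u$, weak convergence does not pass through $F$, and (H1)--(H4) provide no strong compactness at top order. The paper's appeal to ``standard two-scale compactness (or periodic averaging)'' addresses neither issue.

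Granting Proposition~\ref{prop:compactness}, your argument does prove the theorem under three added assumptions:
\begin{itemize}
\item no tangential--transverse coupling in $F$;
\item $F$ quadratic in $\partial_\theta^m u$;
\item a non-oscillating top-order coefficient, or the homogenized coefficient in place of the average.
\end{itemize}
Without these assumptions, neither your proposal nor the paper's proof establishes the theorem.
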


\begin{remark}
Equation~\eqref{eq:effective} is a one-dimensional Hamiltonian or dissipative
system on the circle.
Its precise form depends on the structure of the original PDE, but is always a
polynomial differential equation in $\partial_\theta$.
\end{remark}

%-------------------------------------------------
\subsection{Defect expansion and corrector problem}

Beyond the leading-order limit, transverse effects may produce non-trivial
corrections.

\begin{theorem}[First-order defect expansion]
\label{thm:defect}
Assume in addition that the coefficients of $\mathcal H_\varepsilon$ admit an
asymptotic expansion in powers of $\varepsilon$.
Then there exists $\beta>0$ and a corrector $u_1(\rho,\theta,t)$ such that
\[
u_\varepsilon
=
u_0 + \varepsilon^\beta u_1 + o(\varepsilon^\beta)
\quad\text{in }L^2_{\mathrm{loc}}.
\]

The corrector $u_1$ solves a linear cell problem of the form
\[
\mathcal L_\rho u_1
=
\mathcal S(u_0,\partial_\theta u_0,\ldots),
\]
where $\mathcal L_\rho$ is a transverse operator depending on the dominant
radial part of $\mathcal H_\varepsilon$, and $\mathcal S$ encodes tangential
defect terms.
\end{theorem}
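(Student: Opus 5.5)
The plan is a two-scale formal ansatz in the rescaled variables $(\rho,\theta)$, followed by a justification through the uniform energy estimates and Proposition~\ref{prop:compactness}. I would substitute
\[
u_\varepsilon = u_0(\theta,t) + \varepsilon^\beta u_1(\rho,\theta,t) + \varepsilon^{\beta} r_\varepsilon
\]
into \eqref{eq:main-evolution}, written in the form where $\partial_r=\varepsilon^{-1}\partial_\rho$. I would then expand $\delta\mathcal H_\varepsilon/\delta u$ in powers of $\varepsilon$. This uses the assumed expansion of the coefficients, the factor $(1+\varepsilon\rho)$ coming from the measure, and, in the oscillatory case, an additional fast variable $y=\theta/\varepsilon^\alpha$ on which $u_1$ may depend.

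\textbf{Choosing $\beta$ and deriving the cell problem.} Because $u_0$ is independent of $\rho$, every transverse derivative annihilates it. The most singular terms of the expansion are therefore of the form $\varepsilon^{\beta-2a}\partial_\rho^{a}(\cdots)\partial_\rho^a u_1$, with $a$ the maximal radial order present in $F$. These must balance the leading tangential residual, namely the part of $J_\varepsilon\,\delta\mathcal H_\varepsilon/\delta u(u_0)$ that is not already captured by \eqref{eq:effective}. That residual is of order $\varepsilon^{\gamma}$, where $\gamma$ is the first power in the coefficient expansion (or the mismatch between the oscillating coefficient and its average). I would set $\beta=\gamma+2a_{\max}$, or $\beta=\min(1,\alpha)$ in the homogenization case. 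Matching orders then gives
\[
\mathcal L_\rho u_1=\mathcal S(u_0,\partial_\theta u_0,\dots),
\]
where $\mathcal L_\rho=\sum_{a}(-1)^a\partial_\rho^a\,\partial^2_{\xi_{a}}F\,\partial_\rho^a$ is the linearization of the dominant radial part of $\mathcal H_\varepsilon$, posed on $(-1,1)$ with natural (Neumann-type) boundary conditions, and the source $\mathcal S$ collects the defect terms.

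\textbf{Solvability.} The coercivity in (H1) makes $\mathcal L_\rho$ a nonnegative self-adjoint operator whose kernel consists of the $\rho$-constants. The Fredholm alternative therefore requires $\int_{-1}^1\mathcal S\,d\rho=0$ (together with a $y$-average in the oscillatory case). This is exactly the statement that $u_0$ satisfies \eqref{eq:effective}, which Theorem~\ref{thm:effective-dynamics} provides. I would fix $u_1$ uniquely by imposing zero transverse mean, and regularity in $(\theta,t)$ follows from the regularity of $u_0$.

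\textbf{Justifying the remainder.} I would define the remainder through $u_\varepsilon=u_0+\varepsilon^\beta u_1+\varepsilon^\beta r_\varepsilon$, derive its equation, and test it against $r_\varepsilon$ in the $\langle\cdot,\cdot\rangle_{\varepsilon,m}$ structure. Skew-adjointness of $J_\varepsilon$ (H2) removes the principal term. The Lipschitz bound (H3), together with polynomial growth and the uniform bound $\|u_\varepsilon\|_{\varepsilon,m}\le C(T)$, controls the nonlinear terms, and the consistency residual is $o(1)$. A Gronwall argument then gives $\|r_\varepsilon\|_{L^2}\to0$, provided the initial data are well prepared so that $r_\varepsilon(0)=o(1)$.

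I expect this last step to be the main obstacle. The energy estimate must be carried out at order $\varepsilon^\beta$, which requires uniform control of the remainder equation and enough smoothness of $u_0$ and $u_1$ for the residual to actually be $o(\varepsilon^\beta)$. Mere compactness does not give a rate. My first move would be to upgrade Proposition~\ref{prop:compactness} via the corrector itself, using the energy identity for the difference and treating initial layers separately if the data are not well prepared.
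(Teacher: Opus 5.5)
Your outline follows the same skeleton as the paper's proof: ansatz $u_0+\varepsilon^\beta u_1$, order matching, a zero-mean transverse cell problem, and an energy estimate on the remainder. Your choice $\beta=\gamma+2a_{\max}$ is actually more consistent than the paper's matching, which places $\mathcal L_\rho u_1$ at order $\varepsilon^\beta$ despite the $\varepsilon^{-2}$ prefactor. Two steps, however, fail as you state them.

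\textbf{The solvability step.} The Fredholm condition $\int_{-1}^1\mathcal S\,d\rho=0$ is a condition at order $\varepsilon^\gamma$, whereas \eqref{eq:effective} is the balance at order $\varepsilon^0$. For $\gamma>0$ the second does not imply the first. In the paper's class the coefficients depend only on $(\theta,\theta/\varepsilon^\alpha)$ and $u_0$ does not depend on $\rho$. So, apart from terms generated by the weight $1+\varepsilon\rho$, the source produced by the first coefficient correction is itself $\rho$-independent and generically has nonzero mean.

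For a concrete case, replace $\partial_\theta^3$ by $(1+\varepsilon^\gamma)\partial_\theta^3$ in \eqref{eq:zk}. Your cell problem becomes $\partial_\theta\partial_\rho^2u_1=-\partial_\theta^3u_0$. Integrating in $\rho$ with your Neumann conditions forces $\partial_\theta^3u_0\equiv0$, i.e.\ $u_0$ constant. In fact, for $\rho$-independent data the solution stays $\rho$-independent, and $u_\varepsilon-u_0=\varepsilon^\gamma v+o(\varepsilon^\gamma)$, where $v$ solves the linearized KdV equation $\partial_tv+\partial_\theta(u_0v)+\partial_\theta^3v=-\partial_\theta^3u_0$ with $v(0)=0$. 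This deviation is far larger than $\varepsilon^{\gamma+2}$ and invisible to a zero-mean $u_1$.

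The missing idea is to split the source into its transverse mean and its fluctuation. The mean drives a $\rho$-independent correction $\varepsilon^\gamma v(\theta,t)$ through the linearized effective equation; only the fluctuation feeds $\mathcal L_\rho$. So either the leading defect is this mean correction (with $\beta=\gamma$, and it does not solve a transverse cell problem), or you need an extra hypothesis that kills the mean, such as parity in $\rho$. Note also that (H1) is a global lower bound, not convexity of $F$ at $u_0$, so the nonnegativity of $\mathcal L_\rho$ and the description of its kernel are additional assumptions.

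\textbf{The remainder step.} This is an independent gap: even when all the means vanish, the residual of the two-term ansatz is not $o(\varepsilon^\beta)$. Let $\mathcal F_0$ denote the right-hand side of \eqref{eq:effective}. After using \eqref{eq:effective} and the cell problem, the terms $\varepsilon^\beta\bigl(\partial_tu_1-D\mathcal F_0(u_0)u_1\bigr)$ survive at exactly order $\varepsilon^\beta$. Any intermediate order $\varepsilon^{\gamma'}$ with $\gamma<\gamma'<\beta$ (e.g.\ a second coefficient correction) leaves an even larger residual.

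The fast part $\varepsilon^{-2a_{\max}}J_0\mathcal L_\rho$ is conservative, like $\varepsilon^{-2}\partial_\theta\partial_\rho^2$ in ZK, not dissipative. Testing the equation for $r_\varepsilon$ therefore gains nothing from it, and Gronwall only gives $\|r_\varepsilon(t)\|\lesssim\|r_\varepsilon(0)\|+\varepsilon^{-\beta}\int_0^t\|\mathrm{Res}\|$, which is $O(1)$ or worse. More smoothness of $u_0,u_1$ does not change this.

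To reach $o(\varepsilon^\beta)$ you have two options:
\begin{itemize}
\item Construct an approximate solution whose residual is $o(\varepsilon^\beta)$ in $L^1(0,T;L^2)$. This means adding transverse correctors of order $\varepsilon^{\gamma'+2a_{\max}}=o(\varepsilon^\beta)$ for the fluctuating part of each intermediate residual, and treating their means as above.
\item Run a time-averaging/normal-form argument for the fast flow. This is delicate because $J_0=\partial_\theta$ is not invertible on the zero Fourier mode, so the actual cell problem $J_0\mathcal L_\rho u_1=\mathcal S$ also carries a solvability condition in $\theta$.
\end{itemize}

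The paper's own proof passes over both points. It assumes $\mathcal L_\rho$ is invertible on zero-mean functions without checking that $\mathcal S$ has zero mean, and it invokes a ``standard energy estimate'' for the remainder. These are gaps you would have to close yourself rather than ones resolved by following the paper.
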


\begin{remark}
The structure of the cell problem depends on the model:
\begin{itemize}
\item it is trivial for purely tangential integrable systems (KdV, NLS);
\item it is non-trivial for anisotropic dispersive equations
(e.g.\ Zakharov--Kuznetsov);
\item it produces homogenized coefficients in the presence of oscillations.
\end{itemize}
\end{remark}

%-------------------------------------------------
\subsection{Stability with respect to approximation and geometry}

Finally, we summarize the robustness of the above results.

\begin{corollary}[Stability under Galerkin approximation]
\label{cor:galerkin}
Let $u_{\varepsilon}^{N,K}$ be the Galerkin approximation defined in
Section~\ref{sec:operators}.
For fixed $(N,K)$,
\[
u_{\varepsilon}^{N,K} \longrightarrow u_0^{N,K}
\quad\text{as }\varepsilon\to0,
\]
where $u_0^{N,K}$ is the Galerkin approximation of the effective equation.
\end{corollary}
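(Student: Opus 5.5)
The plan is to reduce Corollary~\ref{cor:galerkin} to a statement about finite systems of ODEs whose coefficients depend continuously on $\varepsilon$ and have a well-defined limit at $\varepsilon=0$. Then continuous dependence of ODE solutions on parameters will give the result.

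\textbf{Step 1: write the Galerkin system in coordinates.} Fix $(N,K)$. Write $u_\varepsilon^{N,K}(t)=\sum_{n\le N,|k|\le K} c_{n,k}^\varepsilon(t)\,\Phi_{n,k}^\varepsilon$. Projecting \eqref{eq:main-evolution} onto $V_\varepsilon^{N,K}$ with respect to $\langle\cdot,\cdot\rangle_{\varepsilon,m}$ gives
\[
G_\varepsilon\,\dot c^\varepsilon = F_\varepsilon(c^\varepsilon).
\]
Here $G_\varepsilon$ is the Gram matrix; it is diagonal by orthogonality and block diagonal in $k$ by the Fourier block structure. The vector field $F_\varepsilon$ has components $\langle J_\varepsilon\frac{\delta\mathcal H_\varepsilon}{\delta u}(u)+\mathcal R_\varepsilon(u),\Phi_{n,k}^\varepsilon\rangle_{\varepsilon,m}$.

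Because $\mathcal H_\varepsilon$ is polynomial in the derivatives (H1), $F_\varepsilon$ is a polynomial in $c$, up to the $\mathcal R_\varepsilon$ part, which is locally Lipschitz by (H3). Its coefficients are finite sums of integrals of products of derivatives of the basis polynomials against the coefficients $c_{a,b}(\theta,\theta/\varepsilon^\alpha)$, all written in the rescaled variables $(\rho,\theta)$.

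\textbf{Step 2: convergence of the coefficients.} I will invoke the thin-limit result of Section~\ref{subsec:sobolev-op}: at fixed degree, the Gram matrices converge and $\Phi_{n,k}^\varepsilon\to\Phi_{n,k}^0$ in every finite-dimensional norm. From this it follows that $G_\varepsilon\to G_0$, with $G_0$ invertible because the limiting inner product is positive on the limiting space. For the nonlinear coefficients I split into three classes of terms.
\begin{itemize}
\item Terms containing a transverse derivative $\partial_\rho$. They carry powers of $\varepsilon$ after renormalization and tend to zero, since $\partial_\rho\Phi^0_{n,k}=0$. This is the finite-dimensional analogue of Proposition~\ref{prop:radial-rigidity}.
\item Purely tangential terms. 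These converge to the corresponding integrals on $\mathbb S^1$, with the transverse integral $\frac12\int_{-1}^1(1+\varepsilon\rho)\,d\rho\to1$.
\item Oscillatory coefficients. Integrated against the fixed trigonometric polynomials in $\theta$, these converge to their averages by the Riemann--Lebesgue lemma.
\end{itemize}
Together these show that $F_\varepsilon\to F_0$ locally uniformly together with Lipschitz constants, where $F_0$ is exactly the Galerkin vector field of \eqref{eq:effective}. The $\mathcal R_\varepsilon$ contribution is handled by the hypothesis that $\mathcal R_{\mathrm{eff}}$ is its limit, which I will interpret as convergence tested against fixed polynomials.

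\textbf{Step 3: passage to the limit in the ODEs.} The initial data are projected as well. Their coefficients are bounded by (H4), so after extraction they converge; if the limits $u_\varepsilon^0\to u_0^0$ are prescribed, the coefficients converge to the projection of $u_0^0$. Standard continuous dependence of ODE solutions on parameters (Gronwall) then gives $c^\varepsilon\to c^0$ uniformly on the existence interval of the limit solution. Combined with Step 2, this yields $u_\varepsilon^{N,K}\to u_0^{N,K}$ in $L^2_{\mathrm{loc}}$.

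\textbf{Main obstacle.} The delicate step is the uniform existence time in Step 3. The local solutions for $\varepsilon>0$ might blow up before $T$. I would use the energy estimate restricted to $V_\varepsilon^{N,K}$ (coercivity of $\mathcal H_\varepsilon$ plus the energy inequality of (H3)) to get $\varepsilon$-uniform bounds on $c^\varepsilon$ on $[0,T]$. Failing that, I would prove convergence on the maximal existence interval of $u_0^{N,K}$ via a continuity/bootstrap argument.

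A secondary difficulty is the mode-by-mode control of the $\varepsilon^{-a}$ factors against the renormalization weights $\varepsilon^{2a-(2m-1)}$. I would check this on each Fourier block, using the thin-limit convergence of the Gram matrices as the source of the bounds.
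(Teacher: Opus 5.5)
Your strategy is the paper's. Its proof simply asserts that, by the fixed-degree thin-limit property of Section~\ref{subsec:sobolev-op}, the Gram matrices and all projected operator coefficients converge, so the Galerkin ODE converges coefficientwise to the Galerkin system of \eqref{eq:effective}; it then invokes continuous dependence on parameters. Your handling of the initial data and of the existence interval is more careful than the paper's: your fallback, convergence on compact subintervals of the maximal interval of $u_0^{N,K}$, is the standard regular-perturbation statement and needs no energy estimate.

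The gap is in Step~2, exactly where you try to justify what the paper only asserts. Your first bullet has the scaling backwards. In rescaled variables the operator carries $\varepsilon^{-a}\partial_\rho^a$, a negative power; the paper itself notes that the Zakharov--Kuznetsov term $\varepsilon^{-2}\partial_\theta\partial_\rho^2$ is dominant. The renormalization does not compensate. Using $\partial_r=\varepsilon^{-1}\partial_\rho$ and $r\,dr\,d\theta=\varepsilon(1+\varepsilon\rho)\,d\rho\,d\theta$, every $(a,b)$-term of $\langle\cdot,\cdot\rangle_{\varepsilon,m}$ carries the same factor $\varepsilon^{2-2m}$. That factor cancels between $G_\varepsilon$ and $F_\varepsilon$ and leaves the $\varepsilon^{-a}$ of the operator untouched. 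Nor does $\Phi^\varepsilon_{n,k}\to\Phi^0_{n,k}$ with $\partial_\rho\Phi^0_{n,k}=0$ give any rate: $e^{ik\theta}(1+\varepsilon\rho^2)\to e^{ik\theta}$, while $\varepsilon^{-2}\partial_\rho^2$ of it is $2\varepsilon^{-1}e^{ik\theta}$. Moreover, since $\varepsilon^{-a}\partial_\rho^a=\partial_r^a$, on polynomials in $r$ with fixed coefficients these terms are $O(1)$ and nonzero, e.g.\ $\partial_\theta\partial_r^2\bigl((r-1)^2e^{ik\theta}\bigr)=2ik\,e^{ik\theta}$. So when $\mathcal H_\varepsilon$ contains transverse derivatives, as (H1) permits, $F_\varepsilon$ need not converge, and when it does, its limit keeps transverse contributions.

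Second, you use both $\partial_\rho\Phi^0_{n,k}=0$ and the invertibility of $G_0$; for $N\ge1$ these are incompatible. For fixed $k$, $V_\varepsilon^{N,K}$ contains several radial modes, whereas functions of $\theta$ alone with frequency $k$ span a line. If the limits are $\theta$-only, the limiting family is linearly dependent on each Fourier block and its Gram matrix is singular. The appendix example, where $\{1,r-1,\cos\theta,\sin\theta\}$ collapses to $\{1,\cos\theta,\sin\theta\}$, shows exactly this loss of rank. If instead $G_0$ is invertible, the limits depend on $\rho$, and $F_0$ is a Galerkin field on $(-1,1)\times\mathbb S^1$, not that of \eqref{eq:effective}. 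Either way, continuous dependence for a regular perturbation of an ODE on a fixed phase space cannot by itself yield $u_0^{N,K}$. What is missing is a mechanism that eliminates the radial modes.

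One such mechanism runs as follows.
\begin{enumerate}
\item Distinct Fourier modes are orthogonal for $\langle\cdot,\cdot\rangle_{\varepsilon,m}$, and the $\rho$-independent trigonometric polynomials with $|k|\le K$ lie in $V_\varepsilon^{N,K}$. Hence the Galerkin projection of a $\rho$-independent function is its Fourier truncation.
\item Suppose the transverse part of the vector field annihilates $\rho$-independent functions and the remaining part preserves $\rho$-independence, as for KdV, NLS and ZK. Then this subspace is invariant under the Galerkin flow.
\item For $\rho$-independent data, the $\varepsilon$-Galerkin system therefore reduces to a Fourier--Galerkin system on $\mathbb S^1$ of fixed dimension. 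Your averaging argument and ODE continuous dependence then apply.
\end{enumerate}
For data that are not $\rho$-independent, or for transverse terms that couple the radial modes, you need a further argument controlling the radial coefficients. This is a singular perturbation problem because of the $\varepsilon^{-a}$ factors. The paper's proof asserts the coefficientwise convergence without argument and does not supply this step either.
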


\begin{corollary}[Stability under change of Sobolev order]
\label{cor:sobolev-order}
The effective limit and defect expansion are invariant under continuous changes
of the Sobolev order $s$ used to construct the orthogonal polynomial basis, at
fixed Galerkin degree.
\end{corollary}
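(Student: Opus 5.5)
The plan is to reduce Corollary~\ref{cor:sobolev-order} to two ingredients: the finite-degree continuity of Proposition~\ref{prop:sobolev-order-stability}, and the fact that the limit objects of Theorems~\ref{thm:effective-dynamics} and~\ref{thm:defect} never refer to the basis at all. The first observation is structural. The effective Hamiltonian $\mathcal H_{\mathrm{eff}}$, the generator $J_0$, the remainder $\mathcal R_{\mathrm{eff}}$, and the cell operator $\mathcal L_\rho$ with source $\mathcal S$ are all built from $\mathcal H_\varepsilon$, $J_\varepsilon$ and $\mathcal R_\varepsilon$ alone. Consequently the continuum statements are trivially independent of $s$. What remains is the Galerkin-level claim. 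We must show that, at fixed $(N,K)$, the truncated limits $u_0^{(s),N,K}$ and the truncated correctors depend on $s$ only through a perturbation that is continuous in $s$. Moreover, the perturbation should vanish whenever the approximation spaces coincide.

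First key step: compare the approximation spaces. For fixed $N$ and fixed Fourier block $|k|\le K$, Gram--Schmidt applied to a fixed degree-ordered monomial family produces orthogonal polynomials whose span up to index $n$ is exactly the span of the first $n$ monomials, for every positive inner product. Hence $V_\varepsilon^{(s),N,K}$ is independent of $s$, provided truncation is by the degree ordering. Only the basis changes, and it changes by an invertible triangular matrix $T_\varepsilon(s)$. This matrix is continuous in $s$ by Proposition~\ref{prop:sobolev-order-stability}. If the Galerkin projection is $L^2$-orthogonal (or induced by the fixed energy pairing of the PDE), the projected ODE system is the same vector field written in different coordinates. The solutions therefore coincide exactly, and invariance at fixed degree follows. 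If instead the projection uses $\langle\cdot,\cdot\rangle_{\varepsilon,s}$, the projector $\Pi_s$ is continuous in $s$. By finite-dimensional Gronwall on $[0,T]$, using the local Lipschitz bound (H3), the solutions then depend continuously on $s$.

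Second key step: pass to $\varepsilon\to0$ uniformly in $s$ on compact $s$-intervals. The Gram matrices converge as $\varepsilon\to0$ (Section~\ref{subsec:change-sobolev}), and the limiting families $P^{(s)}_{n,0}$ are continuous in $s$. We therefore want joint continuity of $(\varepsilon,s)\mapsto T_\varepsilon(s)$ on $[0,\varepsilon_0]\times[s_1,s_2]$. Combining this with Corollary~\ref{cor:galerkin} applied at each $s$ identifies $\lim_\varepsilon u_\varepsilon^{(s),N,K}$ with the Galerkin approximation of the effective equation~\eqref{eq:effective} in a space independent of $s$. For the defect, we project the cell problem of Theorem~\ref{thm:defect} on the fixed transverse polynomial space. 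Its solvability and solution are again basis-independent, so $u_1$ and $\beta$ are unchanged.

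The main obstacle is the uniformity of the constant $C_{N,\varepsilon}$ in Proposition~\ref{prop:sobolev-order-stability} as $\varepsilon\to0$. A related difficulty is the possible degeneration of the limiting Gram matrix, since transverse modes lose norm in the thin limit. My first attempt would be to work with the renormalized matrices, as in the convergence result of Section~\ref{subsec:sobolev-op}. I would then show that the limiting Gram matrix is positive definite on the truncated space, so that Gram--Schmidt, being a smooth map on positive definite matrices, is uniformly Lipschitz on a compact neighborhood. If positivity fails in the transverse directions, I would fall back on the span-invariance argument above, which avoids inverting any Gram matrix.
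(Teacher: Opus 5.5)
Your proposal is sound at the paper's level of rigor, but it reaches invariance by a different mechanism. The paper's proof is a pure continuity argument. Gram--Schmidt depends continuously on the inner product, so the bases $P^{(s)}_{n,\varepsilon}$, the Galerkin matrices and, by ODE stability, the Galerkin solutions all vary continuously with $s$. It then asserts that passing to the thin limit at fixed $(N,K)$ yields invariance, and it does not treat the defect expansion separately.

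You instead rely on two structural facts. First, $\mathcal H_{\mathrm{eff}}$, $J_0$, $\mathcal R_{\mathrm{eff}}$, $\mathcal L_\rho$ and $\mathcal S$ never involve the basis, so the continuum limit and the corrector are trivially independent of $s$. Second, Gram--Schmidt preserves the flag of spans, so $V^{(s),N,K}_\varepsilon$ does not depend on $s$. For a fixed Galerkin pairing, the truncated solution is then literally the same for every $s$.

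What your route buys:
\begin{itemize}
\item genuine invariance, whereas the paper's chain of implications only delivers continuous dependence before its final, unexplained step;
\item coverage of the defect part;
\item no need for uniformity of $C_{N,\varepsilon}$ or for non-degeneracy of the limiting Gram matrices.
\end{itemize}
The paper's route, in exchange, applies unchanged to $s$-dependent projections, where exact invariance at fixed $\varepsilon>0$ cannot be expected.

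The joint continuity in $(\varepsilon,s)$ that you single out as the main obstacle is unnecessary. Invariance is a pointwise-in-$s$ statement, so it suffices to apply Corollary~\ref{cor:galerkin} at each fixed $s$.

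What your $\Pi_s$ branch still lacks is a reason why the limiting Galerkin system is $s$-independent, since continuity alone does not give invariance. The reason is that the limiting inner products on $\mathbb S^1$ are rotation invariant and purely tangential. Hence the modes $e^{ik\theta}$ are mutually orthogonal for every $s$, and every limiting $\Pi_s$ reduces to the same Fourier truncation. This same remark is the justification missing from the paper's last step.
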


\begin{remark}
Corollaries~\ref{cor:galerkin}--\ref{cor:sobolev-order} show that the effective
dynamics is intrinsic to the geometry of the thin annulus and does not depend on
auxiliary choices made at the level of approximation.
\end{remark}

\subsection{Outline of the proof}

The proof of Theorems~\ref{thm:effective-dynamics} and~\ref{thm:defect} proceeds
as follows:
\begin{itemize}
\item uniform energy bounds yield compactness via radial rigidity;
\item weak convergence identifies the effective equation;
\item strong convergence and asymptotic expansions yield the defect terms;
\item Galerkin stability follows from finite-dimensional continuity.
\end{itemize}

The detailed arguments are developed in the following sections.

%=================================================
\section{Integrable and asymptotically integrable models}
\label{sec:integrable}

This section illustrates the abstract results of
Section~\ref{sec:main-theorem} on classical integrable equations and on
anisotropic models that become integrable only in the thin limit.
Throughout, the geometry of the thin annulus induces a natural separation
between transverse rigidity and effective tangential dynamics.

%-------------------------------------------------
\subsection{Korteweg--de Vries equation}

We first consider the Korteweg--de Vries (KdV) equation posed on the thin annulus:
\begin{equation}
\label{eq:kdv-eps}
\partial_t u_\varepsilon
+
u_\varepsilon\,\partial_\theta u_\varepsilon
+
\partial_\theta^3 u_\varepsilon
=0.
\end{equation}
This equation fits into the Hamiltonian framework~\eqref{eq:main-evolution} with
\[
\mathcal H_\varepsilon(u)
=
\int_{A_\varepsilon}
\left(
\frac12|\partial_\theta u|^2
-
\frac16 u^3
\right)
\,r\,dr\,d\theta,
\qquad
J_\varepsilon=\partial_\theta.
\]

\begin{proposition}[Thin limit for KdV]
Let $u_\varepsilon$ be a sequence of solutions to~\eqref{eq:kdv-eps} satisfying the
uniform energy bound of {\rm(H4)}.
Then $u_\varepsilon\to u_0(\theta,t)$ strongly in
$L^2_{\mathrm{loc}}((-1,1)\times\mathbb S^1\times(0,T))$, where $u_0$ solves the
KdV equation on the circle:
\[
\partial_t u_0
+
u_0\,\partial_\theta u_0
+
\partial_\theta^3 u_0
=0
\qquad\text{on }\mathbb S^1.
\]
\end{proposition}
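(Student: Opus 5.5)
The strategy is to treat the KdV case as a direct instance of Theorem~\ref{thm:effective-dynamics}, with one simplification: equation~\eqref{eq:kdv-eps} contains no radial derivative, so for each fixed $\rho$ it is exactly the one-dimensional KdV equation in $\theta$. First I would record the structure. The Hamiltonian $\mathcal H_\varepsilon$ depends only on $u$ and $\partial_\theta u$, with constant coefficients. Using $J_\varepsilon=\partial_\theta$, we get
\[
J_\varepsilon\,\frac{\delta\mathcal H_\varepsilon}{\delta u}(u)=\partial_\theta\bigl(-\partial_\theta^2 u-\tfrac12 u^2\bigr)=-\partial_\theta^3u-u\,\partial_\theta u ,
\]
so (H1)--(H2) hold with $m=1$ and $\mathcal R_\varepsilon=0$. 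The weight $r=1+\varepsilon\rho$ does not depend on $\theta$, so it does not change the variational derivative in $\theta$. The effective Hamiltonian of Theorem~\ref{thm:effective-dynamics} is $\int_{\mathbb S^1}(\frac12|\partial_\theta u|^2-\frac16u^3)\,d\theta$, up to the constant factor $2$ from $\int_{-1}^1 d\rho$. This factor rescales $\mathcal H$ and $J$ consistently, so the limiting equation is exactly KdV on the circle.

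Next, the uniform bound. The functional $\mathcal H_\varepsilon$ is conserved, and so is the mass $\int u^2\,r\,dr\,d\theta$, since $\partial_\theta$ is skew-adjoint for each $\rho$. The cubic term is controlled fibrewise by Gagliardo--Nirenberg on $\mathbb S^1$,
\[
\|u\|_{L^3}^3\lesssim\|u\|_{L^2}^{5/2}\|\partial_\theta u\|_{L^2}^{1/2}+\|u\|_{L^2}^3 .
\]
Together with mass conservation, this gives $\sup_t\|\partial_\theta u_\varepsilon(t)\|\le C$. I would do this at the level of each $\rho$-fibre, where the mass and energy are individually conserved, and then integrate in $\rho$. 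One caveat is that the cubic bound is not linear in the fibre quantities. I would handle it by working with the renormalized $L^2_\rho$ norms and using Hölder in $\rho$, or, more simply, by applying the bound fibrewise and integrating.

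The radial part of the $\|\cdot\|_{\varepsilon,1}$ norm is the main obstacle. The equation never controls $\partial_\rho u_\varepsilon$ dynamically: each fibre evolves independently. So radial rigidity must come from the data. I would differentiate~\eqref{eq:kdv-eps} in $\rho$; then $w=\partial_\rho u_\varepsilon$ solves the linearized KdV equation
\[
\partial_t w+\partial_\theta(u_\varepsilon w)+\partial_\theta^3 w=0 .
\]
An $L^2$ energy estimate gives
\[
\tfrac{d}{dt}\|w\|^2\le\|\partial_\theta u_\varepsilon\|_{L^\infty}\|w\|^2 .
\]
Gronwall then propagates the smallness $\|\partial_\rho u^0_\varepsilon\|\to0$ implied by (H4), using the factor $\varepsilon^{2a-1}$ with $a=1$ in the renormalized norm. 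This step requires $\partial_\theta u_\varepsilon\in L^1_tL^\infty$ uniformly. If that is not available at this regularity, I would instead work with the $H^2$ energy of KdV, or argue via an $L^2$-difference estimate between fibres, which gives Lipschitz dependence of KdV on the data in weaker norms by Bourgain's theory. This produces the conclusion of Proposition~\ref{prop:compactness}.

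Finally, I would identify the limit. Time compactness follows from the equation: $\partial_t u_\varepsilon$ is bounded in $L^\infty_t H^{-2}_\theta$. Aubin--Lions, combined with the radial rigidity of Proposition~\ref{prop:radial-rigidity}, gives strong $L^2_{\mathrm{loc}}$ convergence to a limit $u_0(\theta,t)$. I would then pass to the limit in the weak formulation tested against $\varphi(\rho,\theta,t)$. The linear terms converge weakly. For the nonlinearity, written as $\frac12\partial_\theta(u_\varepsilon^2)$, strong $L^2$ convergence gives $u_\varepsilon^2\to u_0^2$ in $L^1_{\mathrm{loc}}$. Averaging in $\rho$ then shows that $u_0$ is a distributional solution of KdV on $\mathbb S^1$.
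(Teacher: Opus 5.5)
Your argument depends on the claim that (H4) forces $\|\partial_\rho u^0_\varepsilon\|\to0$. This is false for $m=1$. In rescaled variables the $(a,b)=(1,0)$ term of the renormalized norm is
\[
\varepsilon\int_{A_\varepsilon}|\partial_r u|^2\,r\,dr\,d\theta
=\int_{-1}^1\!\!\int_{\mathbb S^1}|\partial_\rho u|^2\,(1+\varepsilon\rho)\,d\rho\,d\theta ,
\]
because the prefactor $\varepsilon^{2a-(2m-1)}=\varepsilon$ is cancelled exactly by $\partial_r^2=\varepsilon^{-2}\partial_\rho^2$ and the Jacobian $\varepsilon(1+\varepsilon\rho)$. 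In fact $\|\cdot\|_{\varepsilon,1}$ is simply the weighted $H^1((-1,1)\times\mathbb S^1)$ norm, as the appendix formula shows. So (H4) only bounds $\partial_\rho u^0_\varepsilon$ in $L^2(d\rho\,d\theta)$, and that is the measure in which both your Poincar\'e step and the conclusion live. It is true that $\|\partial_\rho u^0_\varepsilon\|_{L^2(A_\varepsilon)}=O(\varepsilon^{1/2})$, but $u^0_\varepsilon$ itself is equally small in that norm; the smallness only reflects the thinness of the domain and disappears on rescaling. Your Gronwall step can only propagate smallness the data already have, and here they have none.

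The gap cannot be closed, because the statement fails under (H4) alone. The fibres decouple, so $u_\varepsilon(t,\rho,\theta)=\rho$ is an exact solution of \eqref{eq:kdv-eps} with $\|u_\varepsilon\|_{\varepsilon,1}^2=16\pi/3$ for every $\varepsilon$, and it does not converge to a function of $(\theta,t)$. Likewise $u^0_\varepsilon=\rho+\cos\theta$ evolves into the sheared solution $\rho+v(\theta-\rho t,t)$, where $v$ is the KdV solution on $\mathbb S^1$ with data $\cos\theta$. Taking $m\ge2$ in (H4) does not help: the $(0,0)$ term then gives $\|u^0_\varepsilon\|_{L^2(d\rho\,d\theta)}=O(\varepsilon^{m-1})$, and fibrewise mass conservation forces $u_0\equiv0$.

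To reach the stated conclusion you must add a well-preparedness hypothesis, e.g.\ $\|\partial_\rho u^0_\varepsilon\|_{L^2((-1,1)\times\mathbb S^1)}\to0$. You also need enough tangential regularity of the data, uniform in $\rho$, to justify your Gronwall step. That step needs $\partial_\theta u_\varepsilon\in L^1_tL^\infty$, which $H^1$ in $(\rho,\theta)$ does not give, since fibre traces are only $H^{1/2}$ in $\theta$; the same applies to your fibre-stability alternative. With such a hypothesis your mechanism is the right one: transverse smallness propagated through the linearized KdV equation, or through Lipschitz stability of the flow between fibres. Your observation that the equation never controls $\partial_\rho u_\varepsilon$ is exactly the key point.

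The paper's own route breaks at the same place. It reads the proposition off Theorem~\ref{thm:effective-dynamics} via Proposition~\ref{prop:compactness}, whose Step~1 gives $\int|\partial_\rho u_\varepsilon|^2\le C\varepsilon^{2m-2}$ and asserts this tends to zero ``since $m\ge1$''; that is false precisely for $m=1$. That route also presupposes a coercivity bound the KdV Hamiltonian does not satisfy, since it has no $\partial_r$ term and an indefinite cubic term. Your use of fibrewise mass and energy conservation is the sound way to obtain the tangential bounds.
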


\begin{remark}
The KdV equation is purely tangential.
Consequently, no non-trivial transverse corrector appears at first order:
the dynamics is exactly integrable at leading order.
\end{remark}

%-------------------------------------------------
\subsection{Modified KdV and cubic nonlinearities}

The modified KdV (mKdV) equation
\[
\partial_t u_\varepsilon
+
u_\varepsilon^2\,\partial_\theta u_\varepsilon
+
\partial_\theta^3 u_\varepsilon
=0
\]
is treated similarly.
The Hamiltonian involves a quartic potential, but the coercivity and polynomial
growth assumptions remain satisfied.

\begin{proposition}
Under the assumptions of Section~\ref{sec:main-theorem}, solutions of the mKdV
equation on $A_\varepsilon$ converge to solutions of the mKdV equation on
$\mathbb S^1$.
\end{proposition}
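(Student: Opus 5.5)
The plan is to treat the mKdV statement as a direct instance of Theorem~\ref{thm:effective-dynamics}, in the same way as the KdV proposition. First I will fix the Hamiltonian structure. Take
\[
\mathcal H_\varepsilon(u)=\int_{A_\varepsilon}\Bigl(\tfrac12|\partial_\theta u|^2-\tfrac1{12}u^4\Bigr)\,r\,dr\,d\theta,
\qquad J_\varepsilon=\partial_\theta .
\]
Then $J_\varepsilon\,\delta\mathcal H_\varepsilon/\delta u=\partial_\theta(-\partial_\theta^2u-\tfrac13u^3)=-\partial_\theta^3u-u^2\partial_\theta u$, which is the mKdV flow (with $\mathcal R_\varepsilon=0$). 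The generator $J_\varepsilon=\partial_\theta$ acts only in $\theta$ and is skew-adjoint on $L^2(\mathbb S^1)$, so (H2) holds. (H3) is trivial, and (H4) is part of the assumptions of Section~\ref{sec:main-theorem}.

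The substantive point is (H1), in particular coercivity, because the quartic term enters with the focusing sign. I would handle this in one of two ways.
\begin{itemize}
\item Use the defocusing sign convention, for which $\mathcal H_\varepsilon\ge C\|\partial_\theta u\|^2$ holds trivially.
\item In the focusing case, combine the conserved $L^2$ mass $\int u^2\,r\,dr\,d\theta$ with a Gagliardo--Nirenberg bound.
\end{itemize}
For the Gagliardo--Nirenberg step, apply the one-dimensional inequality $\|v\|_{L^4(\mathbb S^1)}^4\le C\|v\|_{L^2}^3\|v\|_{H^1}$ fiberwise in $\rho$. Then integrate in $\rho$ using $\sup_\rho$ control from $\partial_\rho u$; the $\rho$-integrability is supplied by the $\|\cdot\|_{\varepsilon,1}$ norm, and the weights $(1+\varepsilon\rho)$ are uniformly comparable to $1$. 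This gives $\mathcal H_\varepsilon(u)+M\|u\|_{L^2}^{6}\ge c\|u\|_{\varepsilon,1}^2-C$ uniformly in $\varepsilon$. Together with conservation of $\mathcal H_\varepsilon$ and of the mass along the flow, this yields the uniform bound $\sup_t\|u_\varepsilon(t)\|_{\varepsilon,m}\le C(T)$ required before Proposition~\ref{prop:compactness}. I expect this fiberwise interpolation, and the bookkeeping of $\varepsilon$-weights in the renormalized norm, to be the main obstacle.

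With the bound in hand, Proposition~\ref{prop:compactness} gives strong $L^2_{\mathrm{loc}}$ convergence of $u_\varepsilon$ to some $u_0(\theta,t)$ along a subsequence, with $\partial_\rho u_\varepsilon\to0$. To identify the limit equation, test against $\varphi(\theta,t)$ and average in $\rho$. The linear term $\partial_\theta^3u_\varepsilon$ passes to the limit by integration by parts. For the cubic term, write $u^2\partial_\theta u=\tfrac13\partial_\theta(u^3)$, so it suffices to have $u_\varepsilon^3\to u_0^3$ in $L^1_{\mathrm{loc}}$. This follows from strong $L^2$ convergence together with a uniform $L^4$ (hence $L^3$) bound, obtained from the energy bound and the same Gagliardo--Nirenberg estimate, by interpolation. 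Because the Hamiltonian contains no transverse derivatives and no oscillatory coefficients, Theorem~\ref{thm:effective-dynamics} gives $\mathcal H_{\mathrm{eff}}=2\pi\int_{\mathbb S^1}(\tfrac12|\partial_\theta u|^2-\tfrac1{12}u^4)\,d\theta$ up to the normalizing factor, and $J_0=\partial_\theta$. The limit $u_0$ therefore solves mKdV on $\mathbb S^1$.

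Finally, I will remove the restriction to a subsequence. If the limiting initial datum is given, uniqueness of $H^1$ (or smoother) solutions to mKdV on the circle (Kato/Bourgain theory) shows that every subsequence has the same limit, so the whole family converges.
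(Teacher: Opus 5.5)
Your route is the paper's own route: the paper's only argument is that mKdV is treated like KdV through Theorem~\ref{thm:effective-dynamics}, with coercivity asserted. So citing that theorem cannot close the gaps below.

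\textbf{Radial rigidity (the decisive gap).} The step you lean on most, radial rigidity from Proposition~\ref{prop:compactness}, fails here.
\begin{itemize}
\item In rescaled variables, every term of $\|u\|_{\varepsilon,m}^2$ carries the same factor $\varepsilon^{2-2m}$. For $m=1$ (the order of the mKdV Hamiltonian, and the norm you use), Step~1 of that proof gives only $\int\!\!\int|\partial_\rho u_\varepsilon|^2\le C\varepsilon^{2m-2}=C$, not $o(1)$.
\item Taking $m\ge2$ does not help: a uniform bound would then force $u_\varepsilon\to0$ in rescaled $L^2$.
\item The conclusion is actually false for data obeying (H4) and the uniform-in-time bound. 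The rescaled equation contains neither $\partial_\rho$ nor $\varepsilon$, so it is a family of independent circle mKdV flows $S(t)$ on the fibres $\{\rho=\mathrm{const}\}$. For smooth $\phi$, the datum $u^0_\varepsilon=(1+\rho)\phi(\theta)$ satisfies all these bounds uniformly. Yet $u_\varepsilon=S(t)[(1+\rho)\phi]$ is $\varepsilon$-independent in rescaled variables and stays $\rho$-dependent, so it has no $\rho$-independent limit.
\item Its transverse average does not solve mKdV either: at $t=0$ one gets $\partial_t\bar u=-2\phi^2\partial_\theta\phi-\partial_\theta^3\phi$.
\end{itemize}
You must add a hypothesis such as $u^0_\varepsilon\to u^0(\theta)$ in $L^2_\rho H^1_\theta$. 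With it, the proof is direct:
\begin{itemize}
\item on fibres where $u^0_\varepsilon(\rho,\cdot)$ is $H^1$-close to $u^0$, use continuity of $S(t)$ on $H^1(\mathbb S^1)$;
\item on the remaining set of small measure, use fibrewise $L^2$ conservation.
\end{itemize}
This gives $u_\varepsilon(t)\to S(t)u^0$ in $L^2((-1,1)\times\mathbb S^1)$ for the whole family, with no compactness, no subsequence and no coercivity needed.

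\textbf{Coercivity.} This step also fails, whichever sign you take.
\begin{itemize}
\item $\|u\|_{\varepsilon,1}^2$ contains $\int\!\!\int|\partial_\rho u|^2$, but neither $\mathcal H_\varepsilon$ nor the mass involves $\partial_\rho u$. For example, $u=\phi(\theta)\sin(n\pi\rho)$ keeps $\mathcal H_\varepsilon+M\|u\|_{L^2}^6$ bounded while $\|u\|_{\varepsilon,1}\to\infty$. Your ``$\sup_\rho$ control from $\partial_\rho u$'' is therefore circular.
\item Your defocusing option has the same defect: $\mathcal H_\varepsilon\ge C\|\partial_\theta u\|^2$ is not coercivity in $\|\cdot\|_{\varepsilon,1}$. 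It also changes the equation, since $J=\partial_\theta$ and the displayed mKdV force the focusing Hamiltonian.
\item In the focusing case, concentration in $\rho$, $u=\delta^{-1/2}\chi(\rho/\delta)\phi(\theta)$, gives $\int u^4\sim\delta^{-1}$. So $\mathcal H_\varepsilon\to-\infty$ at fixed mass and fixed $\|\partial_\theta u\|$.
\end{itemize}
The correct tool is again the fibrewise structure. Each fibre conserves its own mass and energy, so the one-dimensional Gagliardo--Nirenberg inequality bounds $\|\partial_\theta u(\rho,\cdot,t)\|_{L^2}$ fibre by fibre. Integrating in $\rho$ then only needs $\sup_\rho$ control of the initial fibre masses, which (H4) supplies via a one-dimensional Sobolev bound in $\rho$. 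No conservation law, however, controls $\partial_\rho u(t)$.
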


\begin{remark}
More generally, polynomial nonlinearities in $u$ and $\partial_\theta u$ can be
treated in the same way, provided the Hamiltonian remains coercive with respect
to the renormalized Sobolev norm.
\end{remark}

%-------------------------------------------------
\subsection{Nonlinear Schr\"odinger equation}

We now consider the cubic nonlinear Schr\"odinger (NLS) equation
\begin{equation}
\label{eq:nls}
i\partial_t u_\varepsilon
+
\partial_\theta^2 u_\varepsilon
+
|u_\varepsilon|^2u_\varepsilon
=0,
\end{equation}
posed on $A_\varepsilon$, where $u_\varepsilon$ is complex-valued.

This equation is Hamiltonian with
\[
\mathcal H_\varepsilon(u)
=
\int_{A_\varepsilon}
\left(
|\partial_\theta u|^2
+
\frac12|u|^4
\right)
\,r\,dr\,d\theta,
\qquad
J_\varepsilon=i.
\]

\begin{proposition}[Thin limit for NLS]
Let $u_\varepsilon$ be a family of solutions to~\eqref{eq:nls} uniformly bounded
in the renormalized Sobolev norm.
Then $u_\varepsilon\to u_0(\theta,t)$ strongly, where $u_0$ solves the cubic NLS
equation on $\mathbb S^1$.
\end{proposition}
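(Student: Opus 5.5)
The plan is to specialise the abstract scheme of Section~\ref{sec:main-theorem}: uniform energy bounds, then compactness, then passage to the limit in the weak formulation. The equation~\eqref{eq:nls} has the form~\eqref{eq:main-evolution} with $J_\varepsilon=i$ (skew-adjoint, acting only pointwise in $\theta$, so (H2) holds), $\mathcal R_\varepsilon=0$ (so (H3) is trivial), and Hamiltonian $\mathcal H_\varepsilon(u)=\int_{A_\varepsilon}(|\partial_\theta u|^2+\tfrac12|u|^4)\,r\,dr\,d\theta$.

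\textbf{Step 1: uniform bounds.} Mass $\int|u_\varepsilon|^2$ and $\mathcal H_\varepsilon$ are conserved: multiply~\eqref{eq:nls} by $\bar u_\varepsilon$, respectively by $\partial_t\bar u_\varepsilon$, and take real and imaginary parts. The angular integration by parts produces no boundary terms, because the radial variable is only a parameter. Both quantities are coercive, since the quartic term is nonnegative. Hence, with the renormalising power $\varepsilon^{-1}$ for $m=1$ and the hypothesis of the statement, $\partial_\theta u_\varepsilon$ and $u_\varepsilon$ are bounded in $L^\infty(0,T;L^2(d\rho\,d\theta))$ uniformly in $\varepsilon$. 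The same bound holds for $\partial_\rho u_\varepsilon$, which is controlled by the renormalised norm as long as the uniform bound persists in time. Since the equation contains no $\rho$-derivative, I would propagate the $\partial_\rho$ bound by differentiating~\eqref{eq:nls} in $\rho$. This gives a linearised NLS for $v=\partial_\rho u_\varepsilon$ with potential terms $2|u|^2v+u^2\bar v$. A Gronwall estimate, using the one-dimensional embedding $H^1_\theta\subset L^\infty_\theta$ (uniform in $\rho$ after a further Sobolev step in $\rho$), gives $\sup_t\|\partial_\rho u_\varepsilon\|\le C(T)$.

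\textbf{Step 2: compactness.} Proposition~\ref{prop:compactness}, i.e.\ radial rigidity via Proposition~\ref{prop:radial-rigidity}, yields a subsequence with $u_\varepsilon\to u_0(\theta,t)$ in $L^2_{\rm loc}$. For strong convergence in time, I use the equation itself: $\partial_t u_\varepsilon=i\partial_\theta^2u_\varepsilon+i|u_\varepsilon|^2u_\varepsilon$ is bounded in $L^\infty(0,T;H^{-1}_\theta L^2_\rho)$, because $|u|^2u$ is bounded in $L^2$ by the $L^\infty$ bound from Step 1. Aubin--Lions then gives strong convergence in $L^2_{\rm loc}((-1,1)\times\mathbb S^1\times(0,T))$ together with the strict $\rho$-independence of the limit.

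\textbf{Step 3: identification.} I test against $\varphi(\theta,t)\chi(\rho)$. The linear terms pass to the limit weakly. The cubic term converges in $L^1_{\rm loc}$ by strong $L^2$ convergence combined with the uniform $L^\infty$ (or $L^6$) bound. Choosing $\chi$ with $\int\chi=1$ and using $\rho$-independence of $u_0$ gives the cubic NLS on $\mathbb S^1$ in distributions. Since the limit is $H^1$ and the one-dimensional cubic NLS is well posed in $H^1(\mathbb S^1)$ (even in $L^2$, by Bourgain), $u_0$ is unique once the limit initial datum is identified. This requires the averages $\bar u^0_\varepsilon$ to converge to some $u_0^0$, which I either assume or extract. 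Uniqueness then makes the whole family converge, not just a subsequence.

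\textbf{Main obstacle.} The delicate point is Step 1's $\rho$-control. The Hamiltonian contains no radial derivative, so nothing in NLS penalises radial oscillation. Radial rigidity therefore only holds if the renormalised bound is assumed along the whole trajectory, or propagated as above. I expect the Gronwall argument on $\partial_\rho u_\varepsilon$ to be the step needing care, since the $\varepsilon$-weights must be tracked. If it fails, I would instead assume the uniform bound for all $t\in[0,T]$, as the statement's phrase ``uniformly bounded'' permits.
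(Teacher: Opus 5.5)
There is a genuine gap between your Step 1 and your Step 2. In the rescaled variables every term of the renormalized norm carries the same power of $\varepsilon$: $\|u\|_{\varepsilon,m}^2=\varepsilon^{2-2m}\sum_{a+b\le m}\int_{-1}^1\int_{\mathbb S^1}|\partial_\rho^a\partial_\theta^b u|^2(1+\varepsilon\rho)\,d\rho\,d\theta$. For $m=1$, which is the order of the NLS Hamiltonian, $\|\cdot\|_{\varepsilon,1}$ is therefore just a weighted $H^1((-1,1)\times\mathbb S^1)$ norm. It bounds $\partial_\rho u_\varepsilon$ in $L^2(d\rho\,d\theta)$ but does not make it small, and that boundedness is all your Step 1 obtains.

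Radial rigidity needs $\partial_\rho u_\varepsilon\to0$, so invoking Proposition~\ref{prop:radial-rigidity} or Proposition~\ref{prop:compactness} is unjustified. Aubin--Lions only gives a strong limit $U(\rho,\theta,t)$, with no reason to be $\rho$-independent. The paper's compactness proof makes the same slip: it derives $\int|\partial_\rho u_\varepsilon|^2\le C\varepsilon^{2m-2}$ and asserts this tends to $0$ because $m\ge1$, which fails precisely at $m=1$. Following the paper therefore does not close the step.

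In fact the statement cannot be proved as written. Equation \eqref{eq:nls} contains no radial derivative, so it is a family of decoupled one-dimensional NLS equations indexed by $\rho$. For smooth real nonconstant $a$, the function $u_\varepsilon(\rho,\theta,t)=a(\rho)e^{ia(\rho)^2t}$ solves \eqref{eq:nls} for every $\varepsilon$, is bounded in $\|\cdot\|_{\varepsilon,1}$ uniformly on $[0,T]$, and has a $\rho$-dependent limit. Its transverse average does not solve NLS on $\mathbb S^1$ either: it is $\theta$-independent, but for $a(\rho)=1+\rho$ its modulus is not constant in time. Your ``main obstacle'' paragraph correctly sees that nothing in NLS penalises radial oscillation. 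It then wrongly concludes that assuming the renormalized bound along the whole trajectory would suffice.

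The missing ingredient is a hypothesis on the data, for instance $\|\partial_\rho u_\varepsilon^0\|_{L^2(d\rho\,d\theta)}\to0$. With it, your idea of differentiating in $\rho$ is the right one: $v=\partial_\rho u_\varepsilon$ solves a linear equation, so a stability estimate propagates smallness rather than mere boundedness. The cleanest way to carry this out is fibre by fibre. Compare $u_\varepsilon(\rho,\cdot,t)$ with the one-dimensional NLS evolution of $\bar u_\varepsilon^0$, using Lipschitz dependence of the cubic NLS flow on bounded subsets of $L^2(\mathbb S^1)$ over $[0,T]$. The fibre masses are bounded uniformly in $\rho$ because $H^1_\rho L^2_\theta\subset L^\infty_\rho L^2_\theta$. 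Then integrate in $\rho$ and apply the transverse Poincar\'e inequality.

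Two smaller points also need fixing.
\begin{enumerate}
\item For the focusing equation as written, the conserved energy is $\int(|\partial_\theta u|^2-\tfrac12|u|^4)$, not the paper's version with $+\tfrac12|u|^4$. Coercivity must therefore come from mass conservation plus the one-dimensional Gagliardo--Nirenberg inequality, not from the sign of the quartic term.
\item An $L^\infty_{\rho,\theta}$ bound would require control of $\partial_\rho\partial_\theta u_\varepsilon$, which $\|\cdot\|_{\varepsilon,1}$ does not give. For the cubic term in Steps 2--3, use the two-dimensional embedding $H^1\subset L^6$ instead.
\end{enumerate}
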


\begin{remark}
As in the KdV case, the absence of transverse derivatives in the Hamiltonian
precludes the appearance of non-trivial correctors at leading order.
\end{remark}

%-------------------------------------------------
\subsection{Sine--Gordon equation}

The sine--Gordon equation
\[
\partial_t^2 u_\varepsilon
-
\partial_\theta^2 u_\varepsilon
+
\sin(u_\varepsilon)
=0
\]
can be rewritten as a first-order Hamiltonian system by introducing
$v_\varepsilon=\partial_t u_\varepsilon$.
The resulting system fits the abstract framework with a Hamiltonian involving
$\cos(u)$.

\begin{proposition}
Under the assumptions of Section~\ref{sec:main-theorem}, solutions of the
sine--Gordon equation on $A_\varepsilon$ converge to solutions of the
sine--Gordon equation on $\mathbb S^1$.
\end{proposition}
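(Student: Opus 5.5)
We treat the sine--Gordon equation as an instance of Theorem~\ref{thm:effective-dynamics}. First we rewrite it as a first-order Hamiltonian system for $U_\varepsilon=(u_\varepsilon,v_\varepsilon)$ with $v_\varepsilon=\partial_t u_\varepsilon$:
\[
\partial_t\begin{pmatrix}u_\varepsilon\\ v_\varepsilon\end{pmatrix}
=
\begin{pmatrix}0&1\\-1&0\end{pmatrix}
\frac{\delta\mathcal H_\varepsilon}{\delta(u,v)},
\qquad
\mathcal H_\varepsilon(u,v)=\int_{A_\varepsilon}\Bigl(\tfrac12 v^2+\tfrac12|\partial_\theta u|^2+1-\cos u\Bigr)\,r\,dr\,d\theta .
\]
The symplectic matrix $J_\varepsilon=J_0$ does not depend on $\varepsilon$ and acts pointwise in $\theta$, so (H2) holds with $J$ skew-adjoint on $L^2(\mathbb S^1)^2$. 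We take $\mathcal R_\varepsilon=0$, so (H3) is trivial.

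The only structural point is (H1). The density $1-\cos u$ is not polynomial, but it is smooth and bounded, and all its derivatives are bounded. Hence the polynomial-growth condition holds, and $\delta\mathcal H/\delta u=-\partial_\theta^2u+\sin u$ is globally Lipschitz in $u$. Coercivity holds with $m=1$ in the tangential variable, because $1-\cos u\ge0$. It is understood in the sense used for the other purely tangential models (KdV, NLS), where the renormalized norm is controlled by the conserved energy together with the assumed uniform bound (H4).

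With this understanding, we proceed in three steps.

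\emph{Step 1.} Conservation of $\mathcal H_\varepsilon$ along the flow, together with (H4), gives $\sup_{t\le T}\|U_\varepsilon(t)\|_{\varepsilon,1}\le C$. For the $u$-component one also needs $\|u_\varepsilon(t)\|_{L^2}\le\|u_\varepsilon^0\|+t\sup\|v_\varepsilon\|$.

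\emph{Step 2.} We apply Proposition~\ref{prop:compactness} to obtain $u_\varepsilon\to u_0(\theta,t)$ strongly in $L^2_{\mathrm{loc}}$ and $v_\varepsilon\rightharpoonup v_0$. Equicontinuity in time, needed for strong convergence in $t$, comes from bounds on $\partial_t u_\varepsilon=v_\varepsilon$ and on $\partial_t v_\varepsilon$ in a negative Sobolev space in $\theta$. We then invoke Aubin--Lions.

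\emph{Step 3.} We pass to the limit in the weak formulation. We test against $\varphi(\theta,t)$, average in $\rho$, and use the transverse average $\bar u_\varepsilon\to u_0$. The linear terms $\partial_t^2u$ and $\partial_\theta^2u$ pass to the limit by weak convergence. The nonlinearity is handled by $|\sin u_\varepsilon-\sin u_0|\le|u_\varepsilon-u_0|$, so strong $L^2_{\mathrm{loc}}$ convergence suffices.

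We expect the main obstacle to be Step~1, namely coercivity. Because $\cos$ is bounded, the energy controls $v$ and $\partial_\theta u$ but neither $u$ itself nor any transverse derivative $\partial_\rho u$. Radial rigidity therefore cannot come from the Hamiltonian. We plan to obtain it from the structure of the equation instead: the equation contains no $\partial_\rho$, so differentiating in $\rho$ yields the linearized equation $\partial_t^2 w-\partial_\theta^2 w+\cos(u_\varepsilon)w=0$ for $w=\partial_\rho u_\varepsilon$. A Gronwall energy estimate on this equation propagates the initial bound on $\varepsilon^{-1}\|\partial_\rho u_\varepsilon^0\|$ contained in (H4) to all $t\le T$, with a constant independent of $\varepsilon$ because $|\cos u|\le1$. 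This closes the $\|\cdot\|_{\varepsilon,1}$ bound needed in Step~2 and completes the argument.
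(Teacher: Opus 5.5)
There is a genuine gap, and it sits in the step you flagged yourself. You propagate ``the initial bound on $\varepsilon^{-1}\|\partial_\rho u_\varepsilon^0\|$ contained in (H4)'', but for $m=1$ (H4) contains no such bound. In rescaled variables every term of $\|\cdot\|_{\varepsilon,m}$ picks up the same factor $\varepsilon^{2a-(2m-1)}\cdot\varepsilon^{-2a}\cdot\varepsilon=\varepsilon^{2-2m}$, so
\[
\|u\|_{\varepsilon,1}^2=\int_{-1}^1\!\!\int_{\mathbb S^1}\bigl(|u|^2+|\partial_\theta u|^2+|\partial_\rho u|^2\bigr)(1+\varepsilon\rho)\,d\theta\,d\rho .
\]
Thus (H4) only says that $\partial_\rho u_\varepsilon^0$ is $O(1)$ in $L^2(d\rho\,d\theta)$. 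Your Gronwall estimate is $\varepsilon$-uniform, since $|\cos u|\le1$, but it turns an $O(1)$ bound into an $O(1)$ bound. That is enough for Aubin--Lions in $(\rho,\theta,t)$, but it gives no rigidity.

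Nothing else can supply the missing smallness. As you note, the equation contains no $\partial_\rho$, so each slice $\{\rho=\mathrm{const}\}$ evolves by the one-dimensional sine--Gordon flow on its own and conserves its slice energy $\int_{\mathbb S^1}(\tfrac12 v^2+\tfrac12|\partial_\theta u|^2+1-\cos u)\,d\theta$. Take smooth $\varepsilon$-independent data whose slice energy depends on $\rho$. They satisfy (H4), and they produce an $\varepsilon$-independent solution that stays $\rho$-dependent for all $t$. So $u_\varepsilon\to u_0(\theta,t)$ fails, and generically $\bar u_\varepsilon$ does not solve sine--Gordon either, since $\overline{\sin u}\neq\sin\bar u$. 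With the hypotheses as you read them the conclusion is false, so no estimate can close your argument. Taking $m\ge2$ is no way out. It does give $\|\partial_\rho u_\varepsilon^0\|_{L^2(d\rho\,d\theta)}=O(\varepsilon^{m-1})$, but by the same scaling it also forces $\|u_\varepsilon^0\|_{L^2(d\rho\,d\theta)}=O(\varepsilon^{m-1})$, which makes the statement degenerate.

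The paper does not fill this hole either. It places sine--Gordon in the abstract framework of Theorem~\ref{thm:effective-dynamics}, with the remark that $\cos u$ does not affect compactness or rigidity. The rigidity step in the proof of Proposition~\ref{prop:compactness} bounds $\int\!\int|\partial_\rho u_\varepsilon|^2$ by $C\varepsilon^{2m-2}$ and asserts this tends to $0$ for $m\ge1$, which fails at $m=1$. So citing that proposition in your Step~2 inherits the same hole. Your diagnosis is correct and sharper than the paper's remark: coercivity fails, and rigidity must come from the data, not the Hamiltonian.

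To get a true statement, add a well-preparedness hypothesis, for example $\partial_\rho u_\varepsilon^0\to0$ and $\partial_\rho v_\varepsilon^0\to0$ in $L^2(d\rho\,d\theta)$. Then run your Gronwall estimate for $w=\partial_\rho u_\varepsilon$ at the $L^2\times H^{-1}_\theta$ level. The $H^1\times L^2$ energy would need $\partial_\theta\partial_\rho u_\varepsilon^0$, which an $m=1$ bound does not provide. Equivalently, compare each slice with the sine--Gordon solution issued from the transverse averages, using that $\sin$ is $1$-Lipschitz. Either way you get $\sup_{t\le T}\|u_\varepsilon-\bar u_\varepsilon\|_{L^2(d\rho\,d\theta)}\to0$, after which your Steps~2 and~3 go through.
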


\begin{remark}
The periodic potential $\cos(u)$ does not affect the compactness or rigidity
arguments, as it does not involve derivatives.
\end{remark}

%-------------------------------------------------
\subsection{Zakharov--Kuznetsov equation and asymptotic integrability}

We finally consider the Zakharov--Kuznetsov (ZK) equation on the thin annulus,
written in the anisotropic form
\begin{equation}
\label{eq:zk}
\partial_t u_\varepsilon
+
u_\varepsilon\,\partial_\theta u_\varepsilon
+
\partial_\theta^3 u_\varepsilon
+
\partial_\theta\partial_r^2 u_\varepsilon
=0.
\end{equation}
In rescaled variables, the last term reads
\[
\partial_\theta\partial_r^2 u_\varepsilon
=
\varepsilon^{-2}\partial_\theta\partial_\rho^2 u_\varepsilon,
\]
and is dominant as $\varepsilon\to0$.

\begin{proposition}[Asymptotic integrability of ZK]
Let $u_\varepsilon$ be a family of solutions to~\eqref{eq:zk} satisfying the
uniform energy bounds.
Then:
\begin{itemize}
\item $u_\varepsilon\to u_0(\theta,t)$ strongly, where $u_0$ solves the KdV
equation on $\mathbb S^1$;
\item the first-order transverse corrector $u_1$ solves a non-trivial cell
problem encoding the anisotropic dispersion of the ZK equation.
\end{itemize}
\end{proposition}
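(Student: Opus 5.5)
The plan is to treat ZK as an instance of Theorem~\ref{thm:effective-dynamics} and Theorem~\ref{thm:defect}, with the transverse term handled by radial rigidity. First, I rewrite~\eqref{eq:zk} in the rescaled variables $(\rho,\theta)$:
\[
\partial_t u_\varepsilon + u_\varepsilon\partial_\theta u_\varepsilon + \partial_\theta^3 u_\varepsilon + \varepsilon^{-2}\partial_\theta\partial_\rho^2 u_\varepsilon = 0 .
\]
This equation is Hamiltonian with $J_\varepsilon=\partial_\theta$ and
\[
\mathcal H_\varepsilon(u)=\int_{A_\varepsilon}\Bigl(\tfrac12|\partial_\theta u|^2+\tfrac12|\partial_r u|^2-\tfrac16u^3\Bigr)\,r\,dr\,d\theta,
\]
which is admissible with $m=1$. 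Coercivity modulo the cubic term follows from the conserved $L^2$ mass together with Gagliardo--Nirenberg, in the same way as for KdV. Conservation of $\mathcal H_\varepsilon$ and of the mass then gives (H4) uniformly in $\varepsilon$. In particular, the energy bound gives $\varepsilon^{-1}\|\partial_\rho u_\varepsilon\|_{L^2}\le C$, i.e.\ $\|\partial_\rho u_\varepsilon\|_{L^2}=O(\varepsilon)$. This is stronger than Proposition~\ref{prop:radial-rigidity} and is what makes the scaling $\beta=1$ (or $2$) appear.

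\textbf{Limit equation.} Proposition~\ref{prop:compactness} gives $u_\varepsilon\to u_0(\theta,t)$ strongly in $L^2_{\mathrm{loc}}$. To identify $u_0$, I test the equation against functions $\varphi(\theta,t)$ independent of $\rho$ and average in $\rho$. The dangerous term becomes $\varepsilon^{-2}\int\partial_\theta\varphi\,\partial_\rho^2u_\varepsilon\,d\rho$, which reduces to the boundary flux $\varepsilon^{-2}[\partial_\rho u_\varepsilon]_{\rho=-1}^{1}$. Boundary conditions must therefore be fixed at this point. I will take Neumann conditions $\partial_r u=0$ at $r=1\pm\varepsilon$, which are natural for the variational formulation; with them the flux vanishes identically. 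The weighted measure $(1+\varepsilon\rho)$ contributes only $O(\varepsilon)$ errors.

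The remaining terms pass to the limit by the following arguments:
\begin{itemize}
\item the linear terms converge weakly;
\item $u_\varepsilon\partial_\theta u_\varepsilon=\tfrac12\partial_\theta(u_\varepsilon^2)$ converges by the strong $L^2$ convergence combined with the uniform $H^1$ bound;
\item the time derivative is handled with an Aubin--Lions argument on $\bar u_\varepsilon$.
\end{itemize}
Together these show that $u_0$ solves KdV, in agreement with Theorem~\ref{thm:effective-dynamics}.

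\textbf{Corrector.} I insert the ansatz $u_\varepsilon=u_0+\varepsilon^2u_1(\rho,\theta,t)+\dots$. Matching at order $\varepsilon^0$ gives the cell problem
\[
\partial_\theta\partial_\rho^2 u_1 = -\bigl(\partial_t u_0+u_0\partial_\theta u_0+\partial_\theta^3u_0\bigr)+\text{(curvature terms from }r=1+\varepsilon\rho),
\]
posed on $(-1,1)$ with Neumann conditions. Its solvability condition, namely that the right-hand side has zero $\rho$-average, is exactly the effective KdV equation. The metric factors enter through $\partial_\theta$ in polar coordinates, which carries a factor $r^{-1}=1-\varepsilon\rho+\dots$; these produce $\rho$-dependent sources $\rho\,\mathcal S(u_0)$. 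Inverting $\partial_\rho^2$ on mean-zero data, and $\partial_\theta$ on mean-zero Fourier modes, then gives an explicit polynomial $u_1$ in $\rho$. This yields Theorem~\ref{thm:defect} with $\beta$ equal to the first nonvanishing order, and places $u_1$ within the span of low-degree Sobolev orthogonal polynomials in $\rho$.

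\textbf{Main obstacle.} I expect the hardest step to be justifying the remainder $o(\varepsilon^\beta)$. This needs an energy estimate for $w=u_\varepsilon-u_0-\varepsilon^\beta u_1$ that is uniform despite the singular operator $\varepsilon^{-2}\partial_\theta\partial_\rho^2$. I would first try the fact that this operator is skew-adjoint in $L^2$ under Neumann conditions, so it drops out of the $L^2$ estimate of $w$. The nonlinear term is then controlled by Gronwall using $\|\partial_\theta u_0\|_{L^\infty}$, which requires $u_0$ smooth (say $H^s$ with $s\ge3$ initial data). I would also need to fix the $\partial_\theta$-inversion, i.e.\ the zero-mode ambiguity of $u_1$, using mass conservation.
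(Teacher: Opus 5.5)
Your first bullet is essentially right, but the corrector step has a genuine gap. The $\rho$-dependent source you rely on does not exist for \eqref{eq:zk}. That equation uses the bare angular derivative $\partial_\theta$, with no factor $r^{-1}$. In rescaled form, $\partial_t u+u\partial_\theta u+\partial_\theta^3u+\varepsilon^{-2}\partial_\theta\partial_\rho^2u=0$, no coefficient depends on $\rho$, so there are no curvature terms. Once the solvability condition (KdV for $u_0$) is used, your order-$\varepsilon^0$ cell problem becomes $\partial_\theta\partial_\rho^2u_1=0$ with Neumann conditions. Its solutions are independent of $\rho$ on every mode $k\neq0$.

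This is structural, not an artifact of the ansatz. A $\rho$-independent KdV solution satisfies \eqref{eq:zk} and the Neumann condition, so by uniqueness $u_\varepsilon\equiv u_0$ for $\rho$-independent data. Any transverse component must therefore come from ill-prepared data. There it propagates in the Neumann modes $\cos(j\pi(\rho+1)/2)$ at frequencies of order $\varepsilon^{-2}|k|j^2$, rather than solving a stationary cell problem driven by $u_0$. So your argument produces a trivial transverse corrector, not the non-trivial one asserted. Even for a genuinely curved ZK written with $r^{-1}\partial_\theta$, the factor $r^{-1}=1-\varepsilon\rho+\dots$ produces sources only at order $\varepsilon$, which forces $\beta=3$. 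Such sources do not enter the order-$\varepsilon^0$ balance you wrote, and your ``$\beta=1$ (or $2$)'' fits neither case.

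This is also where you genuinely diverge from the paper. The paper reads the ZK statement off Theorem~\ref{thm:effective-dynamics} and the appendix computation, imposes no boundary condition, and takes $u_1=\tfrac12(\rho^2-\tfrac13)g$. That profile has $\partial_\rho u_1=\pm g$ at $\rho=\pm1$, so it violates your Neumann condition. Inserted into the $\rho$-averaged equation, it produces the flux term $\partial_\theta g$, so it is compatible with a pure KdV limit only when $\partial_\theta g=0$. Your Neumann choice is exactly what makes the KdV identification rigorous; Step~3 of the proof of Theorem~\ref{thm:effective-dynamics} passes over the $\varepsilon^{-2}$ prefactor of the $\partial_\rho$-terms. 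The same choice is exactly what kills the paper's corrector. You have to fix one boundary setup and determine the corrector it actually produces, rather than import a source.

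Two smaller repairs are needed for the first bullet.
\begin{itemize}
\item \eqref{eq:zk} is the $\partial_\theta$-Hamiltonian flow of $\int(\tfrac12|\partial_\theta u|^2+\tfrac12|\partial_r u|^2-\tfrac16u^3)\,dr\,d\theta$ with the flat measure, not with $r\,dr\,d\theta$. With the weight, $\partial_\theta\partial_r^2$ is not skew-adjoint. One gets $\tfrac{d}{dt}\int u^2(1+\varepsilon\rho)\,d\rho\,d\theta=-2\varepsilon^{-1}\int\partial_\theta u\,\partial_\rho u\,d\rho\,d\theta$, which is an $O(1)$ error, not $O(\varepsilon)$, because the weight correction multiplies $\varepsilon^{-2}$. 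The energy balance also picks up terms such as $\varepsilon^{-1}\int\partial_\theta^2u\,\partial_\theta\partial_\rho u\,d\rho\,d\theta$, which do not close at the $H^1$ level. With the flat measure and Neumann conditions, mass and energy are exactly conserved.
\item Your bound $\|\partial_\rho u_\varepsilon\|=O(\varepsilon)$ needs well-prepared data, namely $\varepsilon^{-1}\|\partial_\rho u^0_\varepsilon\|$ bounded, which is stronger than (H4). It is nonetheless the right mechanism. By the paper's own rescaled description, $\|\cdot\|_{\varepsilon,1}$ is equivalent to the $H^1((-1,1)\times\mathbb S^1)$ norm and does not by itself force $\partial_\rho u_\varepsilon\to0$.
\end{itemize}
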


\begin{remark}
The Zakharov--Kuznetsov equation provides a canonical example of
\emph{asymptotic integrability}:
while it is not completely integrable in two spatial dimensions, the thin
geometry enforces radial rigidity and yields an integrable effective dynamics,
with computable defect terms.
\end{remark}

%-------------------------------------------------
\subsection{Discussion}

The examples above illustrate three distinct regimes:
\begin{itemize}
\item purely tangential integrable models (KdV, mKdV, NLS);
\item integrable models with nonlinear potentials (sine--Gordon);
\item anisotropic models that become integrable only in the thin limit (ZK).
\end{itemize}
All of them fit naturally into the abstract framework developed in the previous
sections and highlight the unifying role of thin geometries in the emergence of
effective integrable dynamics.

%=================================================
\section{Controlled non-integrable perturbations}
\label{sec:nonintegrable}

In this section we investigate perturbations that break the exact integrable
structure of the models considered in Section~\ref{sec:integrable}.
The main point is that, under suitable control assumptions, the thin-limit
reduction and the effective dynamics remain stable.

%-------------------------------------------------
\subsection{Dissipative perturbations}

We first consider equations of the form
\begin{equation}
\label{eq:dissipative}
\partial_t u_\varepsilon
=
J_\varepsilon\,\frac{\delta\mathcal H_\varepsilon}{\delta u}(u_\varepsilon)
-
\nu \mathcal D_\varepsilon u_\varepsilon,
\qquad \nu>0,
\end{equation}
where $\mathcal D_\varepsilon$ is a symmetric positive operator of order at most
$m$ (for instance, $\mathcal D_\varepsilon=-\partial_\theta^2$ or a polynomial
in $\partial_\theta$).

\begin{proposition}[Thin limit with dissipation]
\label{prop:dissipation}
Assume that $\mathcal D_\varepsilon$ is uniformly coercive in the tangential
direction and compatible with the renormalized Sobolev norm.
Then solutions of~\eqref{eq:dissipative} converge, as $\varepsilon\to0$, to
solutions of the effective dissipative equation
\[
\partial_t u_0
=
J_0\,\frac{\delta\mathcal H_{\mathrm{eff}}}{\delta u}(u_0)
-
\nu \mathcal D_0 u_0
\quad\text{on }\mathbb S^1.
\]
\end{proposition}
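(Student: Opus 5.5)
The plan is to reduce Proposition~\ref{prop:dissipation} to Theorem~\ref{thm:effective-dynamics} by treating $\mathcal R_\varepsilon(u)=-\nu\mathcal D_\varepsilon u$ as the perturbation in~\eqref{eq:main-evolution}, and then to identify $\mathcal R_{\mathrm{eff}}=-\nu\mathcal D_0$ explicitly. There are three things to do.

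First, check hypothesis (H3). Since $\mathcal D_\varepsilon$ is linear, symmetric, positive and of order at most $m$, it is globally Lipschitz from the $\|\cdot\|_{\varepsilon,m}$-domain into $H^{-m}$-type duals uniformly in $\varepsilon$. This uses the compatibility assumption: its coefficients in $(\rho,\theta)$ variables carry at most the powers of $\varepsilon^{-1}$ absorbed by the renormalization.

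Second, derive the energy inequality. Differentiate $\mathcal H_\varepsilon(u_\varepsilon(t))$ in time. Skew-adjointness (H2) kills the Hamiltonian contribution, and what remains is
\[
\frac{d}{dt}\mathcal H_\varepsilon(u_\varepsilon)
= -\nu\Bigl\langle \frac{\delta\mathcal H_\varepsilon}{\delta u}(u_\varepsilon),\mathcal D_\varepsilon u_\varepsilon\Bigr\rangle .
\]
This term is not sign-definite in general. I would handle it as follows:
\begin{itemize}
\item If $\mathcal D_\varepsilon$ commutes with the quadratic part of $\mathcal H_\varepsilon$ (the typical case $\mathcal D_\varepsilon=-\partial_\theta^2$ with tangential Hamiltonians), the quadratic part contributes a nonpositive term.
\item The polynomial remainder is controlled by Gagliardo--Nirenberg-type bounds in the renormalized norm, giving $\frac{d}{dt}\mathcal H_\varepsilon\le C(1+\mathcal H_\varepsilon)$.
\item Coercivity (H1), (H4) and Gr\"onwall then give $\sup_{[0,T]}\|u_\varepsilon\|_{\varepsilon,m}\le C(T)$.
\item If no commutation is available, a fallback is to use the $L^2$ energy $\tfrac12\|u_\varepsilon\|^2$, for which the dissipative term is exactly $-\nu\langle\mathcal D_\varepsilon u,u\rangle\le0$, combined with the Hamiltonian bound.
\end{itemize}
As a by-product, uniform tangential coercivity yields the extra bound $\nu\int_0^T\langle\mathcal D_\varepsilon u_\varepsilon,u_\varepsilon\rangle\,dt\le C$.

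Third, pass to the limit. Proposition~\ref{prop:compactness} gives $u_\varepsilon\to u_0$ strongly in $L^2_{\mathrm{loc}}$ with $\partial_\rho u_\varepsilon\to0$. Test the equation against $\varphi(\theta,t)$, i.e.\ functions independent of $\rho$, and average in $\rho$. The dissipative term becomes $-\nu\langle u_\varepsilon,\mathcal D_\varepsilon^*\varphi\rangle$, since $\mathcal D_\varepsilon$ is symmetric. On $\rho$-independent test functions, every transverse derivative in $\mathcal D_\varepsilon$ vanishes, and the coefficients converge (or average, if oscillatory) to those of $\mathcal D_0$. Hence this term tends to $-\nu\langle u_0,\mathcal D_0\varphi\rangle$. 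The Hamiltonian part is identified exactly as in Theorem~\ref{thm:effective-dynamics}.

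The main obstacle is the second step, the uniform-in-$\varepsilon$ energy estimate when $\mathcal D_\varepsilon$ and $\delta\mathcal H_\varepsilon/\delta u$ involve different transverse scalings. Mixed terms with $\varepsilon^{-a}\partial_\rho^a$ could blow up unless the compatibility hypothesis is read as saying that $\mathcal D_\varepsilon$ is bounded with respect to the renormalized form. I would make this interpretation explicit first and then run the Gr\"onwall argument.
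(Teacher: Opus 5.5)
Your overall route is the one the paper intends. The paper gives no separate proof of Proposition~\ref{prop:dissipation} in Section~\ref{sec:proofs}. It is meant as the instance $\mathcal R_\varepsilon=-\nu\mathcal D_\varepsilon$, $\mathcal R_{\mathrm{eff}}=-\nu\mathcal D_0$ of Theorem~\ref{thm:effective-dynamics}, with the dissipative term identified by Step~4 of the proof in Section~\ref{subsec:proof-effective}. Your third step is a more explicit version of that Step~4. Because $\mathcal D_\varepsilon$ is linear and symmetric, you transfer it onto the $\rho$-independent test function, so transverse parts drop out and only weak convergence of $u_\varepsilon$ is needed. The paper instead simply asserts $\mathcal R_\varepsilon(u_\varepsilon)\rightharpoonup\mathcal R_{\mathrm{eff}}(u_0)$, assuming $\mathcal R_\varepsilon$ is tangential at leading order. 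Your transfer tacitly requires that $\rho$-independent functions be admissible for $\mathcal D_\varepsilon$ (no boundary terms at $\rho=\pm1$) and that you pair in the measure for which $\mathcal D_\varepsilon$ is symmetric.

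Your second step goes beyond the paper, and as written it does not hold in general. The paper never derives the uniform bound. $\sup_{[0,T]}\|u_\varepsilon\|_{\varepsilon,m}\le C(T)$ is a standing assumption (Section~\ref{sec:operators}, hypothesis (H3), and \eqref{eq:unif-bound-proof}), and compatibility of $\mathcal D_\varepsilon$ with the renormalized norm is what places $-\nu\mathcal D_\varepsilon$ inside (H3).

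In your derivation, the non-quadratic part of $\mathcal H_\varepsilon$ contributes a term of degree at least three in $u$ to $\nu\langle\delta\mathcal H_\varepsilon/\delta u,\mathcal D_\varepsilon u\rangle$. Gagliardo--Nirenberg therefore only gives a superlinear bound such as $C(1+\mathcal H_\varepsilon)^{3/2}$. That yields local-in-time control, not $\frac{d}{dt}\mathcal H_\varepsilon\le C(1+\mathcal H_\varepsilon)$. What works in concrete models is absorption into the dissipative quadratic term. For KdV--Burgers ($\mathcal D_\varepsilon=-\partial_\theta^2$) one has $\frac{d}{dt}\mathcal H_\varepsilon=-\nu\|\partial_\theta^2u\|^2+\nu\int u(\partial_\theta u)^2$, and the cubic term is absorbed using the $L^2$ bound and interpolation.

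Your $L^2$ fallback has two problems:
\begin{enumerate}
\item It needs $\langle J_\varepsilon\,\delta\mathcal H_\varepsilon/\delta u,u\rangle=0$, i.e.\ conservation of $\|u\|^2$ by the Hamiltonian flow. This fails for $J_\varepsilon=\partial_\theta$ as soon as coefficients depend on $\theta$.
\item Even when it holds, it only yields the control that $\mathcal D_\varepsilon$ provides, which by hypothesis is tangential. It never gives the bound on the transverse part of $\|u_\varepsilon\|_{\varepsilon,m}$ that Proposition~\ref{prop:compactness} consumes.
\end{enumerate}
So either cite the standing uniform bound, as the paper does, or restrict your second step to models where the absorption argument actually goes through.
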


\begin{remark}
The dissipative term provides additional compactness in time and simplifies the
passage to the limit.
In particular, global existence in time follows directly from the energy
inequality.
\end{remark}

%-------------------------------------------------
\subsection{Forced systems}

We next consider externally forced equations
\begin{equation}
\label{eq:forced}
\partial_t u_\varepsilon
=
J_\varepsilon\,\frac{\delta\mathcal H_\varepsilon}{\delta u}(u_\varepsilon)
+
f_\varepsilon(\theta,t),
\end{equation}
where the forcing term $f_\varepsilon$ is assumed to satisfy
\[
\sup_{\varepsilon>0}\|f_\varepsilon\|_{L^2(0,T;L^2(\mathbb S^1))}<\infty.
\]

\begin{proposition}[Thin limit with forcing]
\label{prop:forcing}
Under the above assumptions, $u_\varepsilon$ converges to a solution $u_0$ of the
effective forced equation
\[
\partial_t u_0
=
J_0\,\frac{\delta\mathcal H_{\mathrm{eff}}}{\delta u}(u_0)
+
f_0(\theta,t),
\]
where $f_0$ is the weak limit of $f_\varepsilon$.
\end{proposition}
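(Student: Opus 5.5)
The plan is to follow the same scheme as Theorem~\ref{thm:effective-dynamics}, treating the forcing as an additional linear source term that is independent of $u_\varepsilon$. First, I will establish a uniform energy bound. Pairing \eqref{eq:forced} with $\delta\mathcal H_\varepsilon/\delta u(u_\varepsilon)$ in $L^2(A_\varepsilon)$, skew-adjointness of $J_\varepsilon$ (H2) kills the Hamiltonian contribution. This gives
\[
\frac{d}{dt}\mathcal H_\varepsilon(u_\varepsilon)=\int_{A_\varepsilon} f_\varepsilon\,\frac{\delta\mathcal H_\varepsilon}{\delta u}(u_\varepsilon)\,r\,dr\,d\theta .
\]
Since $\frac{\delta\mathcal H_\varepsilon}{\delta u}$ contains derivatives of order up to $2m$, I would not estimate this directly. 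Instead I would use the alternative $L^2$-type identity, pairing with $u_\varepsilon$ (which also annihilates the $J_\varepsilon$ term when $J_\varepsilon$ is skew and the Hamiltonian part is conservative), together with the coercivity of $\mathcal H_\varepsilon$. After renormalizing by $\varepsilon^{-1}$ so that the measure $r\,dr\,d\theta$ becomes $(1+\varepsilon\rho)\,d\rho\,d\theta$, Cauchy--Schwarz and Gronwall yield
\[
\sup_{[0,T]}\|u_\varepsilon\|_{\varepsilon,m}^2\le C\bigl(1+\|u_\varepsilon^0\|_{\varepsilon,m}^2+\|f_\varepsilon\|_{L^2(0,T;L^2(\mathbb S^1))}^2\bigr)e^{CT}.
\]
Because $f_\varepsilon$ depends only on $(\theta,t)$, its norm over $A_\varepsilon$ after renormalization is comparable to its $L^2(\mathbb S^1)$ norm. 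Hence the bound is uniform in $\varepsilon$ by (H4) and the hypothesis on $f_\varepsilon$.

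Second, Proposition~\ref{prop:compactness} applies verbatim and gives $u_\varepsilon\to u_0$ strongly in $L^2_{\mathrm{loc}}$ with $u_0$ independent of $\rho$. To pass to the limit, I test \eqref{eq:forced} against $\varphi(\theta,t)$ smooth and independent of $\rho$, integrate over $\rho\in(-1,1)$, and write the equation in terms of the transverse average $\bar u_\varepsilon$. The time-derivative term and the Hamiltonian term converge exactly as in the proof of Theorem~\ref{thm:effective-dynamics}. In the Hamiltonian term, the strong convergence handles the polynomial nonlinearities, the transverse derivatives vanish since $\partial_\rho u_\varepsilon\to0$, and the oscillatory coefficients average. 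For the source term, the uniform bound gives a weakly convergent subsequence $f_\varepsilon\rightharpoonup f_0$ in $L^2(0,T;L^2(\mathbb S^1))$. The weight $(1+\varepsilon\rho)$ converges uniformly to $1$, so $\int f_\varepsilon\varphi\to\int f_0\varphi$. The forcing term is linear, so weak convergence suffices here.

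The main obstacle is the energy estimate. Its delicate points are making the pairing legitimate at the regularity available (I would carry it out on the Galerkin approximations of Section~\ref{sec:operators} and pass to the limit in $(N,K)$) and, for Hamiltonians of order $m\ge1$, controlling the $\|\cdot\|_{\varepsilon,m}$ norm using only $L^2$ forcing. If pairing with $u_\varepsilon$ does not close at order $m$, my fallback is to require $f_\varepsilon$ bounded in $W^{1,1}(0,T;L^2)$. I would then integrate by parts in time in $\int_0^t\langle f_\varepsilon,\partial_t u_\varepsilon\rangle$ to obtain an estimate involving only $\langle f_\varepsilon,u_\varepsilon\rangle$.
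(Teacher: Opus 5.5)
Your passage to the limit is the paper's argument. The paper gives no separate proof of this proposition; it is the case $\mathcal R_\varepsilon=f_\varepsilon$, $\mathcal R_{\mathrm{eff}}=f_0$ of Theorem~\ref{thm:effective-dynamics}. There the forcing, being linear and independent of $u_\varepsilon$, only needs weak convergence against $\rho$-independent test functions.

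The genuine gap is your first step, the derivation of the uniform bound. Pairing with $u_\varepsilon$ does not annihilate the Hamiltonian term. Skew-adjointness gives $\langle J_\varepsilon v,v\rangle=0$, but here the two slots differ: $\langle J_\varepsilon\frac{\delta\mathcal H_\varepsilon}{\delta u}(u),u\rangle=-\frac{d}{ds}\mathcal H_\varepsilon(e^{sJ_\varepsilon}u)$ at $s=0$. This vanishes only when $\mathcal H_\varepsilon$ is invariant under the group generated by $J_\varepsilon$ (rotations for $\partial_\theta$, phases for $i$); conservation of $\mathcal H_\varepsilon$ is irrelevant to it. For instance, take $J_\varepsilon=\partial_\theta$ and $\mathcal H_\varepsilon(u)=\int_{A_\varepsilon}\frac12 c(\theta/\varepsilon^\alpha)|\partial_\theta u|^2\,r\,dr\,d\theta$. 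Then the term equals $\frac12\int_{A_\varepsilon}\varepsilon^{-\alpha}c'(\theta/\varepsilon^\alpha)|\partial_\theta u|^2\,r\,dr\,d\theta$, which is not controlled by $\|u\|_{\varepsilon,1}$ uniformly in $\varepsilon$. For the sine--Gordon system the term is nonzero even without oscillations.

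More fundamentally, even when this term vanishes, the identity controls only $\|u_\varepsilon\|_{L^2}$. Coercivity can be used only through $\mathcal H_\varepsilon(u_\varepsilon(t))$, whose derivative $\langle f_\varepsilon,\frac{\delta\mathcal H_\varepsilon}{\delta u}(u_\varepsilon)\rangle$ involves $2m$ derivatives of $u_\varepsilon$. That quantity is not bounded in terms of $\|f_\varepsilon\|_{L^2}$ and $\|u_\varepsilon\|_{\varepsilon,m}$. So your displayed Gronwall estimate in $\|\cdot\|_{\varepsilon,m}$ does not follow.

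Your fallback is also misstated. The term to integrate by parts in time is $\langle f_\varepsilon,\frac{\delta\mathcal H_\varepsilon}{\delta u}(u_\varepsilon)\rangle$, not $\langle f_\varepsilon,\partial_t u_\varepsilon\rangle$. To bring in $\partial_t u_\varepsilon$ you must substitute $\frac{\delta\mathcal H_\varepsilon}{\delta u}=J_\varepsilon^{-1}(\partial_t u_\varepsilon-f_\varepsilon)$, which turns the term into $-\langle J_\varepsilon^{-1}f_\varepsilon,\partial_t u_\varepsilon\rangle$. That works for $J_\varepsilon=i$. For $J_\varepsilon=\partial_\theta$ it needs mean-zero forcing and bounds on $\partial_\theta^{-1}f_\varepsilon$ and $\partial_\theta^{-1}\partial_t f_\varepsilon$. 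In every case it strengthens the proposition's hypothesis $\sup_\varepsilon\|f_\varepsilon\|_{L^2(0,T;L^2(\mathbb S^1))}<\infty$.

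The paper sidesteps the issue entirely. It never derives the bound; it takes $\sup_{[0,T]}\|u_\varepsilon(t)\|_{\varepsilon,m}\le C(T)$ as a standing assumption (Section~\ref{sec:operators} and the start of Section~\ref{sec:proofs}). Hypothesis (H3) additionally requires the perturbation, here $f_\varepsilon$, to satisfy a uniform energy inequality compatible with the renormalized estimates. Replace your first step by an appeal to that hypothesis. Also note that an $L^2(0,T;L^2)$ forcing contributes harmlessly to the time-derivative bound \eqref{eq:time-bound} used in the Aubin--Lions step. With those changes, the rest of your proposal coincides with the paper's proof.
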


\begin{remark}
Forcing terms may destroy conservation laws but do not affect the geometric
mechanism underlying the thin-limit reduction.
\end{remark}

%-------------------------------------------------
\subsection{Perturbative loss of integrability}

We now address perturbations that break integrability at the level of the
Hamiltonian itself.
Let
\[
\mathcal H_\varepsilon^{\delta}
=
\mathcal H_\varepsilon
+
\delta\,\mathcal K_\varepsilon,
\qquad \delta\ll1,
\]
where $\mathcal K_\varepsilon$ is an admissible Hamiltonian functional that does
not correspond to an integrable system.

\begin{proposition}[Stability under small non-integrable perturbations]
\label{prop:small-perturb}
Fix $\delta>0$ sufficiently small.
Then the effective dynamics associated with
$\mathcal H_\varepsilon^{\delta}$
is a perturbation of the integrable effective equation, of size $O(\delta)$ in
any finite-dimensional Galerkin topology.
\end{proposition}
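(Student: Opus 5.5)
The plan is to reduce the statement to a finite-dimensional perturbation estimate for ODEs. First, apply Theorem~\ref{thm:effective-dynamics} to $\mathcal H_\varepsilon^{\delta}$. Coercivity of $\mathcal H_\varepsilon$ plus admissibility of $\mathcal K_\varepsilon$ give, for $\delta$ small, that $\mathcal H^\delta_\varepsilon$ is admissible uniformly in $\varepsilon$. Here we absorb $\delta\mathcal K_\varepsilon$ using its polynomial growth on bounded energy sets; if $\mathcal K_\varepsilon$ is merely bounded below by $-C\|u\|^2_{\varepsilon,m}-C$, this is immediate for $\delta<1/(2C)$. So (H1)--(H4) hold and the limit satisfies
\[
\partial_t u_0^\delta = J_0\frac{\delta\mathcal H_{\mathrm{eff}}}{\delta u}(u_0^\delta)+\delta J_0\frac{\delta\mathcal K_{\mathrm{eff}}}{\delta u}(u_0^\delta),
\]
since discarding transverse derivatives and averaging oscillatory coefficients is linear in the Hamiltonian density: $(\mathcal H+\delta\mathcal K)_{\mathrm{eff}}=\mathcal H_{\mathrm{eff}}+\delta\mathcal K_{\mathrm{eff}}$.

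Next, fix Galerkin parameters $(N,K)$. By Corollary~\ref{cor:galerkin}, the effective Galerkin system is the limit of the $\varepsilon$-systems and is posed on the finite-dimensional space spanned by the limiting basis $\{P_{n,0}\}$ of Section~\ref{subsec:sobolev-op}. The projected vector fields are $F(c)=\Pi_{N,K}J_0\,\delta\mathcal H_{\mathrm{eff}}/\delta u$ and $G(c)=\Pi_{N,K}J_0\,\delta\mathcal K_{\mathrm{eff}}/\delta u$. Both are polynomial in the coefficient vector $c$, because the variational derivatives are polynomial differential operators and derivatives of polynomials stay in the finite span up to projection. Hence $F$ and $G$ are locally Lipschitz and bounded on balls.

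Compare $c^\delta$ with $c^0$, both starting from the same projected data. Set $w=c^\delta-c^0$. Then
\[
|w'|\le L_R|w|+\delta\sup_{|c|\le R}|G(c)|,
\]
as long as both trajectories stay in the ball of radius $R$. Gronwall then gives $\sup_{[0,T]}|w|\le \delta\,T\,M_R e^{L_R T}=O(\delta)$. All norms on the finite-dimensional space are equivalent, so this holds in any Galerkin topology.

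The main obstacle is to keep both trajectories inside a ball $B_R$ on $[0,T]$ uniformly in $\delta$; for nonlinear polynomial fields a local solution might otherwise blow up before $T$. I would first use conservation of the Galerkin-projected Hamiltonian. Because $J_0$ is skew-adjoint and the basis is orthogonal, $\mathcal H_{\mathrm{eff}}+\delta\mathcal K_{\mathrm{eff}}$ restricted to $V^{N,K}$ is conserved by the projected flow when the projection is orthogonal in the $L^2$ pairing used by $J_0$. Coercivity, uniform for small $\delta$, then gives a $\delta$-independent bound $R$. If orthogonality is only in the Sobolev inner product, so exact conservation fails, I would fall back on a bootstrap: take $R$ to be twice the bound for $c^0$ on $[0,T]$, and use the Gronwall estimate to show that $c^\delta$ cannot leave $B_R$ when $\delta<\delta_0(N,K,T)$. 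The constant in $O(\delta)$ then depends on $(N,K,T)$ and on the initial energy.
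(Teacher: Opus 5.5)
Your argument is correct and follows essentially the paper's route. The paper justifies this proposition only by the remark that, at fixed Galerkin degree, the projected vector field depends continuously on the Hamiltonian, and your observation that the effective Hamiltonian is $\mathcal H_{\mathrm{eff}}+\delta\mathcal K_{\mathrm{eff}}$ makes that dependence explicitly affine in $\delta$. Your Gronwall estimate and the bootstrap keeping $c^\delta$ in a ball $B_R$ on $[0,T]$ supply the trajectory-level details the paper leaves implicit. The bootstrap alone suffices, which is just as well, since the energy-conservation variant also uses that $J_0$ maps $V^{N,K}$ into itself (true for angular Fourier truncations with $J_0=\partial_\theta$ or $i$).
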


\begin{remark}
At fixed degree, this follows from continuous dependence of the Gram matrices
and of the projected vector fields on the Hamiltonian.
In particular, near-integrable dynamics persists over time intervals of order
$O(\delta^{-1})$.
\end{remark}

%-------------------------------------------------
\subsection{Stability of Galerkin approximations}

We now emphasize the role of Sobolev orthogonal polynomials in controlling
non-integrable effects at the discrete level.

Let $u_\varepsilon^{N,K}$ denote the Galerkin approximation constructed using the
Sobolev orthogonal basis of order $s$.
Changing either the perturbation parameter $\delta$ or the Sobolev order $s$
results in a continuous deformation of the finite-dimensional dynamical system.

\begin{proposition}[Robustness of Galerkin schemes]
\label{prop:robust-galerkin}
For fixed truncation parameters $(N,K)$, the Galerkin vector field depends
continuously on:
\begin{itemize}
\item the Sobolev order $s$,
\item the perturbation strength $\delta$,
\item the thinness parameter $\varepsilon$.
\end{itemize}
In particular, non-integrable perturbations do not induce spurious instabilities
at the discrete level.
\end{proposition}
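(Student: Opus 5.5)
The plan is to write the Galerkin vector field explicitly as a finite composition of maps, each continuous in $(s,\delta,\varepsilon)$. Fix $(N,K)$ and write $u^{N,K}=\sum_{n\le N,|k|\le K} c_{n,k}\,\Phi^{\varepsilon,(s)}_{n,k}$. Projecting~\eqref{eq:evolution} with $\mathcal H^\delta_\varepsilon=\mathcal H_\varepsilon+\delta\mathcal K_\varepsilon$ onto $V^{N,K}_\varepsilon$ with respect to $\langle\cdot,\cdot\rangle_{\varepsilon,s}$ gives
\[
G_{\varepsilon,s}\,\dot c = F(c;\varepsilon,s,\delta),
\qquad
F_{n,k}(c)=\Bigl\langle J_\varepsilon\tfrac{\delta\mathcal H_\varepsilon^\delta}{\delta u}(u^{N,K})+\mathcal R_\varepsilon(u^{N,K}),\,\Phi^{\varepsilon,(s)}_{n,k}\Bigr\rangle_{\varepsilon,s}.
\]
Here $G_{\varepsilon,s}$ is the Gram matrix, and it is diagonal if the basis is orthonormal. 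The Galerkin vector field is $V(c;\varepsilon,s,\delta)=G_{\varepsilon,s}^{-1}F(c;\varepsilon,s,\delta)$. Everything then reduces to continuity of the finitely many scalar quantities entering $G$ and $F$.

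\textbf{Step 1: dependence on $s$.} I will invoke Proposition~\ref{prop:sobolev-order-stability}. The basis $\{\Phi^{\varepsilon,(s)}_{n,k}\}_{n\le N}$ depends continuously, indeed locally Lipschitz, on $s$ as polynomial coefficient vectors. The inner products restricted to $\mathcal P_{\le N}$ are continuous in $s$ in operator norm, as stated before that proposition. Hence $G_{\varepsilon,s}$ is continuous in $s$. It stays invertible because each inner product is strictly positive on the finite-dimensional space, so $s\mapsto G^{-1}_{\varepsilon,s}$ is continuous.

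\textbf{Step 2: dependence on $\delta$.} The variational derivative is affine in $\delta$: $\frac{\delta\mathcal H_\varepsilon}{\delta u}+\delta\frac{\delta\mathcal K_\varepsilon}{\delta u}$. Evaluated on polynomials of bounded degree, both terms are polynomial differential expressions with bounded coefficients, using admissibility and polynomial growth of $F$. Their pairings with the basis are therefore polynomial in $c$, with coefficients that are integrals of bounded functions against polynomials. So $F$ is affine in $\delta$ and locally Lipschitz in $c$, uniformly on compact sets of $c$.

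\textbf{Step 3: dependence on $\varepsilon$, the main obstacle.} In the rescaled variables $(\rho,\theta)$, all integrals become integrals over the fixed domain $(-1,1)\times\mathbb S^1$ with weight $(1+\varepsilon\rho)$ and factors $\varepsilon^{-a}$ from $\partial_r^a$. For $\varepsilon>0$ these are polynomial in $\varepsilon,\varepsilon^{-1}$, hence continuous. The oscillatory coefficients $c(\theta,\theta/\varepsilon^\alpha)$ are only bounded and measurable, but their pairings against trigonometric polynomials still vary continuously in $\varepsilon>0$. I expect this by dominated convergence, since $\theta\mapsto c(\theta,\theta/\varepsilon'^{\alpha})$ converges almost everywhere as $\varepsilon'\to\varepsilon$ for continuous $c$. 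For merely measurable $c$ one can use $L^1$-continuity of dilations, applied to the composition. Continuity down to $\varepsilon=0$ is the delicate part: the $\varepsilon^{-a}$ factors and rapid oscillations must cancel against the renormalization. Here I will rely on the convergence of Gram matrices at fixed degree from \cite{MagnotThinAnnuli}, together with Corollary~\ref{cor:galerkin}, which give convergence of the projected fields to those of the effective equation. Oscillatory coefficients then converge weakly to their averages, which is enough when paired with fixed trigonometric polynomials.

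\textbf{Conclusion.} Combining the three steps, $V$ is jointly continuous in $(s,\delta,\varepsilon)$ and locally Lipschitz in $c$. Standard ODE continuous dependence, via Gronwall on compact time intervals within the existence time, gives continuous dependence of trajectories. This yields the claimed absence of spurious instabilities: a small change of the parameters produces a small, Gronwall-controlled change of the discrete dynamics.
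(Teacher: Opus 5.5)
Your proposal takes the same approach as the paper. The paper gives no separate proof of this proposition; it justifies it, in the remark after Proposition~\ref{prop:small-perturb} and in the proofs of Corollaries~\ref{cor:galerkin} and~\ref{cor:sobolev-order}, by finite-dimensional continuity of the Gram matrices and projected coefficients in $(s,\delta,\varepsilon)$ followed by continuous dependence of ODE solutions on parameters. One caution applies to you and to the paper alike: Gram-matrix convergence does not give continuity at the endpoint $\varepsilon=0$, and the cancellation Step~3 assumes does not occur for operators with transverse derivatives. For example, the ZK term $\varepsilon^{-2}\partial_\theta\partial_\rho^2$, applied to a basis element quadratic in $\rho$ in a mode $k\neq0$ and paired with the $\rho$-constant element of that mode, gives Galerkin entries of order $\varepsilon^{-2}$ when $N\ge2$ and $K\ge1$, so your argument is sound only for $\varepsilon>0$ or for purely tangential operators.
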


\begin{remark}
This robustness property is essential for numerical implementations and
supports the interpretation of the thin-limit dynamics as an intrinsic geometric
feature rather than an artifact of approximation.
\end{remark}

%-------------------------------------------------
\subsection{Discussion}

The results of this section show that:
\begin{itemize}
\item exact integrability is not required for the emergence of effective
one-dimensional dynamics;
\item dissipative, forced, and weakly non-integrable perturbations can be treated
within the same framework;
\item the combination of renormalized Sobolev estimates and orthogonal polynomial
Galerkin schemes provides strong structural stability.
\end{itemize}

This prepares the ground for the analysis of rapidly oscillating coefficients
and homogenized limits in the next section.

%=================================================
\section{Rapid oscillations and homogenized limits}
\label{sec:homogenization}

This section addresses polynomial PDEs on thin annuli with rapidly oscillating
coefficients.
We show that homogenization effects can be consistently combined with the
dimension reduction mechanism, leading to effective one-dimensional equations
with homogenized coefficients and, in some cases, non-local corrections.

%-------------------------------------------------
\subsection{Oscillatory coefficients and scaling regimes}

We consider Hamiltonians of the form
\[
\mathcal H_\varepsilon(u)
=
\int_{A_\varepsilon}
F\!\left(
\bigl(\partial_r^a\partial_\theta^b u\bigr)_{a+b\le m},
\theta,\frac{\theta}{\varepsilon^\alpha}
\right)
\,r\,dr\,d\theta,
\]
where $\alpha>0$ and the dependence on the fast variable
$y=\theta/\varepsilon^\alpha$ is $2\pi$-periodic.

The parameter $\alpha$ determines the interaction between oscillations and
geometry:
\begin{itemize}
\item $\alpha<1$: oscillations are slow compared to the thin scaling;
\item $\alpha=1$: oscillations and geometry interact at the same scale;
\item $\alpha>1$: oscillations dominate the thin geometry.
\end{itemize}

\begin{remark}
The case $\alpha=1$ is the most delicate and corresponds to a genuinely coupled
homogenization--dimension reduction regime.
\end{remark}

%-------------------------------------------------
\subsection{Two-scale compactness and averaging}

Let $u_\varepsilon$ be a family of solutions satisfying the uniform energy bounds
of Section~\ref{sec:main-theorem}.
By standard two-scale compactness arguments, there exists a limit
$u_0(\theta,t)$ and a two-scale limit $U(\theta,y,t)$ such that
\[
u_\varepsilon(\theta,t)
\rightharpoonup
u_0(\theta,t),
\qquad
u_\varepsilon(\theta,t)
\stackrel{2}{\rightharpoonup}
U(\theta,y,t),
\]
with $y=\theta/\varepsilon^\alpha$.

Radial rigidity implies that both $u_0$ and $U$ are independent of the transverse
variable $\rho$.

\begin{proposition}[Averaging of oscillatory coefficients]
\label{prop:averaging}
Let $c(\theta,y)$ be a bounded periodic coefficient.
Then, for any smooth test function $\varphi(\theta,t)$,
\[
\int_{A_\varepsilon}
c\!\left(\theta,\frac{\theta}{\varepsilon^\alpha}\right)
u_\varepsilon\,\varphi
\,r\,dr\,d\theta
\longrightarrow
\int_{\mathbb S^1}
\left(\int_{\mathbb S^1} c(\theta,y)\,dy\right)
u_0(\theta,t)\,\varphi(\theta,t)\,d\theta.
\]
\end{proposition}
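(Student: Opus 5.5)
The plan is to split the convergence into a \emph{strong} part, carried by $u_\varepsilon\varphi$, and a \emph{weak} part, carried by the oscillating coefficient. The claim only makes sense after the volume element is renormalized. Indeed, $r\,dr\,d\theta=\varepsilon(1+\varepsilon\rho)\,d\rho\,d\theta$, so the left-hand side carries a factor $\varepsilon$ that the right-hand side lacks. We therefore read the statement with the measure normalized by the thickness, i.e. $\frac{1}{2\varepsilon}r\,dr\,d\theta$. The inner average $\int_{\mathbb S^1}c(\theta,y)\,dy$ is likewise read as the normalized mean $\frac1{2\pi}\int_0^{2\pi}c(\theta,y)\,dy$. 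We also integrate in time against $\varphi$, so that the compactness of Proposition~\ref{prop:compactness}, which holds in space--time, can be used. After these normalizations the left-hand side becomes
\[
\frac12\int_0^T\!\!\int_{-1}^1\!\!\int_{\mathbb S^1} c\!\left(\theta,\tfrac{\theta}{\varepsilon^\alpha}\right)u_\varepsilon\varphi\,(1+\varepsilon\rho)\,d\theta\,d\rho\,dt .
\]

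\textbf{Step 1: reduce to the transverse average.} The factor $(1+\varepsilon\rho)$ equals $1+O(\varepsilon)$, so replacing it by $1$ costs $O(\varepsilon)$. This uses that $c$ is bounded and that $u_\varepsilon$ is bounded in $L^2$ by the uniform energy estimate. Neither $c$ nor $\varphi$ depends on $\rho$, so the $\rho$-integration can be done first. It produces exactly
\[
\int_0^T\!\!\int_{\mathbb S^1} c\!\left(\theta,\tfrac{\theta}{\varepsilon^\alpha}\right)\bar u_\varepsilon\,\varphi\,d\theta\,dt .
\]

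\textbf{Step 2: strong times weak.} We write $\bar u_\varepsilon=u_0+(\bar u_\varepsilon-u_0)$. By the discussion following Proposition~\ref{prop:compactness}, $\bar u_\varepsilon\to u_0$ strongly in $L^2(\mathbb S^1\times(0,T))$. Since $c\varphi$ is bounded, the error term is at most $\|c\|_\infty\|\varphi\|_{L^2}\|\bar u_\varepsilon-u_0\|_{L^2}\to0$ by Cauchy--Schwarz.

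\textbf{Step 3: the main step.} It remains to show that $c(\theta,\theta/\varepsilon^\alpha)$ converges weakly-$*$ in $L^\infty$ to its mean in $y$, tested against the fixed function $g=u_0\varphi\in L^1(\mathbb S^1\times(0,T))$. This is the Riemann--Lebesgue lemma for periodic functions. We argue in three stages.
\begin{enumerate}
\item First take $c(\theta,y)=a(\theta)b(y)$ with $a$ continuous and $b$ a trigonometric polynomial. The zero mode of $b$ gives the mean exactly. Each nonzero mode $e^{iky}$ becomes $e^{ik\theta/\varepsilon^\alpha}$ after substitution, and its integral against an $L^1$ function tends to $0$ by Riemann--Lebesgue.
\item Next pass to general $c$ by density. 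If $c$ is continuous (Carath\'eodory suffices), it is approximated uniformly by finite sums $\sum a_j(\theta)b_j(y)$. Uniform approximation is enough, because both sides are controlled by $\|c-c_n\|_\infty\|g\|_{L^1}$.
\item For $c$ merely bounded and measurable in $\theta$, uniform approximation fails, because $y\mapsto c(\theta,y)$ may be discontinuous. In that case we approximate $g$ instead of $c$, by step functions in $\theta$. On each small $\theta$-interval the usual periodic averaging lemma applies, and the total error is controlled by $\|c\|_\infty\|g-g_n\|_{L^1}$.
\end{enumerate}
This last density argument is where we expect the main difficulty. The case $\alpha=0$ must be excluded, since then there is no oscillation; the section assumes $\alpha>0$. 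With this in place, combining Steps 1--3 yields the stated limit.
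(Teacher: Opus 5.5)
Your normalizations (the missing factor $\frac{1}{2\varepsilon}$, the mean $\frac{1}{2\pi}\int_0^{2\pi}$ in $y$, and the integration in $t$) are needed and correct. Steps 1--2 are sound, and stages (1)--(2) of Step 3 correctly prove the claim for continuous $c$.

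The gap is stage (3), and it is not repairable as stated. Replacing $g$ by a step function $g_n$ does not remove the difficulty. On each interval $I$ you still have to show $\int_I c(\theta,\theta/\varepsilon^\alpha)\,d\theta\to\int_I \bar c(\theta)\,d\theta$, which is the original problem with $g=\mathbf 1_I$. The ``usual periodic averaging lemma'' ($b(\theta/\varepsilon^\alpha)\rightharpoonup\frac{1}{2\pi}\int_0^{2\pi}b$ weakly-$*$ for $b\in L^\infty_{\mathrm{per}}$) only covers coefficients independent of the slow variable. Freezing $\theta$ in the first slot of $c$ on $I$ costs $\sup_{\theta,\theta'\in I}\sup_y|c(\theta,y)-c(\theta',y)|$. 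Your error bound $\|c\|_\infty\|g-g_n\|_{L^1}$ omits this term, and it is not small without continuity of $c$ in $\theta$.

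In fact the statement is false for merely bounded measurable $c$. Given any sequence $\varepsilon_n\to0$, set $c(\theta,y)=1$ if $y-\theta\varepsilon_n^{-\alpha}\in 2\pi\mathbb Z$ for some $n$, and $c(\theta,y)=0$ otherwise. This $c$ is bounded, Borel and $2\pi$-periodic in $y$. Each $\theta$-section of $\{c=1\}$ is countable, so the $y$-mean of $c$ vanishes identically. Yet $c(\theta,\theta/\varepsilon_n^\alpha)\equiv 1$. By your own Steps 1--2, the left-hand side therefore tends to $\int_0^T\!\int_{\mathbb S^1}u_0\varphi\,d\theta\,dt$ along $\varepsilon_n$, while the right-hand side is $0$. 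This is the classical trace obstruction in two-scale convergence.

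So the proposition needs an admissibility hypothesis. The paper leaves this implicit: it offers no proof beyond invoking ``standard two-scale compactness'' and periodic averaging. Your stages (1)--(2) already handle continuous $c$. For Carath\'eodory $c$, uniform approximation by finite sums is not available in general; use dominated convergence in $L^1(\mathbb S^1;C_{\mathrm{per}})$ instead.

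If you want $c$ only bounded measurable in $y$, assume $\sup_y|c(\theta,y)-c(\theta',y)|\le\omega(|\theta-\theta'|)$ with $\omega(\delta)\to0$, and approximate $c$ rather than $g$:
\begin{itemize}
\item take a partition into intervals $I_j$ of length $\delta$ with $\theta_j\in I_j$;
\item replace $c$ by $\sum_j\mathbf 1_{I_j}(\theta)\,c(\theta_j,y)$, at cost $\omega(\delta)\|g\|_{L^1}$ on both sides;
\item apply the one-variable lemma to each $c(\theta_j,\cdot)$, tested against $\mathbf 1_{I_j}\int_0^T g\,dt\in L^1$.
\end{itemize}
With that restriction your argument is complete. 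It is also more careful than the paper's route: it uses only the strong convergence $\bar u_\varepsilon\to u_0$, never $\partial_\rho u_\varepsilon\to0$. The paper's bound $\int|\partial_\rho u_\varepsilon|^2\le C\varepsilon^{2m-2}$ does not give that decay when $m=1$.
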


\begin{remark}
The averaging occurs solely in the angular variable.
The thin geometry does not interfere with the periodic homogenization mechanism
at leading order.
\end{remark}

%-------------------------------------------------
\subsection{Homogenized effective equation}

We now identify the effective equation satisfied by $u_0$.

\begin{theorem}[Homogenized effective dynamics]
\label{thm:homogenized}
Under the assumptions of Section~\ref{sec:main-theorem} and for periodic
oscillatory coefficients,
the effective equation on $\mathbb S^1$ reads
\[
\partial_t u_0
=
J_0\,
\frac{\delta \mathcal H_{\mathrm{hom}}}{\delta u}(u_0)
+
\mathcal R_{\mathrm{hom}}(u_0),
\]
where $\mathcal H_{\mathrm{hom}}$ is obtained by replacing each oscillatory
coefficient by its average over the fast variable.
\end{theorem}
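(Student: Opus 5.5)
The plan is to pass to the limit in the weak formulation of~\eqref{eq:main-evolution}, reusing the compactness machinery behind Theorem~\ref{thm:effective-dynamics} and replacing only the identification step for terms with oscillatory coefficients by Proposition~\ref{prop:averaging}. Fix a test function $\varphi(\theta,t)$, smooth and independent of $\rho$. Multiply~\eqref{eq:main-evolution} by $\varphi$, integrate over $A_\varepsilon\times(0,T)$ with the measure $\varepsilon^{-1} r\,dr\,d\theta\,dt = (1+\varepsilon\rho)\,d\rho\,d\theta\,dt$, and move $J_\varepsilon$ onto $\varphi$ using its skew-adjointness in $\theta$ from (H2). This gives
\[
-\iint u_\varepsilon\,\partial_t\varphi
=
-\iint \frac{\delta\mathcal H_\varepsilon}{\delta u}(u_\varepsilon)\,J_\varepsilon\varphi
+\iint \mathcal R_\varepsilon(u_\varepsilon)\,\varphi
+(\text{initial term}).
\]
The left-hand side and the initial term converge by Proposition~\ref{prop:compactness}.

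The variational derivative is a finite sum of terms of the form
\[
\partial_r^a\partial_\theta^b\Bigl[c(\theta,\theta/\varepsilon^\alpha)\,\partial_u^{(a,b)}F(\ldots)\Bigr],
\]
where each $\partial_u^{(a,b)}F$ is a polynomial in the $\partial_r^{a'}\partial_\theta^{b'}u_\varepsilon$. Every $\partial_r$ in these terms will be integrated by parts onto the $\rho$-independent test function. Such terms therefore contribute only boundary terms at $\rho=\pm1$ and powers of $\varepsilon$ coming from the weight $(1+\varepsilon\rho)$. I will assume natural boundary conditions, so these boundary terms vanish.

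Proposition~\ref{prop:compactness} gives $\partial_\rho u_\varepsilon\to0$. Combined with the rescaling $\partial_r=\varepsilon^{-1}\partial_\rho$ and the uniform bound $\|u_\varepsilon\|_{\varepsilon,m}\le C(T)$, this shows that every monomial in $F$ containing a transverse derivative has vanishing contribution. What remains are purely tangential terms of the form
\[
\iint c(\theta,\theta/\varepsilon^\alpha)\,\Pi\bigl(\partial_\theta^{b} u_\varepsilon\bigr)\,\partial_\theta^{b'}J_\varepsilon\varphi ,
\]
with $\Pi$ a polynomial.

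The main obstacle is the product of the oscillating coefficient with the nonlinear polynomial expression $\Pi$. Proposition~\ref{prop:averaging} is stated only for a linear factor $u_\varepsilon$, so it does not apply directly. I would first try to upgrade the convergence of $\Pi(\partial_\theta^b u_\varepsilon)$ to strong $L^1$ convergence for $b<m$. For this I would use Aubin--Lions: $\partial_t u_\varepsilon$ is bounded in a negative Sobolev space by the equation, and there is a uniform $H^m$ bound in $\theta$. Strong convergence of these factors then lets the weak-$*$ convergence $c(\theta,\theta/\varepsilon^\alpha)\rightharpoonup\int c(\theta,y)\,dy$ pass through the product, exactly as in the proof of Proposition~\ref{prop:averaging}.

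The top-order derivatives $\partial_\theta^m u_\varepsilon$ converge only weakly. For them I would require that $F$ be at most quadratic, with the top-order dependence linear in each argument; this is the situation for all the models in Section~\ref{sec:integrable}. The resulting top-order term is of the form $c\,\partial_\theta^m u_\varepsilon\,\partial_\theta^m\psi$ with $\psi$ smooth, and its limit is identified using the two-scale limit $U$. Radial rigidity makes $U$ independent of $\rho$. I expect to have to impose, or derive from the coercivity in (H1), that $U=u_0$ at leading order, so that the product averages as claimed. Failing that, one would get a genuine cell-problem correction rather than the plain average, which is consistent with the warning in the remark about $\alpha=1$.

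Finally, $\mathcal R_\varepsilon(u_\varepsilon)$ converges weakly by (H3) together with the same strong convergence; I define $\mathcal R_{\mathrm{hom}}$ as this limit. Collecting all terms yields the weak form of the stated equation, with $\mathcal H_{\mathrm{hom}}$ equal to $\mathcal H_{\mathrm{eff}}$ with each coefficient replaced by its fast average.
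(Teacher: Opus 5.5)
\textbf{Genuine gap: the top-order oscillatory term.} Your route is the paper's route: the weak formulation from the proof of Theorem~\ref{thm:effective-dynamics}, with oscillatory coefficients handled by averaging (Proposition~\ref{prop:averaging} and two-scale compactness). The obstacle you isolate at top order is real. It is exactly the step the paper passes over, because Proposition~\ref{prop:averaging} only covers a strongly convergent linear factor $u_\varepsilon$.

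Your remedy does not close this gap. Assuming $U=u_0$ costs nothing: it already follows from the strong convergence in Proposition~\ref{prop:compactness}, since strong $L^2$ limits are two-scale limits. What matters is the two-scale limit of the top derivative. For $m=1$ it is $\partial_\theta u_0+\partial_y u_1(\theta,y,t)$, so
\[
\int c\,\partial_\theta u_\varepsilon\,\partial_\theta\psi
\longrightarrow
\iint c(\theta,y)\bigl(\partial_\theta u_0+\partial_y u_1\bigr)\,dy\,\partial_\theta\psi .
\]
Nothing in (H1), coercivity included, forces $\int c\,\partial_y u_1\,dy=0$.

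\textbf{The statement itself fails at top order.} This happens for every $\alpha>0$, not only $\alpha=1$; in the following example the solution is $\rho$-independent, so the thin scale plays no role. Take $m=1$, $J=\pm i$, Neumann conditions in $r$, and $\rho$-independent data $w^0$. Take the Hamiltonian $\int\bigl(c(\theta/\varepsilon^\alpha)|\partial_\theta u|^2+|u|^2\bigr)$ plus a transverse term weighted as in $\|\cdot\|_{\varepsilon,1}$, with $0<\lambda\le c\le\Lambda$ and $\varepsilon^{-\alpha}\in\mathbb N$.

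The solution stays $\rho$-independent and solves
\[
i\partial_t w_\varepsilon=-\partial_\theta\bigl(c(\theta/\varepsilon^\alpha)\partial_\theta w_\varepsilon\bigr)+w_\varepsilon
\]
(up to the sign convention for $J$), and energy conservation gives the uniform bounds. Classical homogenization gives strong resolvent convergence of these operators to $-c^*\partial_\theta^2+1$, where $c^*=\langle c^{-1}\rangle^{-1}$ is the harmonic mean in $y$. In two-scale terms, $u_1=\chi(y)\partial_\theta u_0$ with $c(1+\chi')\equiv c^*$. Trotter--Kato then shows that the limit solves $i\partial_t u_0=-c^*\partial_\theta^2u_0+u_0$. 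For $w^0=e^{ik\theta}$ with $k\neq0$, this differs from the equation with $\langle c\rangle$ unless $c$ is constant.

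So neither your argument nor the paper's can give the arithmetic average at top order. What your argument does prove, through the Aubin--Lions strong convergence of $\partial_\theta^b u_\varepsilon$ for $b<m$, is the version in which oscillatory coefficients multiply only terms of order $<m$. At top order the average must be replaced by the cell-problem coefficient.

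\textbf{Smaller point: transverse monomials.} $\partial_\rho u_\varepsilon\to0$ does not remove nonlinear monomials containing $\partial_r u_\varepsilon=\varepsilon^{-1}\partial_\rho u_\varepsilon$. The bound \eqref{eq:drho-bound} only gives $\|\partial_r u_\varepsilon\|\lesssim\varepsilon^{m-2}$, which does not vanish for $m\le 2$. Only the terms you integrate by parts onto the $\rho$-independent test function are genuinely harmless.
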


\begin{remark}
The homogenized Hamiltonian retains the polynomial structure of the original
model.
In particular, integrable equations remain integrable after homogenization at
leading order.
\end{remark}

%-------------------------------------------------
\subsection{Cell problems and higher-order corrections}

In the critical regime $\alpha=1$, first-order correctors may contribute to the
effective dynamics.

Assuming sufficient regularity, we seek an expansion
\[
u_\varepsilon(\theta,t)
=
u_0(\theta,t)
+
\varepsilon\,u_1\!\left(\theta,\frac{\theta}{\varepsilon},t\right)
+ o(\varepsilon).
\]

\begin{proposition}[Cell problem]
\label{prop:cell-problem}
The corrector $u_1$ solves a periodic cell problem of the form
\[
\mathcal L_y u_1
=
\mathcal S(u_0,\partial_\theta u_0,\ldots),
\]
where $\mathcal L_y$ is a differential operator in the fast variable $y$ and
$\mathcal S$ depends polynomially on $u_0$ and its tangential derivatives.
\end{proposition}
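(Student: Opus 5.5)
The plan is the classical two-scale asymptotic argument of periodic homogenization, adapted to the angular variable. We insert the ansatz
\[
u_\varepsilon(\theta,t)=u_0(\theta,t)+\varepsilon\,u_1(\theta,y,t)+\varepsilon^2u_2(\theta,y,t)+\cdots,\qquad y=\theta/\varepsilon,
\]
into the evolution equation~\eqref{eq:main-evolution}. By radial rigidity (Proposition~\ref{prop:radial-rigidity}) together with Proposition~\ref{prop:compactness}, the terms $u_0$ and $u_1$ may be taken independent of $\rho$, and transverse derivatives contribute only at higher order. The chain rule then gives $\partial_\theta\mapsto\partial_\theta+\varepsilon^{-1}\partial_y$. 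Expanding the variational derivative $\delta\mathcal H_\varepsilon/\delta u$ yields a hierarchy in powers of $\varepsilon$. Since $F$ is polynomial in the derivatives by (H1), this hierarchy is an expansion of polynomial differential expressions with coefficients $c(\theta,y)$.

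The key steps proceed in the following order.

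\textbf{Step 1: most singular order.} Collect the terms of most negative order in $\varepsilon$ that involve $u_1$. These are linear in $u_1$, because $u_1$ enters multiplied by $\varepsilon$ while each $\partial_y$ gains $\varepsilon^{-1}$. From them, identify the operator
\[
\mathcal L_y=\partial_y^{*}\,\partial_\xi^2F(\cdot,\theta,y)\,\partial_y,
\]
the linearization of $\delta\mathcal H/\delta u$ in the fast variable around $u_0$, taken at the top tangential order. The remaining $u_0$-dependent terms of the same order form the source
\[
\mathcal S=-\partial_y^{*}\bigl(\partial_\xi F(u_0,\partial_\theta u_0,\ldots,\theta,y)\bigr).
\]
This source is polynomial in $u_0$ and its tangential derivatives, by the polynomial growth and locality assumptions.

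\textbf{Step 2: solvability.} Check the Fredholm condition: $\mathcal S$ must have zero $y$-average over the period. This holds automatically because $\mathcal S$ is a $y$-divergence. This is the same averaging mechanism as in Proposition~\ref{prop:averaging}, and it is what produces $\mathcal H_{\mathrm{hom}}$ in Theorem~\ref{thm:homogenized} at the next order.

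\textbf{Step 3: unique solvability of the cell problem.} Show that $\mathcal L_y$ is invertible on periodic functions of mean zero. We use the coercivity of $\mathcal H_\varepsilon$, read at the level of the symbol: the Hessian $\partial_\xi^2F$ in the top-order variable is uniformly positive. Lax--Milgram on $H^m_{\#}(\mathbb S^1_y)/\mathbb R$ then gives a unique solution $u_1$, normalized to have zero mean in $y$.

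\textbf{Step 4: justification.} Justify the ansatz by an energy estimate on the remainder
\[
w_\varepsilon=u_\varepsilon-u_0-\varepsilon u_1(\theta,\theta/\varepsilon,t).
\]
Test the equation satisfied by $w_\varepsilon$ against $w_\varepsilon$ in $\|\cdot\|_{\varepsilon,m}$. Skew-adjointness of $J_\varepsilon$ (H2) kills the leading term, and the Lipschitz bound (H3) controls $\mathcal R_\varepsilon$. The residual is $O(\varepsilon)$ after adding a boundary-layer-free $\varepsilon^2u_2$, which is legitimate because $\mathbb S^1$ has no boundary. Gronwall's inequality then gives $\|w_\varepsilon\|=o(\varepsilon)$ in $L^2_{\mathrm{loc}}$.

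The main obstacle is Step 3 in the genuinely nonlinear, higher-order setting. When $m>1$ the fast operator mixes several orders of $\partial_y$, and its coefficients depend on $u_0$ through the nonlinearity. Coercivity of $\mathcal H_\varepsilon$ in norm does not obviously give uniform ellipticity of the linearized symbol. I would first impose, or derive in the polynomial case, a Legendre--Hadamard type condition on the top-order part of $F$. I would also restrict to $u_0$ in a bounded set of $H^m(\mathbb S^1)$, using the uniform bound $C(T)$, so that the coefficients of $\mathcal L_y$ stay in a compact range where ellipticity is uniform.

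A secondary difficulty is the regularity of $u_1$ in $\theta$ needed for Step 4. This requires differentiating the cell problem in $\theta$, which in turn requires $u_0$ to have a few more derivatives than $H^m$. That is exactly the role of the hypothesis ``assuming sufficient regularity'' preceding the statement.
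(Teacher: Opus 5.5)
The paper gives no proof of Proposition~\ref{prop:cell-problem}. The closest argument is the formal order-matching in the proof of Theorem~\ref{thm:defect}: insert the ansatz, match powers of $\varepsilon$, assume the cell operator is invertible on zero-mean functions, and invoke a ``standard energy estimate''. Your Steps 1--3 follow that template in more detail. The observation in Step 2 that the source is a $y$-divergence is the right solvability argument in the linear case.

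\textbf{Step 1 contains a genuine error.} Take $m=1$ and write $F=F(u,\xi,\theta,y)$ with $\xi=\partial_\theta u$. Since $\partial_\theta\bigl[\varepsilon u_1(\theta,\theta/\varepsilon,t)\bigr]=\partial_y u_1+\varepsilon\,\partial_\theta u_1$, the corrector enters the argument $\xi$ at order one, exactly like $\partial_\theta u_0$. Your counting ($\varepsilon$ times $\varepsilon^{-1}$) shows precisely this; it does not show linearity. The order-$\varepsilon^{-1}$ part of $\delta\mathcal H_\varepsilon/\delta u=\partial_uF-\partial_\theta[\partial_\xi F]$ is therefore
\[
-\partial_y\bigl[\partial_\xi F(u_0,\partial_\theta u_0+\partial_y u_1,\theta,y)\bigr],
\]
which is nonlinear in $\partial_y u_1$ unless $F$ is quadratic in $\xi$. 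Your $\mathcal L_y=\partial_y^*\,\partial_\xi^2F\,\partial_y$ is its linearization at $\partial_y u_1=0$, and it discards terms of size $|\partial_y u_1|^2=O(1)$. There are two ways to repair this.
\begin{itemize}
\item Restrict to $F=\tfrac12c(\theta,y)|\partial_\theta u|^2+G(u,\theta,y)$. This covers all the paper's examples, and your formulas are then exact: $-\partial_y(c\,\partial_y u_1)=(\partial_y c)\,\partial_\theta u_0$, so $u_1=\chi(\theta,y)\,\partial_\theta u_0$ with $-\partial_y(c\,\partial_y\chi)=\partial_y c$.
\item Keep the nonlinear cell operator, whose coefficients depend on $(u_0,\partial_\theta u_0)$. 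Replace Fredholm/Lax--Milgram by the direct method for $v\mapsto\int F(u_0,\partial_\theta u_0+\partial_y v,\theta,y)\,dy$, under strict convexity in $\xi$.
\end{itemize}

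Two smaller points in Step 1:
\begin{itemize}
\item When $J_\varepsilon=\partial_\theta$, the most singular order of the evolution equation is $\varepsilon^{-2}\partial_y\bigl[\mathcal L_yu_1-\mathcal S\bigr]=0$. You need the zero-mean property once more to remove the outer $\partial_y$.
\item For $m\ge2$ the difficulty is not only ellipticity; the scaling of the ansatz is wrong. The term $\varepsilon\,\partial_\theta^m\bigl[u_1(\theta,\theta/\varepsilon)\bigr]=\varepsilon^{1-m}\partial_y^m u_1+\cdots$ is singular. The most singular equation is homogeneous in $u_1$ and, given positivity of the top-order coefficient, forces $\partial_y^m u_1=0$. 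So the order-$\varepsilon$ corrector is independent of $y$. The nontrivial cell problem, with $\mathcal L_y=(\partial_y^m)^*a\,\partial_y^m$ in the quadratic case ($a$ the coefficient of $\tfrac12|\partial_\theta^m u|^2$), belongs to the corrector of order $\varepsilon^m$.
\end{itemize}

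\textbf{Step 4 would fail as written.} It is not needed, since the statement takes the expansion as an ansatz ``assuming sufficient regularity''. Still, skew-adjointness of $J_\varepsilon$ does not make $J_\varepsilon A_\varepsilon$ skew-adjoint, where $A_\varepsilon$ is the Hessian of $\mathcal H_\varepsilon$, once the coefficients oscillate. For $\partial_t w=-\partial_\theta^2\bigl(c(\theta/\varepsilon)\,\partial_\theta w\bigr)$ one finds
\[
\frac{d}{dt}\,\frac12\|w\|_{L^2}^2=\frac1{2\varepsilon}\int_{\mathbb S^1}(\partial_y c)(\theta/\varepsilon)\,|\partial_\theta w|^2\,d\theta .
\]
Gronwall cannot absorb this, and testing in $\|\cdot\|_{\varepsilon,m}$ produces even worse powers of $\varepsilon^{-1}$. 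The linear flow does conserve $\langle A_\varepsilon w,w\rangle$. But closing an estimate in that norm requires the residual to be small in $H^1$, where each $\partial_\theta$ of an oscillating term costs $\varepsilon^{-1}$. That means further correctors and well-prepared initial data. Moreover, an $O(\varepsilon)$ residual only yields $\|w_\varepsilon\|=O(\varepsilon)$, not $o(\varepsilon)$.

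\textbf{Step 2 should not be identified with Proposition~\ref{prop:averaging} or with $\mathcal H_{\mathrm{hom}}$ of Theorem~\ref{thm:homogenized}.} Solving your cell problem gives
\[
c\,(\partial_\theta u_0+\partial_y u_1)=\langle c^{-1}\rangle_y^{-1}\,\partial_\theta u_0,
\qquad
\langle f\rangle_y=\frac1{2\pi}\int_0^{2\pi}f\,dy .
\]
So the corrector feeds back at leading order, and the effective coefficient of $|\partial_\theta u|^2$ is the harmonic mean $\langle c^{-1}\rangle_y^{-1}$, not the arithmetic mean. Arithmetic averaging is correct only for zeroth-order terms, which is all Proposition~\ref{prop:averaging} covers.
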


\begin{remark}
Such correctors may generate non-local or higher-order dispersive effects in the
effective equation, depending on the structure of $\mathcal L_y$.
\end{remark}

%-------------------------------------------------
\subsection{Interaction with dimension reduction}

We emphasize that homogenization and thin-limit reduction commute at leading
order:
\[
\lim_{\varepsilon\to0}
\bigl(\text{thin limit} \circ \text{homogenization}\bigr)
=
\lim_{\varepsilon\to0}
\bigl(\text{homogenization} \circ \text{thin limit}\bigr).
\]

\begin{proposition}[Commutation of limits]
At leading order, the effective Hamiltonian obtained by first homogenizing and
then reducing dimension coincides with the one obtained by first reducing
dimension and then homogenizing.
\end{proposition}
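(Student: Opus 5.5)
The plan is to reduce both compositions to the same explicit operation on the integrand $F$ and to check that the two operations commute. Write $F=F(\xi,\theta,y)$ with $\xi=(\xi_{a,b})_{a+b\le m}$ standing for $\partial_r^a\partial_\theta^b u$. Two maps act on such integrands. The thin-limit reduction $\mathsf T$, described in Theorem~\ref{thm:effective-dynamics}, sets every transverse slot $\xi_{a,b}$ with $a\ge1$ to zero. It also integrates the weight $\varepsilon(1+\varepsilon\rho)\,d\rho$ over $(-1,1)$ and renormalizes; at leading order this gives the factor $1$ in the convention of Section~\ref{sec:geometry}. The homogenization $\mathsf A$, described in Theorem~\ref{thm:homogenized}, replaces $F(\xi,\theta,y)$ by its fast average $\int_{\mathbb S^1}F(\xi,\theta,y)\,dy$, as in Proposition~\ref{prop:averaging}. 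The statement then amounts to the identity $\mathsf A\circ\mathsf T=\mathsf T\circ\mathsf A$ on admissible $F$.

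The first step is the formal commutation. $\mathsf T$ is a substitution in the $\xi$-variables together with an integration in $\rho$. $\mathsf A$ is an integration in $y$ at fixed $(\xi,\theta)$. Since $F$ is smooth with polynomial growth and the coefficients are bounded and periodic, Fubini applies to both integrations, and the substitution $\xi_{a,b}=0$ for $a\ge1$ does not involve $y$. Hence
\[
\int_{\mathbb S^1}F\bigl((\xi_{0,b}),0,\theta,y\bigr)\,dy
\]
is obtained in either order. Because $F$ is polynomial in $\xi$, one can even check this monomial by monomial: each coefficient $c(\theta,y)$ is averaged, and each monomial containing a radial derivative is discarded.

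The second step is to justify that each composition really produces the effective Hamiltonian of the corresponding two-step limit. The route "homogenize then reduce" will be handled by first applying Theorem~\ref{thm:homogenized} at fixed geometry, which gives $\mathcal H_{\mathrm{hom}}$, and then applying Theorem~\ref{thm:effective-dynamics} with non-oscillatory coefficients. The route "reduce then homogenize" will be handled by first applying Theorem~\ref{thm:effective-dynamics} with $y$ frozen as a parameter, and then applying Proposition~\ref{prop:averaging} on $\mathbb S^1$. In both routes, the convergence of the nonlinear terms will use the strong $L^2_{\mathrm{loc}}$ compactness of Proposition~\ref{prop:compactness}, the polynomial growth of $F$, and the weak-$*$ convergence $c(\theta,\theta/\varepsilon^\alpha)\rightharpoonup\int c\,dy$. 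A weakly convergent coefficient multiplied by strongly convergent polynomial expressions in $u_\varepsilon$ and its tangential derivatives passes to the limit.

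The main obstacle is that monomials containing radial derivatives do not vanish trivially. Proposition~\ref{prop:compactness} gives $\partial_\rho u_\varepsilon\to0$, but $\partial_r=\varepsilon^{-1}\partial_\rho$, so a term such as $c(\theta,y)|\partial_r u|^2$ may have a nonzero limit. This is compounded in the critical regime $\alpha=1$, where a product of two weakly convergent factors (an oscillating coefficient times a defect) can generate extra terms of cell-problem type, as in Proposition~\ref{prop:cell-problem}. The first thing I would try is to restrict, as the phrase "at leading order" permits, to the $\varepsilon^0$ term of the energy. Using the renormalized norm $\|\cdot\|_{\varepsilon,m}$, I would show that the contributions of monomials with $a\ge1$ are bounded by $\|\partial_\rho u_\varepsilon\|_{L^2}$ times uniformly bounded factors, and hence tend to $0$ in both orders. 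For $\alpha\neq1$, scale separation makes the correctors of the two limits decouple. For $\alpha=1$, I would note that the correctors enter only at order $\varepsilon$ in the expansion. They therefore do not affect the leading-order Hamiltonian, and the commutation holds at that order.
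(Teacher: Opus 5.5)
The paper gives no separate proof of this proposition. Its only support is that $\mathcal H_{\mathrm{eff}}$ and $\mathcal H_{\mathrm{hom}}$ are \emph{defined} by ``discard transverse derivatives and average the coefficients'' and ``average the coefficients'', plus the remark invoking radial rigidity and polynomial structure. Your Step~1 ($\mathsf A\circ\mathsf T=\mathsf T\circ\mathsf A$ on integrands, monomial by monomial) is exactly that argument, and at this level it is complete. For $\rho$-independent $u_0$ every monomial with $a\ge1$ vanishes identically. Then $(2\varepsilon)^{-1}\mathcal H_\varepsilon(u_0)$ is the integral over $\mathbb S^1$ of $F$ with those slots set to zero and fast variable $\theta/\varepsilon^\alpha$, which converges to the same integral with $F$ replaced by its $y$-average.

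The gap is in your last two paragraphs, where you try to show that the limits of the \emph{dynamics} are computed by these operations. First, the estimate on radial monomials is not available. Substituting $\partial_r=\varepsilon^{-1}\partial_\rho$ and $r\,dr\,d\theta=\varepsilon(1+\varepsilon\rho)\,d\rho\,d\theta$, every term of $\|u\|_{\varepsilon,m}^2$ equals $\varepsilon^{2-2m}\int\!\!\int|\partial_\rho^a\partial_\theta^bu|^2(1+\varepsilon\rho)\,d\rho\,d\theta$, with the same power for all $(a,b)$. So the uniform bound $\sup_t\|u_\varepsilon(t)\|_{\varepsilon,m}\le C(T)$ gives only $\int\!\!\int|\partial_\rho^a\partial_\theta^bu_\varepsilon|^2\lesssim\varepsilon^{2m-2}$, identically for radial and tangential slots. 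The contribution of $c\,|\partial_r^a\partial_\theta^bu|^2$ to $(2\varepsilon)^{-1}\mathcal H_\varepsilon$ is $\tfrac12\varepsilon^{-2a}\int\!\!\int c\,|\partial_\rho^a\partial_\theta^bu_\varepsilon|^2(1+\varepsilon\rho)$. This bound controls it only by $\varepsilon^{2m-2-2a}$, which is not small for $a\ge m-1$ and is unbounded for $a=m$. The factor $\varepsilon^{-1}$ per radial derivative that you yourself flagged is compensated by nothing in (H1)--(H4). Hence there is no bound ``by $\|\partial_\rho u_\varepsilon\|_{L^2}$ times uniformly bounded factors''. For ZK ($m=1$, with $\tfrac12|\partial_ru|^2=\tfrac12\varepsilon^{-2}|\partial_\rho u|^2$ in the Hamiltonian) you would need $\|\partial_\rho u_\varepsilon\|_{L^2}=o(\varepsilon)$. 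That means well-prepared data plus a propagation estimate derived from the equation.

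Second, the corrector argument fails, and not only at $\alpha=1$. A corrector $\varepsilon^\alpha u_1(\theta,\theta/\varepsilon^\alpha,t)$ is small in amplitude, but its tangential derivative $\partial_yu_1+O(\varepsilon^\alpha)$ is of order one. It therefore enters every monomial containing $\partial_\theta u$ at leading order. For $F=\tfrac12c(\theta/\varepsilon^\alpha)|\partial_\theta u|^2$ the leading energy is $\tfrac12\int\!\!\int c(y)\,|\partial_\theta u_0+\partial_yu_1|^2\,dy\,d\theta$. The cell problem $\partial_y\bigl(c(\partial_\theta u_0+\partial_yu_1)\bigr)=0$ then produces the harmonic mean $\bigl(\int_{\mathbb S^1}c^{-1}\,dy\bigr)^{-1}$ (normalized $dy$), not $\int_{\mathbb S^1}c\,dy$.

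The same point undermines the weak--strong pairing in your Step~2. Proposition~\ref{prop:compactness} gives strong convergence of $u_\varepsilon$, and by interpolation of tangential derivatives of order $<m$, but not of $\partial_\theta^m u_\varepsilon$. So the pairing handles lower-order terms (as in Proposition~\ref{prop:averaging}) but not the top-order part of $F$. Moreover, Theorems~\ref{thm:effective-dynamics} and~\ref{thm:homogenized} each send thickness and oscillation to zero simultaneously. Your two ``routes'' therefore presuppose a decoupled oscillation scale that neither theorem provides.

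In short, Step~1 proves the proposition in the paper's formal sense, and for that reading you should simply drop the last paragraph. An analytic commutation result would need ingredients your proposal does not supply: propagated transverse smallness, top-order cell problems in $y$, and a proof that the $\rho$- and $y$-problems decouple.
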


\begin{remark}
This commutation property relies crucially on radial rigidity and on the
polynomial structure of the differential operators.
It may fail for more general non-local or non-polynomial models.
\end{remark}

%-------------------------------------------------
\subsection{Discussion}

Rapid oscillations introduce additional geometric structure but do not alter the
fundamental mechanism of effective dimension reduction.
At leading order, the thin geometry enforces one-dimensional dynamics, while
homogenization acts as a renormalization of coefficients.
Higher-order corrections encode genuinely multi-scale effects and may lead to
new dispersive or non-local phenomena.

\section{Proofs of the main results}
\label{sec:proofs}

Throughout this section, we use the rescaled variables
\(
\rho=(r-1)/\varepsilon\in(-1,1)
\)
and the notation of Sections~\ref{sec:geometry}--\ref{sec:main-theorem}.
All convergences are understood along a subsequence, without relabeling.

%-------------------------------------------------
\subsection{Uniform bounds and time regularity}

We start from the uniform energy estimate assumed in Section~\ref{sec:operators}:
there exists \(C(T)>0\) such that
\begin{equation}
\label{eq:unif-bound-proof}
\sup_{t\in[0,T]}\|u_\varepsilon(t)\|_{\varepsilon,m}\le C(T),
\end{equation}
uniformly for small \(\varepsilon\).

When needed for compactness in time, we further assume that
\begin{equation}
\label{eq:time-bound}
\|\partial_t u_\varepsilon\|_{L^2(0,T;H^{-m}(A_\varepsilon))}\le C(T),
\end{equation}
which holds for Galerkin solutions and then passes to the limit by standard
compactness arguments, provided the nonlinearities satisfy the local
Lipschitz bounds of (H3).
We will use~\eqref{eq:time-bound} only to justify strong convergence in space-time.

%-------------------------------------------------
\subsection{Proof of Proposition~\ref{prop:compactness}}
\label{subsec:proof-compactness}

\begin{proof}[Proof of Proposition~\ref{prop:compactness}]
We prove radial rigidity and compactness.

\smallskip
\noindent\textbf{Step 1: Control of transverse derivatives.}
For \(m\ge1\), the definition of \(\|\cdot\|_{\varepsilon,m}\) includes the term
(with \(a=1,b=0\))
\[
\varepsilon^{2-(2m-1)}\int_{A_\varepsilon}|\partial_r u_\varepsilon|^2\,r\,dr\,d\theta
=
\varepsilon^{3-2m}\int_{A_\varepsilon}|\partial_r u_\varepsilon|^2\,r\,dr\,d\theta.
\]
In rescaled coordinates, \(\partial_r=\varepsilon^{-1}\partial_\rho\) and
\(r\,dr\,d\theta=\varepsilon(1+\varepsilon\rho)d\rho\,d\theta\), hence
\[
\int_{A_\varepsilon}|\partial_r u_\varepsilon|^2\,r\,dr\,d\theta
=
\varepsilon^{-1}\int_{-1}^1\!\!\int_{\mathbb S^1}
|\partial_\rho u_\varepsilon|^2(1+\varepsilon\rho)\,d\rho\,d\theta.
\]
Therefore, from~\eqref{eq:unif-bound-proof} we obtain
\[
\varepsilon^{3-2m}\cdot \varepsilon^{-1}
\int_{-1}^1\!\!\int_{\mathbb S^1}
|\partial_\rho u_\varepsilon|^2(1+\varepsilon\rho)\,d\rho\,d\theta
\le C(T)^2,
\]
i.e.
\begin{equation}
\label{eq:drho-bound}
\int_{-1}^1\!\!\int_{\mathbb S^1}
|\partial_\rho u_\varepsilon|^2\,d\rho\,d\theta
\le C(T)^2\,\varepsilon^{2m-2}.
\end{equation}
Since \(m\ge1\), the right-hand side tends to \(0\), which proves
\(\partial_\rho u_\varepsilon\to0\) in \(L^2((-1,1)\times\mathbb S^1\times(0,T))\).

\smallskip
\noindent\textbf{Step 2: Poincar\'e inequality in the transverse variable.}
For each fixed \((\theta,t)\), the 1D Poincar\'e inequality on \((-1,1)\) yields
\[
\int_{-1}^1\Bigl|u_\varepsilon(\rho,\theta,t)-\tfrac12\!\int_{-1}^1u_\varepsilon(\sigma,\theta,t)\,d\sigma\Bigr|^2\,d\rho
\le C\int_{-1}^1|\partial_\rho u_\varepsilon(\rho,\theta,t)|^2\,d\rho,
\]
with \(C\) independent of \(\varepsilon\).
Integrating in \(\theta\) and \(t\) and using~\eqref{eq:drho-bound} gives
\begin{equation}
\label{eq:transverse-avg}
\|u_\varepsilon-\bar u_\varepsilon\|_{L^2((-1,1)\times\mathbb S^1\times(0,T))}
\longrightarrow 0,
\end{equation}
where
\(\bar u_\varepsilon(\theta,t)=\tfrac12\int_{-1}^1 u_\varepsilon(\rho,\theta,t)\,d\rho\).

\smallskip
\noindent\textbf{Step 3: Compactness in \(\theta\).}
The uniform bound~\eqref{eq:unif-bound-proof} also yields control of tangential
derivatives up to order \(m\), hence
\[
\sup_{\varepsilon}\|\bar u_\varepsilon\|_{L^2(0,T;H^m(\mathbb S^1))}\le C(T).
\]
Together with the time regularity bound~\eqref{eq:time-bound} (or by Galerkin
finite-dimensionality), the Aubin--Lions lemma implies that \(\bar u_\varepsilon\)
is relatively compact in \(L^2(\mathbb S^1\times(0,T))\).
Thus, up to a subsequence,
\[
\bar u_\varepsilon \to u_0 \quad\text{strongly in }L^2(\mathbb S^1\times(0,T)).
\]

\smallskip
\noindent\textbf{Step 4: Strong convergence of \(u_\varepsilon\).}
Combining the strong convergence of \(\bar u_\varepsilon\) with
\eqref{eq:transverse-avg} yields
\[
u_\varepsilon \to u_0
\quad\text{strongly in }L^2_{\mathrm{loc}}((-1,1)\times\mathbb S^1\times(0,T)),
\]
and the limit is independent of \(\rho\).
\end{proof}

%-------------------------------------------------
\subsection{Proof of Theorem~\ref{thm:effective-dynamics}}
\label{subsec:proof-effective}

\begin{proof}[Proof of Theorem~\ref{thm:effective-dynamics}]
We identify the weak limit in~\eqref{eq:main-evolution}.

\smallskip
\noindent\textbf{Step 1: Test functions and reduction to tangential tests.}
Let \(\varphi\in C_c^\infty((0,T);C^\infty(\mathbb S^1))\) and consider the lifted
test function \(\Phi(\rho,\theta,t)=\varphi(\theta,t)\), independent of \(\rho\).
Multiplying~\eqref{eq:main-evolution} by \(\Phi\) and integrating over
\((-1,1)\times\mathbb S^1\times(0,T)\), we obtain
\begin{equation}
\label{eq:weak-form}
\int_0^T\!\!\int_{-1}^1\!\!\int_{\mathbb S^1}
\partial_t u_\varepsilon\,\overline{\varphi}\,d\theta\,d\rho\,dt
=
\int_0^T\!\!\int_{-1}^1\!\!\int_{\mathbb S^1}
J_\varepsilon\frac{\delta\mathcal H_\varepsilon}{\delta u}(u_\varepsilon)\,
\overline{\varphi}\,d\theta\,d\rho\,dt
+
\int_0^T\!\!\int_{-1}^1\!\!\int_{\mathbb S^1}
\mathcal R_\varepsilon(u_\varepsilon)\,\overline{\varphi}\,d\theta\,d\rho\,dt.
\end{equation}

\smallskip
\noindent\textbf{Step 2: Convergence of the left-hand side.}
By strong convergence \(u_\varepsilon\to u_0\) in \(L^2\) and the time regularity
assumption~\eqref{eq:time-bound}, we may pass to the limit in the distributional
sense and obtain
\[
\int_0^T\!\!\int_{\mathbb S^1} \partial_t u_0(\theta,t)\,\overline{\varphi(\theta,t)}\,d\theta\,dt.
\]

\smallskip
\noindent\textbf{Step 3: Identification of the Hamiltonian term.}
Since \(u_\varepsilon\to u_0(\theta,t)\) strongly and \(\partial_\rho u_\varepsilon\to0\)
in \(L^2\), all mixed derivatives involving at least one \(\partial_\rho\)
vanish in the limit in the sense of distributions.
Therefore, in the variational derivative
\(\delta\mathcal H_\varepsilon/\delta u\), only the purely tangential terms
survive.
Moreover, if coefficients depend on \(\theta/\varepsilon^\alpha\), the standard
two-scale compactness (or periodic averaging) yields convergence to the averaged
coefficients (this will be detailed in Section~\ref{sec:homogenization}).
Hence
\[
\frac{\delta\mathcal H_\varepsilon}{\delta u}(u_\varepsilon)
\;\rightharpoonup\;
\frac{\delta\mathcal H_{\mathrm{eff}}}{\delta u}(u_0)
\quad\text{in }H^{-m}(\mathbb S^1)
\]
and, since \(J_\varepsilon\) acts only in \(\theta\),
\[
J_\varepsilon\frac{\delta\mathcal H_\varepsilon}{\delta u}(u_\varepsilon)
\;\rightharpoonup\;
J_0\frac{\delta\mathcal H_{\mathrm{eff}}}{\delta u}(u_0)
\]
in distributions on \(\mathbb S^1\times(0,T)\).

\smallskip
\noindent\textbf{Step 4: Identification of the perturbation term.}
By assumption (H3), the perturbation \(\mathcal R_\varepsilon\) is compatible with
the renormalized estimates and depends at most on tangential derivatives at the
leading scale.
Thus, using the strong convergence of \(u_\varepsilon\) and uniform bounds, we
obtain
\[
\mathcal R_\varepsilon(u_\varepsilon)\rightharpoonup \mathcal R_{\mathrm{eff}}(u_0)
\quad\text{in distributions}.
\]

\smallskip
\noindent\textbf{Step 5: Conclusion.}
Passing to the limit in~\eqref{eq:weak-form} yields the weak formulation of
\eqref{eq:effective}, hence \(u_0\) solves the effective equation.
\end{proof}

%-------------------------------------------------
\subsection{Proof of Theorem~\ref{thm:defect}}
\label{subsec:proof-defect}

\begin{proof}[Proof of Theorem~\ref{thm:defect}]
We give a detailed derivation under the additional assumptions stated in the theorem.

\smallskip
\noindent\textbf{Step 1: Asymptotic expansion of the operator.}
Assume that the variational derivative admits an expansion
\[
\frac{\delta\mathcal H_\varepsilon}{\delta u}
=
\varepsilon^{-2}\mathcal L_\rho
+\mathcal L_\theta
+\varepsilon^\beta \mathcal L_1
+o(\varepsilon^\beta),
\]
where:
\begin{itemize}
\item \(\mathcal L_\rho\) is a transverse operator depending only on \(\rho\),
typically of elliptic type in \(\rho\);
\item \(\mathcal L_\theta\) is the effective tangential operator;
\item \(\mathcal L_1\) collects the first subleading contributions.
\end{itemize}
Such an expansion is obtained by inserting \(\partial_r=\varepsilon^{-1}\partial_\rho\)
and expanding smooth coefficients in powers of \(\varepsilon\).

\smallskip
\noindent\textbf{Step 2: Ansatz and matching of orders.}
We seek an expansion
\(
u_\varepsilon=u_0+\varepsilon^\beta u_1+o(\varepsilon^\beta)
\)
with \(u_0\) independent of \(\rho\).
Insert the ansatz into~\eqref{eq:main-evolution} and match the leading orders.

At order \(\varepsilon^{-2}\), we obtain
\[
J_0\,\mathcal L_\rho u_0 = 0,
\]
which holds since \(u_0\) is independent of \(\rho\) and \(\mathcal L_\rho\) acts
in \(\rho\).

At order \(\varepsilon^{0}\), we recover the effective equation for \(u_0\),
namely~\eqref{eq:effective}.

At order \(\varepsilon^\beta\), we obtain a linear equation for \(u_1\) of the form
\[
J_0\,\mathcal L_\rho u_1
=
\mathcal S(u_0,\partial_\theta u_0,\ldots),
\]
where \(\mathcal S\) collects contributions from \(\mathcal L_1\) and from the
Taylor expansion of nonlinear terms around \(u_0\).
This yields the claimed cell problem.

\smallskip
\noindent\textbf{Step 3: Solvability and normalization.}
Assume that \(\mathcal L_\rho\) is invertible on the subspace of functions
with zero transverse mean (a standard situation for elliptic transverse operators).
Then we define \(u_1\) as the unique solution with zero transverse average:
\[
\int_{-1}^1 u_1(\rho,\theta,t)\,d\rho = 0.
\]
This fixes the decomposition \(u_\varepsilon=u_0+\varepsilon^\beta u_1+\cdots\).

\smallskip
\noindent\textbf{Step 4: Remainder estimate.}
A standard energy estimate on the remainder
\(r_\varepsilon:=u_\varepsilon-u_0-\varepsilon^\beta u_1\)
(using coercivity of the transverse operator and the uniform bounds)
implies
\(
\|r_\varepsilon\|_{L^2_{\mathrm{loc}}}=o(\varepsilon^\beta)
\),
which concludes the proof.
\end{proof}

%-------------------------------------------------
\subsection{Proofs of Corollaries~\ref{cor:galerkin} and~\ref{cor:sobolev-order}}

\begin{proof}[Proof of Corollary~\ref{cor:galerkin}]
Fix truncation parameters \((N,K)\).
For each \(\varepsilon>0\), the Galerkin approximation \(u_\varepsilon^{N,K}\)
belongs to a finite-dimensional space \(V_\varepsilon^{N,K}\).
By the fixed-degree thin-limit property recalled in
Section~\ref{subsec:sobolev-op}, the Gram matrices and all projected operator
coefficients converge as \(\varepsilon\to0\).
Therefore, the Galerkin ODE system converges (coefficientwise) to the Galerkin
system associated with the effective equation on \(\mathbb S^1\).
Continuous dependence of solutions on parameters for ODEs yields
\(u_\varepsilon^{N,K}\to u_0^{N,K}\).
\end{proof}

\begin{proof}[Proof of Corollary~\ref{cor:sobolev-order}]
Fix \((N,K)\) and let \(s\mapsto \langle\cdot,\cdot\rangle_{\varepsilon,s}\) be a
continuous family of strictly positive inner products on
\(\mathcal P_{\le N}(A_\varepsilon)\).
As noted in Section~\ref{subsec:change-sobolev}, Gram--Schmidt depends
continuously on the inner product in finite dimension; hence the bases
\(\{P_{n,\varepsilon}^{(s)}\}_{n\le N}\) vary continuously with \(s\).
Consequently, the Galerkin matrices vary continuously with \(s\), and the same
ODE stability argument implies that the Galerkin solution depends continuously on
\(s\).
Passing to the thin limit at fixed \((N,K)\) yields the claimed invariance of the
effective dynamics at the level of fixed-degree approximations.
\end{proof}

\appendix

\section{Explicit computations and illustrative examples}
\label{app:explicit}

This appendix gathers explicit low-degree computations illustrating the
mechanisms developed in the main text.
They are not used in the proofs of the main theorems but provide concrete insight
into the thin-limit reduction, the structure of Sobolev orthogonal polynomials,
and the behavior of Galerkin approximations.

%-------------------------------------------------
\subsection{Fourier modes and radial rigidity}

Let $u_\varepsilon(\rho,\theta)$ admit the Fourier expansion
\[
u_\varepsilon(\rho,\theta)
=
\sum_{k\in\mathbb Z} u_{\varepsilon,k}(\rho)\,e^{ik\theta}.
\]
The renormalized Sobolev norm with $m=1$ reads
\[
\|u_\varepsilon\|_{\varepsilon,1}^2
=
\int_{-1}^1\!\!\int_{\mathbb S^1}
\Bigl(
|\partial_\rho u_\varepsilon|^2
+
|\partial_\theta u_\varepsilon|^2
+
|u_\varepsilon|^2
\Bigr)
\,d\rho\,d\theta
+
O(\varepsilon).
\]

If $\|u_\varepsilon\|_{\varepsilon,1}$ is uniformly bounded, then
\[
\int_{-1}^1 |\partial_\rho u_{\varepsilon,k}(\rho)|^2\,d\rho \longrightarrow 0
\quad\text{for each fixed }k,
\]
which implies
\[
u_{\varepsilon,k}(\rho)\longrightarrow c_k
\quad\text{in }L^2(-1,1).
\]
Hence,
\[
u_\varepsilon(\rho,\theta)\longrightarrow
\sum_{k\in\mathbb Z} c_k e^{ik\theta}
=
u_0(\theta),
\]
illustrating the rigidity mechanism at the level of individual Fourier modes.

%-------------------------------------------------
\subsection{Low-degree Sobolev orthogonal polynomials}

We consider the Sobolev inner product of order $m=1$:
\[
\langle p,q\rangle_{\varepsilon,1}
=
\int_{A_\varepsilon}
\bigl(
p\overline q
+
\partial_r p\,\overline{\partial_r q}
+
\partial_\theta p\,\overline{\partial_\theta q}
\bigr)
\,r\,dr\,d\theta.
\]

Restricting to polynomials of total degree $\le1$, a natural basis is
\[
\{1,\; r-1,\; \cos\theta,\; \sin\theta\}.
\]

A direct computation shows that:
\begin{itemize}
\item $\cos\theta$ and $\sin\theta$ are orthogonal to radial polynomials;
\item $\|r-1\|^2_{\varepsilon,1}\sim \varepsilon^2$;
\item $\|1\|^2_{\varepsilon,1}\sim \varepsilon$.
\end{itemize}

After normalization, the Sobolev orthogonal basis converges to
\[
\{1,\;\cos\theta,\;\sin\theta\}
\quad\text{as }\varepsilon\to0,
\]
which explicitly illustrates the collapse of radial modes in the thin limit.

%-------------------------------------------------
\subsection{Galerkin matrices for KdV at low degree}

Consider the KdV equation
\[
\partial_t u + u\partial_\theta u + \partial_\theta^3 u=0
\]
and project it onto the space
\[
V_N=\mathrm{span}\{1,\cos\theta,\sin\theta\}.
\]

Writing
\[
u_N(t,\theta)=a_0(t)+a_1(t)\cos\theta+a_2(t)\sin\theta,
\]
the projected system reads
\[
\begin{aligned}
\dot a_0 &= 0,\\
\dot a_1 &= -\tfrac12 a_0 a_2,\\
\dot a_2 &= \tfrac12 a_0 a_1,
\end{aligned}
\]
which is a finite-dimensional Hamiltonian system preserving
$a_1^2+a_2^2$.

This system is independent of $\varepsilon$ and coincides with the Galerkin
system obtained directly on $\mathbb S^1$, illustrating
Corollary~\ref{cor:galerkin}.

%-------------------------------------------------
\subsection{Zakharov--Kuznetsov: transverse corrector}

For the ZK equation
\[
\partial_t u_\varepsilon
+
u_\varepsilon\partial_\theta u_\varepsilon
+
\partial_\theta^3 u_\varepsilon
+
\varepsilon^{-2}\partial_\theta\partial_\rho^2 u_\varepsilon
=0,
\]
we seek a corrector of the form
\[
u_\varepsilon=u_0(\theta,t)+\varepsilon^2 u_1(\rho,\theta,t).
\]

At order $\varepsilon^{-2}$ we obtain
\[
\partial_\theta\partial_\rho^2 u_1=0,
\]
hence
\[
\partial_\rho^2 u_1 = g(\theta,t),
\]
with $g$ determined by the solvability condition at the next order.

Imposing zero transverse mean,
\[
\int_{-1}^1 u_1(\rho,\theta,t)\,d\rho=0,
\]
yields the explicit solution
\[
u_1(\rho,\theta,t)
=
\frac12(\rho^2-\tfrac13)g(\theta,t),
\]
illustrating how anisotropic dispersion generates a non-trivial transverse
profile.

%-------------------------------------------------
\subsection{Effect of changing the Sobolev order}

Let $s\in[0,2]$ and consider the Sobolev inner products
\[
\langle p,q\rangle_{\varepsilon,s}
=
\int p\overline q
+
\sum_{a+b\le s}
\partial_r^a\partial_\theta^b p\,
\overline{\partial_r^a\partial_\theta^b q}.
\]

Restricting again to $\mathcal P_{\le1}$, the Gram matrices depend smoothly on
$s$.
In particular, the normalized angular modes $\cos\theta$ and $\sin\theta$ are
independent of $s$, while the radial mode $r-1$ is increasingly penalized as $s$
increases.

This explicitly illustrates the stability results of
Section~\ref{subsec:change-sobolev}.

%-------------------------------------------------
\subsection{Concluding remarks}

These explicit computations illustrate:
\begin{itemize}
\item the collapse of transverse degrees of freedom in the thin limit;
\item the block-diagonal Fourier structure induced by rotational invariance;
\item the stability of Galerkin schemes under changes of geometry, Sobolev order,
and thinness parameter.
\end{itemize}
They provide a concrete complement to the abstract results of the main text.
\vskip 12pt

\paragraph{\bf Fundings:} No funding is available.

\vskip 12pt

\paragraph{\bf Data availability statement.} No data is available for this work.

\vskip 12pt

\paragraph{\bf Conflict of interest statement.} The author declares no conflict of interest.

\vskip 12pt

\paragraph{\bf Acknowledgements:} J.-P.M thanks the France 2030 framework programme Centre Henri Lebesgue ANR-11-LABX-0020-01 
for creating an attractive mathematical environment.

\vskip 12pt

\paragraph{\bf Author's Note on AI Assistance.}
Portions of the text were developed with the assistance of a generative language model (OpenAI ChatGPT). The AI was used to assist with drafting, editing, and standardizing the bibliography format. All mathematical content, structure, and theoretical constructions were provided, verified, and curated by the author. The author assumes full responsibility for the correctness, originality, and scholarly integrity of the final manuscript.

\end{document}